\documentclass[final]{siamltex}

%% The following packages can be found on http:\\www.ctan.org
\usepackage{graphics} % for pdf, bitmapped graphics files
\usepackage{epsfig} % for postscript graphics files
\usepackage{amsmath} % assumes amsmath package installed
\usepackage{amssymb}  % assumes amsmath package installed
\usepackage{upgreek}
\usepackage{graphicx,psfrag, color}
\usepackage{cite}
\usepackage{latexsym}
\usepackage{amssymb}
\usepackage{graphics}
\usepackage{pstricks}
\usepackage{dsfont}
\usepackage{mathrsfs}
\usepackage{subfigure}
\usepackage{bbm}
\usepackage{latexsym}
\usepackage{lettrine}%
\usepackage{dsfont}
\usepackage[T1]{fontenc}
\usepackage[english]{babel}
\usepackage[latin1]{inputenc}
\usepackage{graphicx}
\usepackage{pgf, tikz, pgflibraryshapes}
\usepackage{amsmath,amssymb}
\usepackage{colortbl}
\usepackage{multicol}
\usepackage{stackrel}
\usepackage{colortbl}
% The HyperReferences in PDF
  % Colors for links
%    \definecolor{gray1}{gray}{0.5}
%    \definecolor{gray2}{gray}{0.5}
%    \definecolor{gray3}{gray}{0.5}
    \definecolor{gray1}{rgb}{0,0,.8}
    \definecolor{gray2}{rgb}{0,1,0}
    \definecolor{gray3}{rgb}{0.8,0,.2}
\usepackage[pdftitle={},pdfauthor={Chiara
Ravazzi},pdfsubject={},pdfkeywords={},hyperfigures=true,colorlinks=true,citecolor=gray1,pagecolor=gray2,linkcolor=gray3,bookmarks=true,
bookmarksopen=true,bookmarksopenlevel=1,bookmarksnumbered=true,dvips=true,a4paper=true]{hyperref}

%
%
%{}
\newcommand{\e}{\mathrm{e}}
\newcommand{\E}{\mathbb{E}}
\renewcommand{\P}{\mathbb{P}}

\newcommand{\R}{\mathbb{R}}
\newcommand{\N}{\mathbb{N}}
\newcommand{\argmax}[1]{\underset{#1}{\mathrm{argmax\,}}}

\title{A distributed classification/estimation algorithm for sensor networks\thanks{A
preliminary version of some of the results has appeared in the proceedings of the 50st IEEE Conference on Decision and Control and European Control Conference, Orlando, Florida, 12-15 December 2011.}}
\author{Fabio Fagnani\thanks{DISMA (Dipartimento di Scienze Matematiche), Politecnico di Torino, Corso Duca degli Abruzzi, 24, I-10129 TO,
        (e-mail: {\tt\small fabio.fagnani@polito.it}}) \and Sophie M. Fosson\thanks{DET (Dipartimento di Elettronica e Telecomunicazioni),
Politecnico di Torino, Corso Duca degli Abruzzi, 24, I-10129 TO (e-mail: {\tt sophie.fosson@polito.it)} } \and Chiara Ravazzi\thanks{DET (Dipartimento di Elettronica e Telecomunicazioni),
Politecnico di Torino, Corso Duca degli Abruzzi, 24, I-10129 TO (e-mail: {\tt chiara.ravazzi@polito.it)} }}

\begin{document}
\maketitle

%ABSTRACT
\begin{abstract}
In this paper, we address the problem of simultaneous classification and estimation of hidden parameters in a sensor network with communications constraints.
In particular, we consider a network of noisy sensors which measure a common scalar unknown parameter. We assume that a fraction of the nodes represent faulty sensors, whose measurements are poorly reliable. The goal for each node is to simultaneously identify its class (faulty or non-faulty) and estimate the common parameter. 

We propose a novel cooperative iterative algorithm which copes with the communication constraints imposed by the network and shows remarkable performance. 
Our main result is a rigorous proof of the convergence of the algorithm and a characterization of the limit behavior. We also show that, in the limit when the number of sensors goes to infinity,  the common unknown parameter is estimated with arbitrary small error, while the classification error converges to that of the optimal centralized maximum likelihood estimator.
We also show numerical results that validate the theoretical analysis and support their possible generalization. We compare our strategy with the Expectation-Maximization algorithm and we discuss trade-offs in terms of robustness, speed of convergence and implementation simplicity.
\end{abstract}

 \begin{keywords}
 Classification, Consensus, Gaussian mixture models, Maximum-likelihood estimation, Sensor networks, Switching systems.
 \end{keywords}
% 
% 
% \begin{AMS}
% 62F12, 62F15, 62H30, 93E10, 93C30
% \end{AMS}

% INTRODUCTION
\section{Introduction} 
Sensor networks are one of the most important technologies introduced in our century. Promoted by the advances in wireless
communications and by the pervasive diffusion of smart sensors, wireless sensor networks are largely used nowadays for a
variety of purposes, e.g., environmental and habitat surveillance, health and security monitoring, localization, targeting, event
detection.

A sensor network basically consists in the deployment of a large numbers of small devices, called sensors, that have the ability to
perform measurements and simple computations, to store few amounts of data, and to communicate with other devices. In this paper, we
focus on \emph{ad hoc} networks, in which  communication is local: each sensor is connected only with a
restricted number of other sensors. This kind of cooperation allows to perform
elaborate operations in a self-organized way, with no centralized supervision or data fusion center, with a substantial energy and
economic saving on  processors and communication links. This allows to construct large sensor networks  at contained cost.

A problem that can be addressed through ad hoc sensor networking is the distributed estimation: given an unknown physical parameter
(e.g., the temperature in a room, the position of an object), one aims at estimating it using the sensing capabilities of a network. Each sensor  performs a (not exact) measurement and shares it with the sensors with which it can establish a communication; in turn, it receives information and consequently updates its own estimate. If the network is connected, by iterating the sharing procedure, the
information propagates and a consensus can be reached. Neither centralized coordinator nor data fusion center is present.  The
mathematical model of this problem must envisage the presence of noise in measurements, which are naturally corrupted by
inaccuracies, and possible constraints on the network  in terms of communication, energy or bandwidth limitations, and of  necessity of
quantization or data compression. 

\emph{Distributed estimation} in ad hoc sensor networks has been widely studied in the literature. For the problem of estimating an unknown common parameter, typical approach is to consider distributed versions of classical maximum likelihood (ML) or maximum-a-posteriori (MAP) estimators. Decentralization can be obtained, for instance, through consensus type protocols (see \cite{DBLP:journals/siamco/HuangM09}, \cite{olf05}, \cite{olf09}) adapted to the communication graph of the network, or by belief propagation methods  \cite{moa06} and \cite{saol06}.

%Let us consider the estimation of an unknown, static, scalar or vector parameter: if it is 
%deterministic, distributed versions of Maximum Likelihood (ML)  \cite{rib08I} or Linear Least Squares \cite{das09} can be used, while
%if probabilistic \emph{a priori} information is given, decentralized Maximum a Posteriori (MAP) \cite{rib08II} methods are suggested;
%for the case when estimation reduces to classification among a set of finite hypotheses, Belief Propagation strategies have been
%developed in \cite{saol06} to perform MAP. Furthermore, the case of estimating the state of a dynamical system has been tackled, e.g.,
%using distribted Kalman filtering \cite{olf05, olf07, olf09} combined with consensus strategies \cite{ccsz08}. 

A second important issue is \emph{sensors' classification}, which we define as follows \cite{Duda2001}. Let us imagine that sensors
can be divided into different classes according to peculiar properties, e.g., measurements' or processing capabilities, and that no
sensor knows to which class it belongs: by classification, we then intend the labeling procedure that each sensor undertakes to determine  its affiliation. This task is addressed to a variety of clustering purposes, for example, to rebalance the computation load in a
network where sensors can be distinguished according to their processing power. 
On most occasions, sensors' classification is faced through some distributed estimation, the underlying idea being the following: each
sensor performs its measurement of a parameter, then iteratively modifies it on the basis of information it receives; during this
iterative procedure the sensor learns something about itself which makes it able to estimate its own configuration.

In this paper, we consider the following model: each sensor $i$ performs a measurement $y_i=\theta^{\star}+\omega_i^{\star}\eta_i$, where
$\theta^{\star}\in\R$ is the unknown global parameter, $\omega_i^{\star}>0$ is the unknown status of the sensor, and $\eta_i$ is a Gaussian
random noise. The more $\omega_i^{\star}$ is large, the more the sensor $i$ is malfunctioning, that is, the quality if its measurement is
low. The $\omega_i^{\star}$ parameter is supposed to belong to a discrete set, in particular in this paper we consider the binary case.

The goal of each unit $i$ is to estimate the parameter $\theta^{\star}$ and the specific configuration $\omega^{\star}_i$. The presence of the common unknown parameter $\theta^{\star}$ imposes a coupling between the different nodes and makes the problem interesting. 

An additive version of the
aforementioned model has been studied in 
\cite{CFSZ11}, where measurement is given by $y_i=\theta^{\star}+\omega_i^{\star}+\eta_i$. Another related problem is the
so-called calibration problem \cite{dog06,calibration11}: sensor $i$ performs a noisy linear measurement  $y_i=A_i \theta+\eta_i$  where the unknown $\theta$ and 
$A_i$ are a vector and a matrix, respectively,  while $\eta_i$ is a noise; the goal consists in the estimation of $\theta$ and of
$A_i$, the latter being known as calibration problem. 

All these  are particular cases of the problem of the estimation of Gaussian mixtures'
parameters \cite{tit85, SIAMRev}. This perspective has been studied for sensor networks in \cite{now03}, \cite{now04}, \cite{gu08}, and \cite{das09} where distributed
versions of the Expectation-Maximization (EM) algorithm have been proposed.
A network is given where each node independently performs the E-step through local observations. In particular, in \cite{gu08} a consensus filter is used to propagate the local information. The tricky point of such techniques is the choice of the number of averaging iterations between two consecutive M-steps, which must be sufficient to reach consensus.

The aim of this paper is the development of a distributed, iterative procedure which copes with the communication constraints imposed by the network and computes an estimation ($\widehat\theta, \widehat \omega$) approximating the maximum likelihood optimal solution of the proposed problem. 
The core of our methodology is an  Input Driven Consensus Algorithm (IA for short), introduced in \cite{Fagnani_Fosson_Ravazzi_2011}, which takes care of the estimation of the parameter $\theta^*$. IA is coupled with a classification step where nodes update the estimation of their own type $\omega^*_i$ by a simple threshold estimator based on the current estimation of $\theta^*$. The fact of using a consensus protocol working on inputs instead, as more common, on initial conditions, is a key strategic fact: it serves the purpose of  using the innovation coming from the units who are modifying the estimation of their status, as time passes by. Our main theoretical contribution is a  complete analysis of the algorithm
 in terms of convergence and of behaviour with respect to the size of the network. With respect to other approaches like distributed EM for which convergence results are missing, this makes an important difference. We also present a number of numerical simulations showing the remarkable performance of the algorithm which, in many situations, outperform classical choices like EM.

The outline of the paper is the following. In Section \ref{sec1} we shortly present some graph nomenclature needed in the paper. Section \ref{problemstatement} is devoted to a formal description of the problem and to a discussion of the classical centralized maximum likelihood solution.  In Section \ref{ouralgorithm}, we present the details and the analysis of  our IA. Our main results are Theorems \ref{teo:convergenza} and \ref{concentration2}: Theorem \ref{teo:convergenza} ensures that, under suitable assumptions on the graph, the algorithm converges to a local maximum of the log-likelihood function; Theorem \ref{concentration2} is a concentration result establishing that when the number of nodes $N\to +\infty$, the estimate $\widehat \theta$ converges to the true value $\theta^*$ (a sort of asymptotic consistency). Finally, we also study the behavior of the discrete estimate $\widehat \omega$ by analyzing the performance index the relative classification
error over the network when $N\to +\infty$ (see Corollary \ref{concentration4}).
Section \ref{simulations} contains a set of numerical simulations carried on different graph architectures: complete, circulant, grids, and random geometric graphs. Comparisons are proposed with respect to the optimal centralized ML solution and also with respect to the EM solution. Finally, a long Appendix contains all the proofs.

\section{General notation and graph theoretical preliminaries}\label{sec1} Throughout this paper, we use the following notational convention. 
We denote vectors with small letters, and matrices with capital letters. Given a matrix $M$, $M^T$ denotes its transpose. Given a vector $v$, $||v||$ denotes its Euclidean norm. $\mathbf{1}_A$ is the indicator function of set $A$.
Given a finite set $\mathcal{V}$, $R^{\mathcal{V}}$ denotes the space of real vectors with components labelled  by elements of $\mathcal{V}$.
Given two vectors $x,z\in\R^{\mathcal V}$, $\mathrm{d_H}(x,z)=|\{i\in\mathcal V: x_i\neq z_i\}|$. 
We use the convention that a summation over an empty set of indices is equal to zero, while a product over an empty set gives one. 

A symmetric graph is a pair $\mathcal{G}=(\mathcal{V, E})$ where $\mathcal{V}$ is a set, called the set of vertices, and $\mathcal E\subseteq \mathcal{V\times V}$ is the set of edges with the property that $(i,i)\not\in \mathcal E$ for all $i\in\mathcal V$ and $(i,j)\in \mathcal E$ implies $(j,i)\in \mathcal E$. 
$\mathcal{G}$ is strongly connected if, for all $i,j\in \mathcal V$, there exist vertices $i_1,\dots i_s$ such that $(i,i_1), (i_1,i_2),\dots ,(i_s,j)\in \mathcal E$. 
To any symmetric matrix $P\in\R^{\mathcal V\times\mathcal V}$ with non-negative elements, we can associate a graph $\mathcal G_P=(\mathcal V, \mathcal{E}_P)$ by putting $(i,j)\in \mathcal{E}_P$ if and only if $P_{ij}>0$. $P$ is said to be adapted to a graph $\mathcal G$ if $\mathcal G_P\subseteq \mathcal G$. A matrix with non-negative elements $P$ is said to be stochastic if $\sum_{j\in \mathcal V}P_{ij}=1$ for every $i\in\mathcal V$. Equivalently, denoting by ${\mathbbm 1}$ the vector of all $1$ in $\R^{\mathcal V}$, $P$ is stochastic if $P{\mathbbm 1}={\mathbbm 1}$. $P$ is said to be primitive if there exists $n_0\in\N$ such that $P^{n_0}_{ij}>0$ for every $i,j\in\mathcal V$. A sufficient condition ensuring primitivity is that $\mathcal G_P$ is strongly connected and $P_{ii}>0$ for some $i\in \mathcal V$. 
%A very well known fact is that if $P$ is symmetric stochastic and primitive, then, $P^t\to N^{-1}{\mathbbm 1}{\mathbbm 1}^T$ for $t\to +\infty$ where $N=|\mathcal V|$. More precisely, the eigenvalues of such a $P$ can be ordered, counting possible multiplicities, as $1=\mu_1>\mu_2\geq\cdots\geq\mu_N>-1$. If we put $\rho_2:=\max\{|\mu_2|, |\mu_N\}$ we have that $||P^t- N^{-1}{\mathbbm 1}{\mathbbm 1}^T||\leq \rho_2^t$ for every $t$. 

%PROBLEM FORMULATION
\section{Bayesian modeling for estimation and classification}\label{problemstatement}
%\subsection{}\label{Model}

\subsection{The model}

In our model, we consider a network, represented by a symmetric graph $\mathcal{G} = (\mathcal{V}, \mathcal{E})$. $\mathcal{G}$ represents the system communication architecture. We denote the number of nodes by  $N=|\mathcal{V}| $. We assume that each node $i\in\mathcal{ V}$ measures the observable \begin{equation}\label{model}y_i=\theta^{\star} +\omega^{\star}_i\eta_i\end{equation} where 
$\theta^{\star}\in\mathbb{R}$ is an unknown parameter, $\eta_i$'s
Gaussian noises $\mathsf{N}(0,1)$, $\omega^{\star}_i$'s Bernoulli random variables taking values in $\{\alpha, \beta\}$ (with ${\mathbb P}(\omega^{\star}_i=\beta)=p$). We assume all the random variables $\eta_i$'s and $\omega^{\star}_i$'s to be mutually independent. Notice that each $y_i\in\mathbb{R}$ is a Gaussian mixture distributed according to the probability density function 
\begin{gather}\label{gaumix}
f(y_i)=(1-p) f(y_i|\theta^{\star},\alpha)+p f(y_i|\theta^{\star},\beta)\\
f(y_i|\theta^{\star},x)=\frac{1}{x\sqrt{2\pi }}\e^{-\frac{(y_i-\theta^{\star})^2}{2 x^2}}\quad x\in\{\alpha,\beta\}.
\end{gather}
The binary model of $\omega^{\star}$ is motivated by different scenarios: as an example, if $0<\alpha << \beta$, the nodes of type $\beta$ may represent a subset of  faulty sensors, whose measurements are poorly reliable; the aim may be the detection of faulty sensors in order to switch them off or neglect their measurements, or for other clustering purposes. It is also realistic to assume that some a-priori information about the quantity of faulty sensors is extracted, e.g., from experimental data on the network, and it is conceivable to represent such information as an a-priori distribution. This is why we assume a Bernoulli distribution on each $\omega^{\star}_i$; on the other hand, we suppose that no a-priori information is available on the unknown parameter $\theta^{\star}$. However, the addition of an a priori probability distribution on $\theta^*$ does not significantly alter our analysis and our results.

%We assume an a-priori distribution on the state of the sensors $\omega^{\star}_i$'s: they are assumed to be independent and we let $p$ to be the probability that each $\omega^{\star}_i$ is equal $\beta$.

%In this paper we propose an iterative cooperative algorithm which copes with the communication constraints imposed by $\mathcal G$ and computes an estimation $\widehat\theta, %\widehat \omega$ approximating the joint ML/MAP solution of the proposed estimation problem. Our main results are Theorems \ref{teo:convergenza} and \ref{concentration2}. Theorem \ref{teo:convergenza} ensures that, under suitable assumptions on the graph, the algorithm converge to a local maximum of the log-likelihood function. Theorem \ref{concentration2} is a concentration result establishing that when the number of nodes $N\to +\infty$, the estimate $\widehat \theta$ converges to the true value $\theta^*$ (a sort of asymptotic consistency). Finally, we also study the behavior of the discrete estimate $\widehat \omega$ by analyzing the performance index the relative classification
%error over the network when $N\to +\infty$ (see Corollary \ref{concentration4}).
%A set of numerical simulations completes the paper.

%\section{Joint ML/MAP estimation}
%From classical estimation theory, it is well known that ML is the right approach when the parameters to be inferred are deterministic, while MAP is suitable
%to estimate random parameters. Here we face a hybrid model, in which both deterministic ($\theta^{\star}$) and random ($\omega^{\star}$)
%variables have to be evaluated.
\subsection{The maximum likelihood solution}
The goal is to estimate the parameter $\theta^{\star}$ and the specific configuration $\omega^{\star}_i$ of each unit. Disregarding the network constraints, a natural solution to our problem would be to consider  a joint ML in $\theta^{\star}$ and MAP in the $\omega^{\star}_i$'s (see \cite{yer00, bb07}). 
Let $f(y,\omega|\theta)$ be the joint distribution of  $y$ and $\omega$ (density in $y$ and probability in $\omega$) given the parameter $\theta$, and consider the rescaled log-likelihood function 
\begin{align}\begin{split}\label{likelihood} L_N(\theta,\omega)&:=\frac{1}{N}\log f(y,\omega|\theta).
\end{split}
\end{align}
The hybrid ML/MAP solution, which for simplicity for now on we will refer to as the ML solution, prescribes to choose $\theta$ and
$\omega$ which maximize $L_N(\theta,\omega)$ 
\begin{equation}\label{optproblem}
(\widehat{\theta}^{\mathrm{ML}}, \widehat{\omega}^{\mathrm{ML}}):=\argmax{\theta \in \mathbb{R},~ \omega\in\{\alpha,\beta\}^{\mathcal{V}}}L_N(\theta,\omega).
\end{equation}
Standard calculations lead us to
\begin{equation}\label{LN} L_N(\theta,\omega)=-\frac{1}{N}\sum_{j\in\mathcal{V}}\left(\frac{(y_j-\theta)^2}{2\beta^2}+ {\mathbf 1}_{\{\omega_j=\alpha\}}\left(\frac{(y_j-\theta)^2}{2}\left(\frac{1}{\alpha^2}-\frac{1}{\beta^2}\right)+\log\frac{1-p}{p}\frac{\beta}{\alpha}\right)\right)+c\end{equation}
where $c$ is a constant.  It can be noted that partial maximizations of $L_N(\theta,\omega)$ with respect to just one of the two variables have simple representation. 
Let
\begin{equation}\widehat{\theta}(\omega):=\argmax{\theta}{L}_N(\theta,\omega)\qquad \widehat{\omega}(\theta):=\argmax{\omega}L_N(\theta, \omega).
\end{equation}
Then
\begin{equation} \label{partial max}\widehat{\theta}(\omega) =\frac{\sum_jy_j/\omega_j^{2}}{\sum_j1/\omega_j^{2}}\qquad 
 \widehat{\omega}(\theta)_i=
\begin{cases}
\alpha&\text{if }|y_i-{{\theta}}|<\delta\\
\beta &\text{otherwise}
\end{cases}
\end{equation} where $$\delta=\sqrt{2\frac{\ln\left(\frac{1-p}{p}\frac{\beta}{\alpha}\right)}{\frac{1}{\alpha^2}-\frac{1}{\beta^2}}}.$$
The ML solution can then be obtained, for instance, by considering
\begin{equation}\label{ML sol}\widehat{\theta}^{\mathrm{ML}}=\argmax{\theta}L(\theta, \widehat{\omega}(\theta))\,,\quad \widehat{\omega}^{\mathrm{ML}}=\widehat{\omega}(\widehat{\theta}^{\mathrm{ML}}).\end{equation}
It should be noted how the computation of the $(\widehat{\omega}^{\mathrm{ML}})_i$'s becomes totally decentralized once $\widehat{\theta}^{\mathrm{ML}}$ has been computed. For the computation of $\widehat{\theta}^{\mathrm{ML}}$ instead one needs to gather information from all units to compute $L_N(\theta, \widehat \omega(\theta))$ and it is not at all evident how this can be done in a decentralized way. Moreover, further difficulties are caused by the fact that $L_N(\theta, \widehat \omega(\theta))$ may contain many local maxima, as shown in Figure \ref{L2}.

It should be noted that $L_N(\theta, \widehat{\omega}(\theta))$ is differentiable except at a finite number of points, and between two successive non-differentiable points  the function is concave. Therefore, the local maxima of the function coincide with its critical points. On the other hand, the derivative, where it exists, is given by
\begin{equation}\label{derivative} \begin{split}                                   
\frac{d}{d\theta}L_N(\theta, \widehat{\omega}(\theta))&=\left(\frac{1}{\beta^2}-\frac{1}{\alpha^2}\right)\frac{1}{N}\sum_{i\in\mathcal V}(\theta-y_i){\mathbf 1}_{\{|y_i-\theta|<\delta\}}-\frac{1}{\beta^2}\left(\theta-\frac{1}{N}\sum_{i\in\mathcal V}y_i\right).
\end{split}\end{equation}
Stationary points can therefore be represented by the relation
\begin{equation}\label{stationary1}\theta
=\frac{\frac{1}{\beta^2}\sum_iy_i+\left(\frac{1}{\alpha^2}-\frac{1}{\beta^2}\right)\sum_iy_i{\mathbf
1}_{\{|y_i-\theta|<\delta\}}}{N\frac{1}{\beta^2}+\sum_i{\mathbf
1}_{\{|y_i-\theta|<\delta\}}\left(\frac{1}{\alpha^2}-\frac{1}{\beta^2}\right)}.\end{equation}
A moment of thought shows us that (\ref{stationary1}) is equivalent to the relation $\theta =\widehat\theta(\widehat \omega(\theta))$. 

This representation will play a key role in the sequel of this paper.

\begin{figure}[h!]\includegraphics[width=0.5\columnwidth]{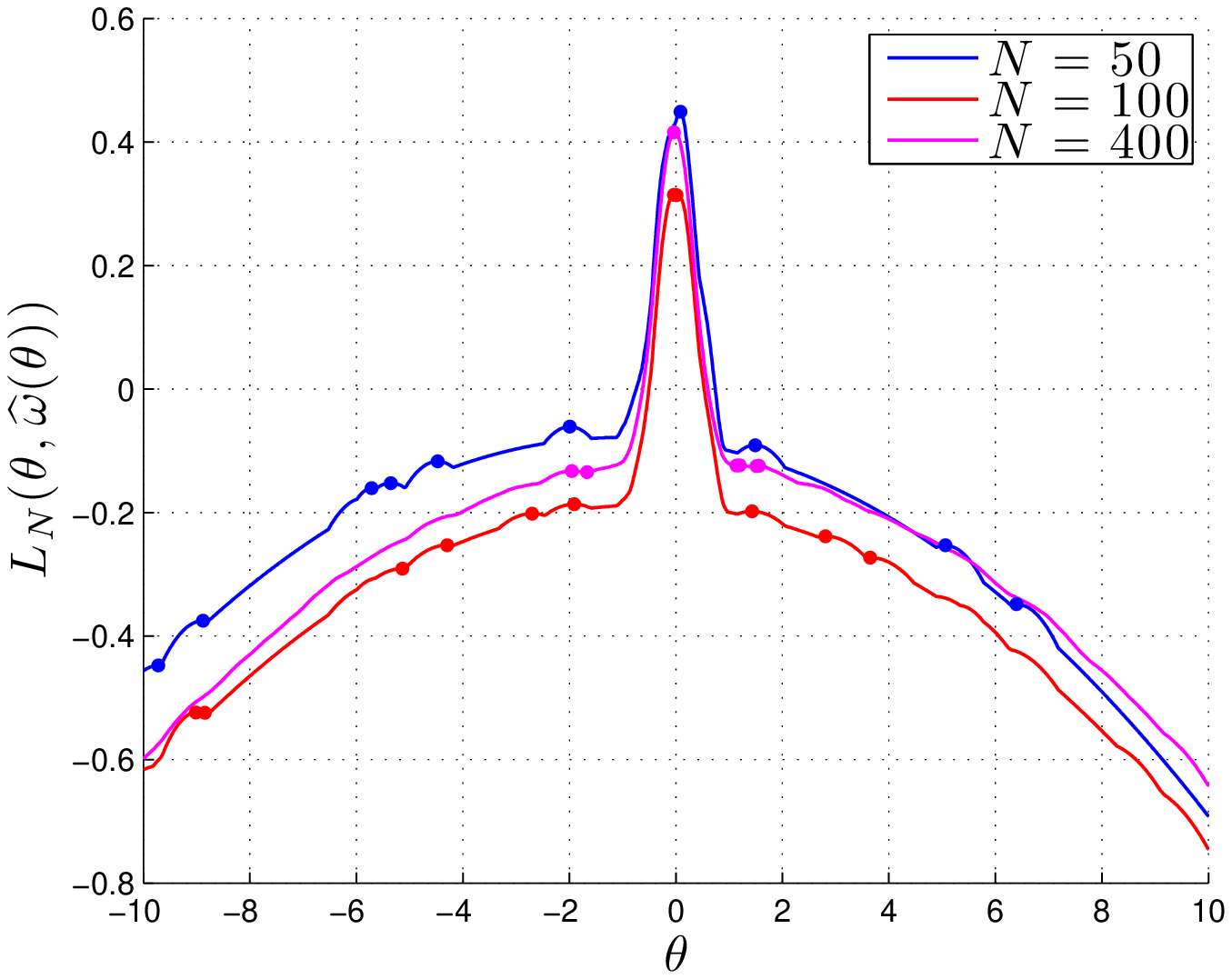}
\includegraphics[width=0.5\columnwidth]{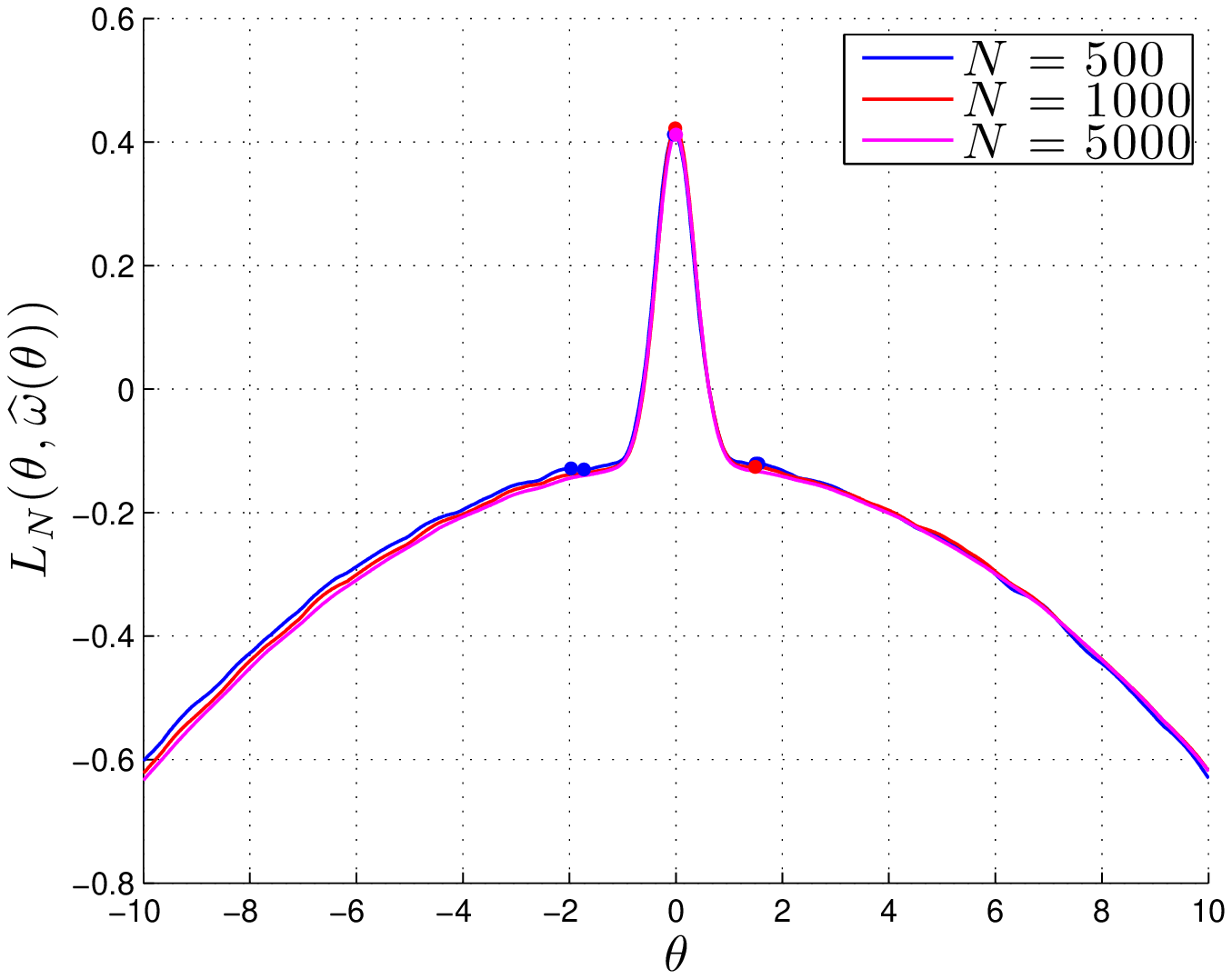}
\begin{center}
\caption{$\alpha=0.3, \beta=10, p=0.25$: Plot of function $L_N(\theta,\widehat{\omega}(\theta))$ as a function of $\theta$ and size $N\in\{50,100,400,500,1000,5000\}$.} \label{L2} \label{grafici_L}
\end{center}
\end{figure}

\subsection{Iterative centralized algorithms}\label{iterative}
The computational complexity of the optimization problem (\ref{optproblem}) is practically unfeasible in most situations. 
However, relations \eqref{partial max} suggest a simple way to construct an iterative approximation of the ML solution (which we will denote IML). 
The formal pattern is the following: fixed $\widehat{\omega}^{(0)}=\alpha \mathbbm{1}$, for $t=0,1,\dots$, we consider the dynamical system
\begin{gather*}
\widehat{\theta}^{(t+1)}=\frac{\sum_{j=1}^N y_j\left[\widehat{\omega}_j^{(t)}\right]^{-2}}{\sum_{j=1}^N\left[\widehat{\omega}_j^{(t)}\right]^{-2}}\\
 \widehat{\omega}(\theta)^{(t+1)}_i=
\begin{cases}
\alpha&\text{if }|y_i-{{\theta}}|<\delta\\
\beta &\text{otherwise}
\end{cases}~~~\text{ for any } i=1,\dots,N.
\end{gather*}
The algorithm stops whenever $|\widehat{\theta}^{(t+1)}-\widehat{\theta}^{(t)}|<\varepsilon$, for some fixed tolerance $\varepsilon>0$.

A more refined iterative solution is given by the so-called Expectation-Maximization (EM) algorithm \cite{dem77}.  The main idea is to  introduce a hidden (say, unknown and unobserved) random variable in the likelihood; then, at each step, one computes the mean of the likelihood function with respect to the hidden variable and finds its maximum. Such a method seeks to find the maximum likelihood solution, which in many cases cannot be formulated in a closed form.  EM is widely and successfully used in many frameworks and in principle it could also be applied to our problem. In our context, making the variable $\omega$ to play the part of the hidden variable, equations for EM become (see the tutorial \cite{bilmes1997-em} for their derivation)

Given $\widehat{\theta}^{(0)}\in\mathbb{R}$, for $t=0,1,\dots$,
\begin{enumerate}
\item E-step: for all node $i\in\mathcal V$,
  \begin{equation*}
  q_i^{(t)}=\mathbb{P}\left(\widehat{\omega}_i^{(t)}=\alpha|y,\widehat{\theta}^{(t)}\right)=\frac{(1-p)f\left(y|\widehat{\omega}_i^{(t)}=\alpha,\widehat{\theta}^{(t)}\right)}{(1-p)f\left(y|\widehat{\omega}_i^{(t)}=\alpha,\widehat{\theta}^{(t)}\right)+pf\left(y|\widehat{\omega}_i^{(t)}=\beta,\widehat{\theta}^{(t)}\right)}.  
  \end{equation*}
\item M-step: 
  \begin{equation*}
    \widehat{\theta}^{(t+1)}=\frac{\sum_{j\in\mathcal{V}}y_j\left(q_j^{(t)}\alpha^{-2}+(1-q_j^{(t)})\beta^{-2}\right) }{\sum_{j\in\mathcal{V}}q_j^{(t)}\alpha^{-2}+(1-q_j^{(t)})\beta^{-2}}.
  \end{equation*}
\end{enumerate}
The algorithm stops whenever $|\widehat{\theta}^{(t+1)}-\widehat{\theta}^{(t)}|<\varepsilon$, for some fixed tolerance $\varepsilon>0$. It is worth to notice that $q_i^{(t)}$ computed in the E-step actually is the expectation of the binary random variable $\mathbf{1}_{\{\widehat{\omega}_i^{(t)}=\alpha\}}$. On the other hand $\widehat{\theta}^{(t+1)}$ computed in the M-step is the maximum of such expectation.

An important feature of EM is that it is possible to prove the convergence of the sequence $\{\widehat{\theta}^{(t)}\}_{\in\mathbb{N}}$ to a local maximum of the expected value of the log-likelihood with respect to the unknown data $\omega$, a result which is instead not directly available for IML. 
Both algorithms however share the drawback of requiring centralization. Distributed versions of the EM have been proposed (see, e.g., \cite{now03}, \cite{gu08}) but convergence is not guaranteed for them. In Section \ref{simulations} we will compare both these algorithms against the distributed IA we are going to present in the next section. While it is true that EM always outperforms IML, algorithm IA outperforms both of them for small size algorithms, while shows comparable performance to EM for large networks.

%IA
\section{Input driven consensus algorithm} \label{ouralgorithm}

\subsection{Description of the algorithm}

In this section we propose a distributed iterative algorithm approximating the centralized ML estimator. The algorithm is suggested by
the expressions in \eqref{partial max} and consists of the iteration of two steps: an averaging step where all units aim at computing $\widehat\theta$ through a sort of Input Driven Consensus Algorithm (IA) followed by an update of the classification estimation performed autonomously by all units.

%We propose an Input Driven Consensus Algorithm (IA) to make the ML-iterative algorithm distributed. It seeks to classify $\omega^{\star}$ and simultaneously estimate the average %quantities $$\frac{1}{N}\sum_{i\in\mathcal{V}}{y_i}/{{(\omega^{\star}_i)}^{-2}}\quad\text{ and }\quad\frac{1}{N}\sum_{i\in\mathcal{V}}{1}/{{(\omega^{\star}_i)}^{-2}}$$ in an iterative %distributed way.
 %At each time $t$, each unit first upgrades its estimation $\widehat\theta$ by a sort of local version of (\ref{partial max1}) adapted to the communication graph constraint, and then upgrades the estimation $\widehat \omega$ according to  (\ref{partial max2}). 

Formally, IA is parametrized by a symmetric stochastic matrix $P$, adapted to the communication graph
$\mathcal{G}$ ($
P_{ij}>0
$ if and only if, $(i,j)\in\mathcal{E}$), and by a real sequence $\gamma^{(t)}\rightarrow 0$. 
Every node $i$ has three messages stored in its memory at time $t$, denoted with $\mu_i^{(t)},\nu_i^{(t)}$, and $\widehat \omega^{(t)}_i$. Given the initial conditions $\mu_i^{(0)}=0,\nu_i^{(0)}=0$ and the initial estimate $\widehat{\omega}_i^{(0)}=\alpha$, the dynamics consists of the following steps.
\begin{enumerate}
\item \textit{Average step}: \begin{subequations}\label{munu}\begin{align}\label{mu}
\mu_i^{(t+1)}&=(1-\gamma^{(t)}){\sum_{j}P_{ij}\mu_j^{(t)}}+\gamma^{(t)}{{y_i}{\left(\widehat{\omega}_i^{(t)}\right)^{-2}}}
\\ \label{nu}
\nu_i^{(t+1)}&=(1-\gamma^{(t)}){\sum_{j}P_{ij}\nu_j^{(t)}}+\gamma^{(t)}{{\left(\widehat{\omega}_i^{(t)}\right)^{-2}}}\\ \label{sys1}
\widehat{\theta}^{(t+1)}_i&={\mu_i^{(t+1)}}/{\nu_i^{(t+1)}}.
\end{align}
\end{subequations}
\item \textit{Classification step}:
\begin{equation}\label{sys}
\widehat{\omega}_i^{(t+1)}=\widehat \omega_i(\widehat\theta^{(t+1)})=\left\{\begin{array}{cl}
 \alpha & \text{if }|y_i-\widehat{\theta}_i^{(t+1)}|<\delta\\
 \beta&\text{otherwise.}
 \end{array}\right.
 \end{equation}
 \end{enumerate}
It should be noted that the algorithm provides a distributed protocol: each node only needs to be aware of its neighbours and no further information about the network topology is required.

\subsection{Convergence}
The following theorem ensures the convergence of IA. The proof is rather technical and therefore deferred to Appendix \ref{appA}.

\begin{theorem}\label{teo:convergenza}
Let 
\begin{enumerate}
\item[(a)] $\gamma^{(t)}\rightarrow 0$, $\gamma^{(t)}\geq 1/t$, and
$\gamma^{(t)}=\gamma^{(t+1)}+o(\gamma^{(t+1)})$ for $t\to
+\infty$;
\item[(b)] $P\in \mathbb{R}_+^{\mathcal{V}\times \mathcal{V}}$ be a
stochastic, symmetric, and primitive
matrix with positive eigenvalues. \end{enumerate}
Then, there exist $\widehat{\omega}^{{IA}}\in\{\alpha,\beta\}^{\mathcal{V}}$ and $\widehat{\theta}^{{IA}}\in{\mathbb R}$ such that
\begin{enumerate}
\item 
$$
\lim_{t\rightarrow+\infty} \widehat{\omega}^{(t)}\stackrel{\text{a.s.}}{=}\widehat{\omega}^{IA}\,,\qquad \lim_{t\rightarrow+\infty} \widehat{\theta}^{(t)}_i \stackrel{\text{a.s.}}{=}
\widehat{\theta}^{IA}$$ for all $i\in\mathcal V$;
\item they satisfy the relations 
$$\widehat{\theta}^{IA}=\widehat{\theta}(\widehat \omega^{IA})\,,\  \widehat{\omega}^{IA}=\widehat{\omega}(\widehat \theta^{IA}).$$
\end{enumerate}
\end{theorem}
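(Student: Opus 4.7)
My plan is to decompose the proof into a consensus-error analysis, a classification-stabilization argument, and an interleaving step that combines them.

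First, for the \emph{consensus error}: since $P$ is symmetric and stochastic, hence doubly stochastic, I would split $\mu^{(t)}=\bar\mu^{(t)}\mathbbm{1}+\tilde\mu^{(t)}$ with $\bar\mu^{(t)}=\frac1N\mathbbm{1}^T\mu^{(t)}$. The averaged component decouples into the scalar recursion
$$\bar\mu^{(t+1)}=(1-\gamma^{(t)})\bar\mu^{(t)}+\gamma^{(t)}\bar u^{(t)},\qquad \bar u^{(t)}=\tfrac1N\sum_i y_i/(\widehat\omega_i^{(t)})^2,$$
while the component orthogonal to $\mathbbm{1}$ satisfies $\|\tilde\mu^{(t+1)}\|\le(1-\gamma^{(t)})\lambda\|\tilde\mu^{(t)}\|+\gamma^{(t)}C$, where $\lambda<1$ is the second-largest eigenvalue of $P$ (strictly less than one by primitivity together with the positive-spectrum assumption) and $C$ is a uniform bound on the finitely many possible inputs. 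A standard induction using $\gamma^{(t)}\to 0$ should yield $\|\tilde\mu^{(t)}\|=O(\gamma^{(t)})\to 0$, and analogously for $\tilde\nu^{(t)}$. Since $\bar\nu^{(t)}$ is bounded below by $\beta^{-2}>0$, this will give $|\widehat\theta_i^{(t)}-\bar\mu^{(t)}/\bar\nu^{(t)}|\to 0$ uniformly in $i$.

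Next, for \emph{classification stabilization}, I would set $g(\omega)=\widehat\theta(\omega)$ and $F(\omega)=\widehat\omega(g(\omega))$, a deterministic map on the finite set $\{\alpha,\beta\}^{\mathcal V}$. Absolute continuity of the $y_i$'s guarantees almost surely that $|y_i-g(\omega)|\ne\delta$ for every $i$ and every $\omega$, producing a strictly positive margin $\rho$; in particular $\widehat\omega(\theta)=F(\omega)$ whenever $|\theta-g(\omega)|<\rho$. The inequalities
$$L_N(g(\omega),\omega)\le L_N(g(\omega),F(\omega))\le L_N(g(F(\omega)),F(\omega)),$$
coming directly from the definitions of $g$ and $F$, show the log-likelihood is non-decreasing along $F$-orbits, with strict increase unless $F(\omega)=\omega$ (using uniqueness of the $\theta$-argmax by strict concavity and uniqueness of the $\omega$-argmax by the margin $\rho$). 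Finiteness of the state space then forces every $F$-orbit to reach a fixed point of $F$ in a bounded number of steps.

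Finally, for the \emph{interleaving}, whenever $\widehat\omega^{(t)}$ equals some value $\omega$ for $\ell$ consecutive steps starting at $t_0$, the inputs are constant and the scalar recursion yields
$$\bar\mu^{(t_0+\ell)}-\bar u(\omega)=\Bigl[\prod_{s=t_0}^{t_0+\ell-1}(1-\gamma^{(s)})\Bigr]\bigl(\bar\mu^{(t_0)}-\bar u(\omega)\bigr),$$
and $\gamma^{(s)}\ge 1/s$ gives $\sum\gamma^{(s)}=\infty$, so sufficiently long runs push $\bar\mu^{(\cdot)}/\bar\nu^{(\cdot)}$ within $\rho/2$ of $g(\omega)$. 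Combined with the consensus bound, this forces all $\widehat\theta_i^{(t_0+\ell)}$ within $\rho$ of $g(\omega)$, so the next classification equals $F(\omega)$. Iterating along the $F$-orbit, finitely many switches suffice to reach a fixed point $\widehat\omega^{IA}$ of $F$; from then on $\widehat\omega^{(t)}$ is constant and the averaged dynamics drives $\bar\mu^{(t)}/\bar\nu^{(t)}\to\widehat\theta(\widehat\omega^{IA})=:\widehat\theta^{IA}$, giving both items together with the fixed-point relations. The main obstacle is ruling out infinite chattering in which $\widehat\omega^{(t)}$ keeps switching on finer and finer time-scales as $\gamma^{(t)}\to 0$; the hypothesis $\gamma^{(t)}=\gamma^{(t+1)}+o(\gamma^{(t+1)})$ is precisely calibrated to control tracking errors across consecutive switches and make the ``finitely many switches'' argument rigorous.
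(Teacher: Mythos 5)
Your consensus-error analysis is sound and matches the paper's first step (Lemma \ref{retta} and Corollary \ref{corol: retta}: $\Omega\mu^{(t)},\Omega\nu^{(t)}=O(\gamma^{(t)})$, hence $\widehat\theta^{(t)}$ is $O(\gamma^{(t)})$-close to the consensus line). The gap is in the classification-stabilization and interleaving steps, where you analyze the wrong discrete dynamics. The map $F(\omega)=\widehat\omega(\widehat\theta(\omega))$ and the likelihood-ascent chain $L_N(g(\omega),\omega)\le L_N(g(\omega),F(\omega))\le L_N(g(F(\omega)),F(\omega))$ describe the centralized IML iteration of Section \ref{iterative}, not the IA. In the IA, $\widehat\theta^{(t)}$ moves toward $\widehat\theta(\omega)$ in increments of size $O(\gamma^{(t)})$ \emph{and is re-classified at every step}; when $\widehat\theta(\omega)\mathbbm{1}\notin\Theta_{\omega}$ the trajectory exits $\Theta_{\omega}$ through the first threshold hyperplane $|y_i-\theta|=\delta$ it meets, i.e.\ it flips a \emph{single} coordinate and lands in an adjacent region $\Theta_{\omega'}$ with $\mathrm{d_H}(\omega,\omega')=1$, which is in general not $F(\omega)$. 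Consequently the premise of your interleaving step --- that the configuration stays equal to $\omega$ for a run long enough to drive $\bar\mu^{(t)}/\bar\nu^{(t)}$ within $\rho/2$ of $g(\omega)$ --- is never satisfiable precisely in the case that matters ($g(\omega)\notin\Theta_\omega$), so the ``next classification equals $F(\omega)$'' conclusion fails and the $F$-orbit argument does not govern the sequence of visited configurations. The paper instead orders the configurations met along the consensus line $\Lambda$ into a chain of future-followers and uses a convexity argument (Lemma \ref{lemma:y_i0}) to show the drift direction is consistent across adjacent regions.

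The second, and more serious, issue is that you name the chattering obstruction but do not resolve it, and this is the crux of the whole proof. Ruling out infinite oscillation between two adjacent regions $\Theta_{\omega}$ and $\Theta_{\omega^+}$ (fact F3, Proposition \ref{limbo1}) requires tracking the component $i_0$ of $\Omega(x^{(t+1)}-x^{(t)})$ under a switching input, and showing via the spectral decomposition of $P$ that the accumulated switching contribution $a_\delta^{(t)}$ has a definite sign (Lemma \ref{quellodellenote}); it is exactly here that the hypothesis that $P$ has \emph{positive} eigenvalues is used, since the sign argument rests on $0\le\lambda_j<1$ for $j\ge 2$. Your proposal invokes that hypothesis only for the spectral gap (for which primitivity alone suffices) and attributes the control of chattering to the regularity condition $\gamma^{(t)}=\gamma^{(t+1)}+o(\gamma^{(t+1)})$ alone, which is not sufficient: without the sign control coming from the spectrum of $P$, the argument that the trajectory cannot switch infinitely often across a single face is missing.
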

A number of remarks are in order. 
\begin{itemize}
\item The assumption on the eigenvalues of $P$ is essentially a technical one: in simulations it does not seem to have a crucial role, but we need it in our proof of convergence.
On the other hand, given any symmetric stochastic primitive $P$, we cam consider a 'lazy' version of it $P_{\tau}=(1-\tau)I+\tau P$ and notice that for $\tau\in (0,1)$ sufficiently small, indeed $P_{\tau}$ will satisfy the assumption on the eigenvalues.
\item The requirement $\gamma^{(t)}\geq 1/t$ is not new in decentralized algorithms (see for instance the Robbins-Monro algorithm, introduced in \cite{StochApprox}) and serves the need of maintaining 'active' the system input for sufficiently long time. Less classical is the assumption $\gamma^{(t)}\sim\gamma^{(t+1)}$ which is essentially a request of regularity in the decay of $\gamma^{(t)}$ to $0$. Possible choices of $\gamma^{(t)}$ satisfying the above conditions are $\gamma^{(t)}=t^{-\zeta}$ for $\zeta\in (0,1)$, or $\gamma^{(t)}=t^{-1}(\ln t)^\alpha$ for any $\alpha >0$.
\item The proof (see Appendix \ref{appA}) will also give an estimation on the speed of convergence: indeed it will be shown that $||\widehat{\theta}^{(t)}-\widehat{\theta}^{IA}||=O(\gamma^{(t)})$ for $t\rightarrow\infty$.
\item Relations in item 2. implies that $\widehat{\theta}^{IA}$ is a local maximum of the function $L_N(\theta,\widehat{\omega}(\theta))$ (see (\ref{stationary1})).
\end{itemize}

%Notice first that the requirement of positive eigenvalues for matrix $P$ does not essentially affect the generality of the problem. $P$ in fact must match the network's
%topology (i.e., the zero entries are fixed), but its non-zero entries can always be chosen so that the eigenvalues are positive
%(it is sufficient to assign a sufficiently large weight on the diagonal).
%The design parameters for matrix $P$ are discussed later.

\subsection{Limit behavior}
In this section we present results on the behavior of our algorithm for $N\to +\infty$. All quantities derived so far are indeed function of network size $N$. In order to emphasize the role of $N$, we will add an index $N$ when dealing with quantities like $\theta^{\star}$ (e.g. $\widehat{\theta}^{\text{ML}}_N$). Instead we will not add anything to expressions where there are vectors $\omega$ involved since their dimension is itself $N$.

Figure \ref{grafici_L} shows a sort of concentration of the local maxima of $L_N(\theta,\widehat{\omega}(\theta))$ to a global maximum for large $N$. Considering that IA converges to a local maximum, this observation would lead to the conclusion that, for large $N$, the IA resembles the optimal ML solution. This section provides some results which make rigorous these considerations.

Notice first that, applying the uniform law of large numbers \cite{ULLN} to the expression (\ref{LN}), we obtain that, for any compact $K\subseteq \R$, almost surely
\begin{equation}\label{uniform}\lim\limits_{N\to +\infty}\max_{\theta\in K}\left|L_N(\theta,\widehat{\omega}(\theta))- \int_{\mathbb{R}} \mathcal{J}(s, \theta)f(s)\mathrm{d}s\right| =0\end{equation}
where
\begin{equation}
   \mathcal{J}(s, \theta)=-\left(\frac{(s-\theta)^2}{2\beta^2}+ {\mathbf 1}_{\{\omega_j=\alpha\}}\left(\frac{(s-\theta)^2}{2}\left(\frac{1}{\alpha^2}-\frac{1}{\beta^2}\right)+\log\frac{1-p}{p}\frac{\beta}{\alpha}\right)\right)+c
\end{equation}
where $c$ is the same constant as in (\ref{LN}).
The limit function $\int_{\mathbb{R}} \mathcal{J}(s, \theta)f(s)\mathrm{d}s$ turns out to be differentiable for every value of $\theta$ and to have a unique stationary point for $\theta=\theta^*$ which turns out to be the global minimum. Unfortunately, this fact by itself does not guarantee that global and local minima will indeed converge to $\theta^*$. 
In our derivations the properties of the function $\int_{\mathbb{R}} \mathcal{J}(s, \theta)f(s)\mathrm{d}s$ will not play any direct role and therefore they will not be proven here. The main technical result which will be proven in Appendix \ref{appB} is the following:

\begin{theorem}\label{concentration2} Denote by $\mathcal{S}_N$ the set of local maxima of $L(\theta,\widehat{\omega}(\theta))$. Then, 
\begin{equation}\label{local minima}
\lim\limits_{N\to +\infty}\max\limits_{\xi\in{\mathcal S}_N}|\xi-\theta^*| =0\end{equation}
almost surely and in mean square sense.
\end{theorem}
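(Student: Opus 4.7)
My plan is to characterize elements of $\mathcal{S}_N$ via the stationarity condition obtained from (\ref{derivative}), confine $\mathcal{S}_N$ a priori to a common compact set containing $\theta^*$, and then transfer a uniform strong law of large numbers to identify the limit.

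A local maximum $\xi\in\mathcal{S}_N$ either satisfies $\Psi_N(\xi):=\frac{d}{d\theta}L_N(\theta,\widehat\omega(\theta))\big|_{\theta=\xi}=0$ on a smooth piece, or occurs at one of the finitely many break points $\theta\in\{y_i\pm\delta\}$; in both cases, setting the representation in (\ref{derivative}) to zero (in the one-sided sense at break points) and using the elementary bound $|(\xi-y_i)\mathbf{1}_{\{|y_i-\xi|<\delta\}}|\le\delta$, one obtains the deterministic inequality
\begin{equation*}
|\xi-\bar y_N|\;\le\;\beta^2\delta\left|\tfrac{1}{\beta^2}-\tfrac{1}{\alpha^2}\right|\;=:\;M,\qquad \bar y_N:=\tfrac{1}{N}\sum_{i\in\mathcal V}y_i.
\end{equation*}
Since $\bar y_N\to\theta^*$ almost surely by the strong law, $\mathcal{S}_N\subset K:=[\theta^*-M-1,\theta^*+M+1]$ for all $N$ sufficiently large, with probability one.

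Next I would apply a uniform law of large numbers on $K$ to $\Psi_N$. Its $\theta$-dependent ingredient has the form $N^{-1}\sum_i h(y_i,\theta)$ with $h(y,\theta)=(\theta-y)\mathbf{1}_{\{|y-\theta|<\delta\}}$, which is uniformly bounded by $\delta$ and varies in $\theta$ through a Glivenko--Cantelli class (a difference of indicators of half-lines), so that
\begin{equation*}
\sup_{\theta\in K}|\Psi_N(\theta)-\Psi_\infty(\theta)|\;\xrightarrow{\mathrm{a.s.}}\;0,\qquad \Psi_\infty(\theta)\;:=\;\frac{d}{d\theta}\int_{\mathbb R}\mathcal{J}(s,\theta)f(s)\,\mathrm{d}s.
\end{equation*}
By the property recalled before the theorem, $\Psi_\infty$ vanishes only at $\theta=\theta^*$. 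A standard subsequence argument then closes the almost sure part of (\ref{local minima}): otherwise, one could extract $\xi_{N_k}\in\mathcal{S}_{N_k}$ with $\xi_{N_k}\to\xi_\infty\in K\setminus\{\theta^*\}$, and the uniform convergence together with continuity of $\Psi_\infty$ would force $\Psi_\infty(\xi_\infty)=0$, a contradiction.

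Mean-square convergence follows from the deterministic comparison $\max_{\xi\in\mathcal{S}_N}|\xi-\theta^*|^2\le 2M^2+2|\bar y_N-\theta^*|^2$ together with uniform integrability of $\{|\bar y_N-\theta^*|^2\}_N$ (which has bounded fourth moment owing to the sub-Gaussian tails of the $y_i$), via Vitali's theorem. The main obstacle I foresee is the uniform strong law for the indicator-weighted sums $N^{-1}\sum_i(\xi-y_i)\mathbf{1}_{\{|y_i-\xi|<\delta\}}$ evaluated at a $\xi$ that itself depends on $N$; I would handle this through the Glivenko--Cantelli property of the class of half-line indicators, which is essentially the same technology underlying (\ref{uniform}). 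Everything else reduces to routine domination combined with the prescribed uniqueness of the stationary point of $\int\mathcal{J}(s,\cdot)f(s)\,\mathrm{d}s$.
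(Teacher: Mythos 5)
Your proof is essentially sound but follows a genuinely different route from the paper. You work with the derivative $\Psi_N$ of $L_N(\theta,\widehat\omega(\theta))$, localize $\mathcal S_N$ via the a priori bound $|\xi-\bar y_N|\le M$ (this step coincides with the bound (\ref{bound}) the paper uses only for the mean-square part), and then combine a uniform law of large numbers for $\Psi_N$ on a compact set with the uniqueness of the zero of $\Psi_\infty$ and a compactness/subsequence argument. The paper instead never touches $\Psi_\infty$: it parametrizes stationary points as fixed points $\theta=\widehat\theta(\widehat\omega(\theta))$, proves an exponential (Chernoff/Hoeffding) concentration of $\gamma_N(x)=\widehat\theta(\widehat\omega(x))$ around $\gamma_\infty(x)$ uniformly in $x$ (Lemma \ref{lemma1}), establishes a quantitative drift of $\gamma_\infty$ away from the diagonal (Lemma \ref{lemma1bis}), shows that for $|x-\theta^\star|>\epsilon$ the fixed-point event requires an empty interval free of $y_i$'s and is therefore exponentially unlikely (Lemma \ref{lemma2}), and finally handles the data-dependence of the candidate points by a union bound over a discretization $Z$ of polynomial cardinality, using the events $\mathcal B_1,\mathcal B_2,\mathcal B_3$. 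Your approach is shorter and more conceptual; the paper's buys explicit exponential/polynomial rates (hence Borel--Cantelli summability) and avoids any appeal to properties of the limit objective.

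Two points in your argument need more care before it is complete. First, you invoke ``the property recalled before the theorem'' that $\Psi_\infty$ vanishes only at $\theta^*$; but the paper explicitly states that this property of $\int\mathcal J(s,\theta)f(s)\,\mathrm ds$ ``will not play any direct role and therefore will not be proven here,'' so you cannot cite it --- you must prove it. It does follow from the computation in the paper's Lemma \ref{lemma1bis}, since $\Psi_\infty(\theta)=0$ is equivalent to $\theta=\gamma_\infty(\theta)$ and that lemma shows $\theta^*$ is the unique fixed point; include that argument. Second, your uniform LLN is for the class $\bigl\{y\mapsto(\theta-y)\mathbf 1_{\{|y-\theta|<\delta\}}:\theta\in K\bigr\}$, whose members are \emph{discontinuous in $\theta$} (jumps of size $\delta$ at $\theta=y\pm\delta$); this is precisely why the continuity-based ULLN behind (\ref{uniform}) does not apply verbatim (the summands of $L_N$ are continuous in $\theta$ because the jump vanishes by the definition of $\delta$, whereas those of $\Psi_N$ are not). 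A genuine Glivenko--Cantelli/bracketing argument is needed; it works because the class is uniformly bounded, the discontinuity set is a pair of points, and the $y_i$ have a bounded density, giving polynomial bracketing numbers --- but spell this out, as it is exactly the difficulty the paper circumvents with its discretization $Z$ and the events $\mathcal B_1,\mathcal B_2,\mathcal B_3$. With these two additions your proof is correct, including the mean-square part, which matches the paper's domination argument.
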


This has an immediate consequence,

\begin{corollary}\label{concentration3}
\begin{equation}\lim\limits_{N\to +\infty}\widehat{\theta}^{IA}_N=\lim\limits_{N\to +\infty}\widehat{\theta}^{\mathrm{ML}}_{N}=\theta^{\star}
\end{equation}
almost surely and in mean square sense.
\end{corollary}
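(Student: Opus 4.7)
The plan is to deduce the corollary directly from Theorem \ref{concentration2} by showing that both $\widehat{\theta}^{IA}_N$ and $\widehat{\theta}^{\mathrm{ML}}_N$ belong to the set $\mathcal{S}_N$ of local maxima of $\theta \mapsto L_N(\theta, \widehat{\omega}(\theta))$. First I would note that $\widehat{\theta}^{\mathrm{ML}}_N \in \mathcal{S}_N$ is immediate from its definition \eqref{ML sol}, since a global maximizer is in particular a local one. For $\widehat{\theta}^{IA}_N$, item 2 of Theorem \ref{teo:convergenza} gives the fixed-point identity $\widehat{\theta}^{IA}_N = \widehat{\theta}(\widehat{\omega}(\widehat{\theta}^{IA}_N))$, which by the discussion culminating in \eqref{stationary1} is precisely the stationarity condition for $L_N(\theta, \widehat{\omega}(\theta))$. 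The last remark following Theorem \ref{teo:convergenza} records that every such stationary point is a local maximum, so indeed $\widehat{\theta}^{IA}_N \in \mathcal{S}_N$.

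Once both estimators are placed inside $\mathcal{S}_N$, I would close the argument with the trivial sandwich
$$\max\{|\widehat{\theta}^{IA}_N - \theta^{\star}|,\; |\widehat{\theta}^{\mathrm{ML}}_N - \theta^{\star}|\} \;\leq\; \max_{\xi\in\mathcal{S}_N}|\xi - \theta^{\star}|.$$
Theorem \ref{concentration2} asserts that the right-hand side tends to $0$ both almost surely and in mean square. Passing these modes of convergence through the sandwich is immediate: the a.s.\ statement follows by monotonicity, while the mean square statement follows by squaring, dominating the left-hand side by $\max_{\xi\in\mathcal{S}_N}|\xi - \theta^{\star}|^2$, and taking expectations.

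I expect no substantive obstacle in the corollary itself; the analytical content is entirely absorbed in Theorem \ref{concentration2}, whose proof (uniform concentration of the whole set of local maxima of $L_N$ around $\theta^{\star}$) is the real work. The only thing to be careful about in writing up the corollary is the justification that $\widehat{\theta}^{IA}_N$ is actually a local maximum and not merely a stationary point of $L_N(\theta,\widehat{\omega}(\theta))$; this is delivered by the piecewise concavity observation made just before \eqref{derivative}, which guarantees that every critical point between two successive non-differentiability locations is a local maximum.
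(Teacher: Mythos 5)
Your argument is correct and is precisely the one the paper intends: the corollary is stated as an immediate consequence of Theorem \ref{concentration2}, resting on the facts that $\widehat{\theta}^{\mathrm{ML}}_N$ is a global (hence local) maximum and that $\widehat{\theta}^{IA}_N$ satisfies the fixed-point relation of Theorem \ref{teo:convergenza}, which by the piecewise-concavity remark places it in $\mathcal{S}_N$. The sandwich step and the passage of both modes of convergence through it are exactly as the paper leaves implicit.
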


Regarding the classification error, we have instead the following result:
\begin{proposition}\label{concentration4}
\begin{equation}\label{LB}\begin{split}
\lim_{N\rightarrow+\infty}\ \frac{1}{N}\E d_H(\widehat{\omega}^{IA}, \omega^{\star})&=\lim_{N\rightarrow+\infty}\ \frac{1}{N}\E d_H(\widehat{\omega}^{{\rm ML}}, \omega^{\star})\\&
=q(p,\alpha,\beta)\\
\end{split} \end{equation}
 where 
$$
q(p,\alpha,\beta)=(1-p)\mathrm{erfc}\left(\frac{\delta}{\alpha\sqrt{2}}\right)+p\left[1-\mathrm{erfc}\left(\frac{\delta}{\beta\sqrt{2}}\right)\right]
$$
and
$\mathrm{erfc}(x):=\frac{2}{\sqrt{\pi}}\int_{x}^{+\infty}\e^{-t^2}\mathrm{d}t$ is the complementary error function.
\end{proposition}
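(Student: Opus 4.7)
The plan is to reduce the proposition to a limit statement about an \emph{oracle} classifier $\widehat{\omega}(\theta^{\star})$, i.e. the component-wise threshold rule applied with the true parameter $\theta^{\star}$ in place of the estimated one. Since $\widehat{\omega}(\theta^{\star})_i$ depends only on $y_i$, and the pairs $(\omega^{\star}_i,y_i)$ are i.i.d., the contribution of index $i$ to $\mathbb{E}\,d_H(\widehat{\omega}(\theta^{\star}),\omega^{\star})/N$ equals
\[
\mathbb{P}(\omega^{\star}_i=\alpha,\,|y_i-\theta^{\star}|\ge \delta)+\mathbb{P}(\omega^{\star}_i=\beta,\,|y_i-\theta^{\star}|<\delta),
\]
and using $y_i-\theta^{\star}=\omega^{\star}_i\eta_i$ with $\eta_i\sim\mathsf{N}(0,1)$ one recovers exactly $q(p,\alpha,\beta)$. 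Hence $\mathbb{E}\,d_H(\widehat{\omega}(\theta^{\star}),\omega^{\star})/N\equiv q(p,\alpha,\beta)$ for every $N$, and it suffices to prove
\[
\frac{1}{N}\,\mathbb{E}\,d_H\bigl(\widehat{\omega}^{IA},\widehat{\omega}(\theta^{\star})\bigr)\xrightarrow[N\to\infty]{}0,
\]
together with the analogous statement for $\widehat{\omega}^{\mathrm{ML}}$; the conclusion then follows from the triangle inequality $|d_H(\widehat{\omega}^{IA},\omega^{\star})-d_H(\widehat{\omega}(\theta^{\star}),\omega^{\star})|\le d_H(\widehat{\omega}^{IA},\widehat{\omega}(\theta^{\star}))$.

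For the remaining bound I would exploit that $\widehat{\omega}^{IA}_i\neq \widehat{\omega}(\theta^{\star})_i$ iff $y_i$ lies in the symmetric difference $A_N$ of the intervals $[\theta^{\star}-\delta,\theta^{\star}+\delta]$ and $[\widehat{\theta}^{IA}_N-\delta,\widehat{\theta}^{IA}_N+\delta]$. Whenever $|\widehat{\theta}^{IA}_N-\theta^{\star}|<\varepsilon$, this set is contained in the fixed band
\[
B_\varepsilon:=[\theta^{\star}-\delta-\varepsilon,\theta^{\star}-\delta+\varepsilon]\cup[\theta^{\star}+\delta-\varepsilon,\theta^{\star}+\delta+\varepsilon],
\]
independent of the $y_j$'s. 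Therefore, for every $\varepsilon>0$,
\[
\frac{1}{N}\sum_{i}\mathbf{1}_{\{y_i\in A_N\}}\;\le\;\frac{1}{N}\sum_i\mathbf{1}_{\{y_i\in B_\varepsilon\}}+\mathbf{1}_{\{|\widehat{\theta}^{IA}_N-\theta^{\star}|\ge\varepsilon\}}.
\]
Taking expectation and using that the $y_i$'s are i.i.d. with common density $f$, the first term equals $\mathbb{P}(y_1\in B_\varepsilon)$ for every $N$, while the second term vanishes as $N\to\infty$ by Corollary~\ref{concentration3} (which provides convergence in probability of $\widehat{\theta}^{IA}_N$ to $\theta^{\star}$).

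Hence $\limsup_N \frac{1}{N}\mathbb{E}\,d_H(\widehat{\omega}^{IA},\widehat{\omega}(\theta^{\star}))\le\mathbb{P}(y_1\in B_\varepsilon)$, and letting $\varepsilon\downarrow 0$ the right-hand side tends to $0$ by absolute continuity of $f$ (the band $B_\varepsilon$ shrinks to the two-point set $\{\theta^{\star}\pm\delta\}$). The identical argument, with $\widehat{\theta}^{\mathrm{ML}}_N$ in place of $\widehat{\theta}^{IA}_N$, handles the ML statement. The only delicate point is that $A_N$ depends on all $y_j$ through $\widehat{\theta}^{IA}_N$, which prevents a direct ``measure of symmetric difference times density bound'' argument; this is precisely what the truncation via $B_\varepsilon$ circumvents, making absolute continuity of $f$ (rather than independence of $\widehat{\theta}^{IA}_N$ from $y_i$) the key ingredient.
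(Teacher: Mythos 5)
Your proof is correct, but it follows a genuinely different route from the paper's. The paper writes $\frac{1}{N}\E\, d_H(\widehat{\omega}^{IA},\omega^{\star})$ as a mixture of $\E f(\widehat{\theta}^{IA},\sigma)$, where $f(\theta,\sigma)$ is the misclassification probability of a type-$\sigma$ node under a fixed threshold centre $\theta$, then uses that $f(\cdot,\sigma)$ is $C^1$ (hence Lipschitz) to get $|\E f(\widehat{\theta}^{IA},\sigma)-f(\theta^{\star},\sigma)|\le C\,\E|\widehat{\theta}^{IA}-\theta^{\star}|\to 0$ via the mean-square (hence $L^1$) convergence of Corollary~\ref{concentration3}, and finally evaluates $f(\theta^{\star},\sigma)$ to obtain $q(p,\alpha,\beta)$. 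You instead compare $\widehat{\omega}^{IA}$ with the oracle classifier $\widehat{\omega}(\theta^{\star})$, whose normalized expected Hamming error is exactly $q(p,\alpha,\beta)$ for every $N$ by the i.i.d.\ structure, and control $\frac{1}{N}\E\, d_H(\widehat{\omega}^{IA},\widehat{\omega}(\theta^{\star}))$ by the band-truncation argument. The trade-off: the paper's argument is shorter, but its identification of $\P(\widehat{\omega}^{IA}_i\neq\omega^{\star}_i)$ with $\E f(\widehat{\theta}^{IA},\cdot)$ implicitly treats $\widehat{\theta}^{IA}$ as if it were independent of $y_i$, a point it does not address; your truncation via $B_\varepsilon$ handles exactly this dependence explicitly, and needs only convergence in probability of $\widehat{\theta}^{IA}_N$ (rather than $L^1$ convergence) together with absolute continuity of the measurement density, with no smoothness of the misclassification probability required. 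Both proofs ultimately rest on Corollary~\ref{concentration3}, and both cover the ML case by the identical substitution.
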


These results ensure that the IA performs, in the limit of large number of units $N$, as the centralized optimal ML estimator. Moreover, they also show, consistency in the estimation of the parameter $\theta^{\star}$. As expected, for $N\to +\infty$ the classification error does not go to $0$ since the increase of measurements is exactly matched by the same increase of variables to be estimated. Consistency however is obtained when 
$p$ goes to zero since we have that
$\lim_{p\rightarrow 0} q(p,\alpha,\beta)=0.$
Moreover, notice that the dependence of function $q $ on the parameters $\alpha$ and $\beta$ is exclusively through their ratio $\beta/\alpha$. In
particular, we have
$$
\lim_{\beta/\alpha\rightarrow+\infty} q(p,\alpha,\beta)=0\qquad\lim_{\beta/\alpha\rightarrow1} q(p,\alpha,\beta)=1.
$$

%SIMULATIONS
\section{Simulations}\label{simulations}

In this section, we propose some numerical simulations. We test our algorithm for different graph architectures and dimensions, and we compare it with the IML and
EM algorithms. Our goal is to give evidence of the theoretical results' validity and also to evaluate cases that are not
included in our analysis: the good numerical outcomes we obtain suggest that convergence should hold in broader frameworks. The numerical setting for our simulations is now presented.\\

\textbf{Model}: the sensors perform measurements according to the model \eqref{model} with $\theta^{\star}=0$, $\alpha=0.3$,
$\beta=10$; the prior probability $\mathbb{P}(\omega^{\star}_i=\beta)$ is equal to $p=0.25$.\\ 

\textbf{Communication architectures}: given a strongly connected symmetric graph $\mathcal G=({\mathcal V},\mathcal E)$, we use the so-called Metropolis random walk construction for $P$ (see \cite{Xiao06distributedaverage}) which amounts to the following: if $i\neq j$
$$P_{ij}=\left\{\begin{array}{ll}0 \quad &{\rm if}\, (i,j)\not\in \mathcal E\\ \left(\max\{\mathrm{deg}(i)+1, \mathrm{deg}(j)+1\}\right)^{-1} \quad &{\rm if}\, (i,j)\in \mathcal E\end{array}\right.$$
where $\mathrm{deg}(i)$ denotes the degree (the number of neighbors) of unit $i$ in the graph $\mathcal G$. $P$ constructed in this way is automatically irreducible and aperiodic.

We consider the following topologies: 
\begin{enumerate}
\item \textit{Complete graph}: $P_{ij}=\frac{1}{N}$ for every $i,j=1,\dots, N$; it actually corresponds to the centralized case.
% $$P=
% \left[\begin{array}{cccc}
% 1/N&1/N&\cdots       & 1/N\\
%  \vdots& \vdots& \vdots&\vdots        \\
% 1/N&1/N      &\cdots& 1/N
% \end{array}\right]$$
% 
\item \textit{Ring:}\label{case: circulant} $N$ agents are disposed on a circle, and each agent
communicates with its first neighbor on each side (left and right).  The corresponding circulant symmetric matrix $P$ is given by $P_{ij}=\frac{1}{3}$ for every $i=2,\dots,N-1$ and $j\in\{i-1,i,i+1\}$; $P_{11}=P_{12}=P_{1N}=\frac{1}{3}$; $P_{N1}=P_{NN-1}=P_{NN}=\frac{1}{3}$; $P_{ij}=0$ elesewhere.
% $$P=
% \left[\begin{array}{cccccc}
% 1/3&1/3&0&\cdots&\cdots& 1/3\\
%  1/3&1/3&1/3&0&\cdots&0\\
%  0&1/3&1/3&1/3&\cdots&0\\
%  \vdots& \vdots& \vdots& \vdots&\vdots& \vdots\\
% 1/3&0&  \cdots&0&1/3& 1/3
% \end{array}\right]$$

\item \textit{Torus-grid graph}: sensors are deployed on a two dimensional grid and are
each connected with their four neighbors; the last node of each row of the grid is connected with the first node of the same row, and
analogously on columns, so that a torus is obtained. The so-obtained graph is regular.

\item \textit{Random Geometric Graph} with radius $r=0.3$: sensors are deployed in the square $[0,1]\times[0,1]$, their positions
being randomly generated with a uniform distribution; links are switched on between two sensors whenever the distance is less than $r$. 
 We only envisage connected realizations. 
%\item \textit{Erdos-Renyi Graph} with probability $0.3$: a Bernoulli distribution determins the links, more precisely, for each couple
%of sensors a link between them is switched on with probability $0.3$, independently from other links.
\end{enumerate} 

From  Theorem \ref{teo:convergenza}, $P$ is required to possess positive eigenvalues: our intuition is that these
hypotheses, that are useful to prove the convergence of the IA, are not really necessary. We test this conjecture on the ring
graph, whose eigenvalues are known  \cite{Davis} to be $
\lambda_m=\frac{1}{3}\left(1+2\cos\left(\frac{2\pi m}{N}\right)\right),\quad m=0,\ldots,N-1$
and which are not necessarily positive.\\
%Our goal is to leverage on simulations to numerically show that our anaytical results could be generalized.\\

\textbf{Algorithms}: We implement and compare the following algorithms: IA with  $\gamma^{(t)}= 1/t^{\zeta}$ for different choices of $\zeta\in\{0.5,0.7,0.9\}$, and the two centralized iterative algorithms IML, and EM described in Section \ref{iterative}.\\

\begin{figure}[h]
\begin{center}
\subfigure[Complete graph.]
{\label{fig1} %% label for first subfigure
\begin{minipage}[c]{0.47\columnwidth}
\centering
\includegraphics[width=1\columnwidth]{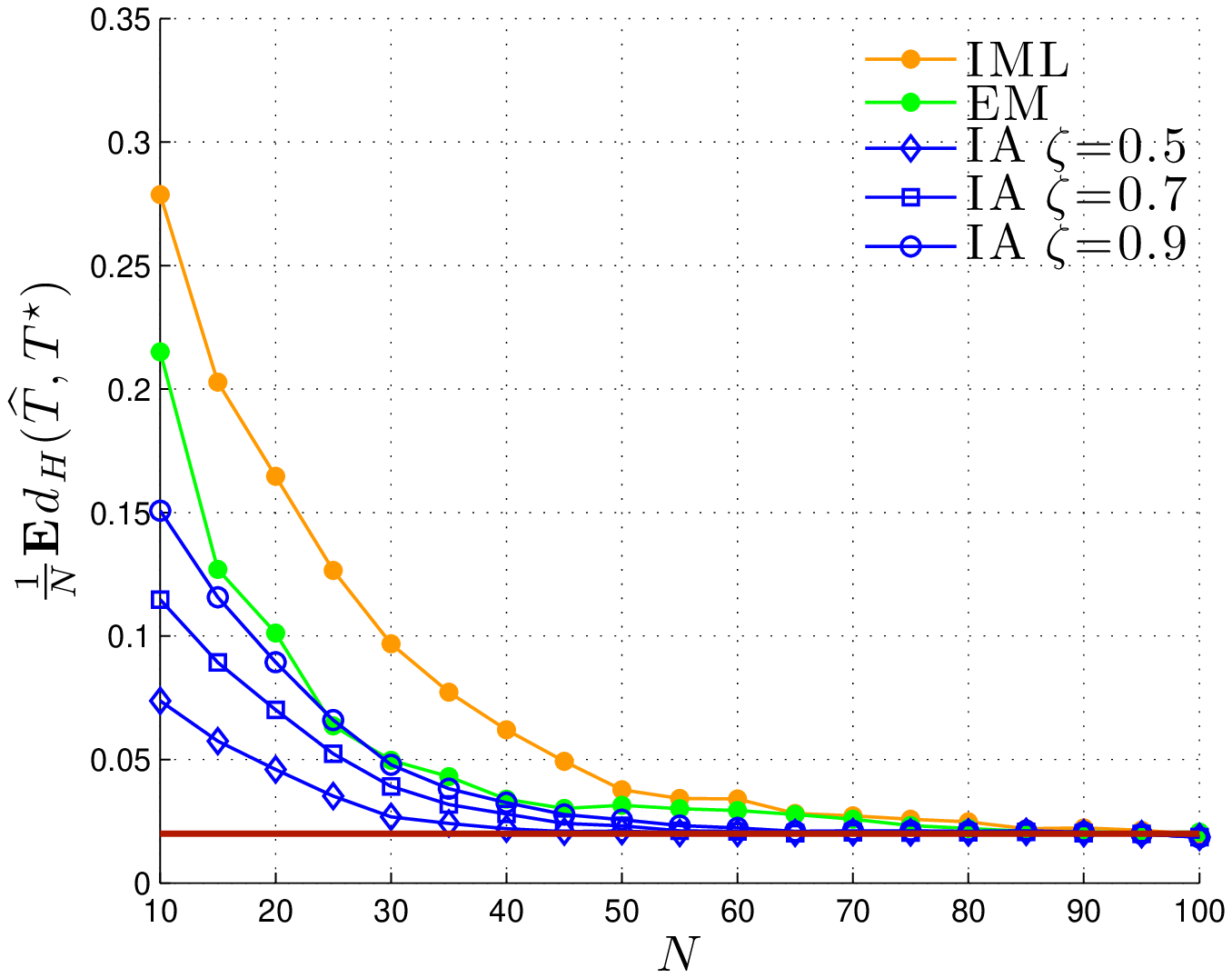}
\end{minipage}}
\subfigure[ Ring.]
{\label{fig2} %% label for first subfigure
\begin{minipage}[c]{0.47\columnwidth}
\centering
\includegraphics[width=1\columnwidth]{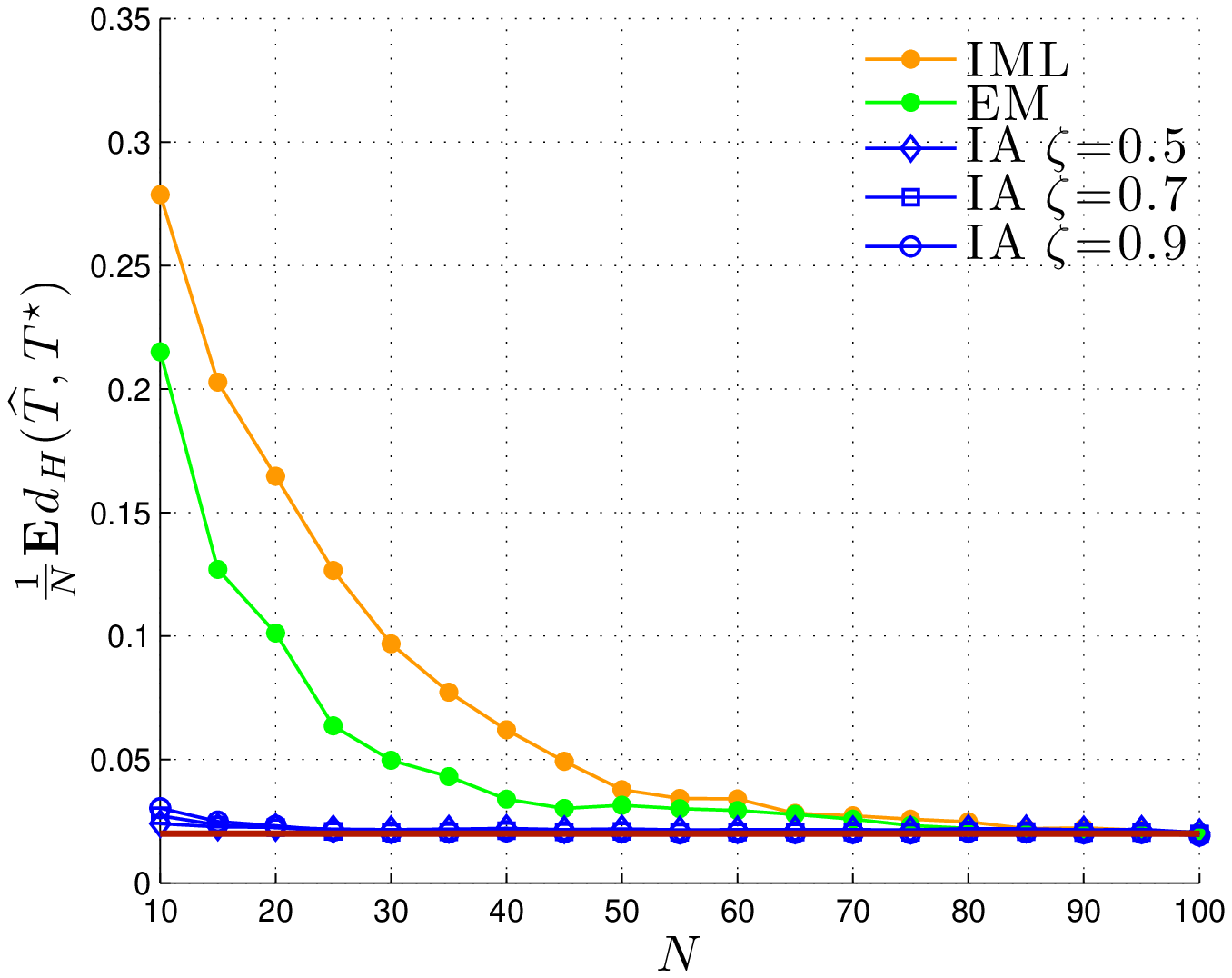}
\end{minipage}}
\end{center}
\begin{center}
\subfigure[Grid.]
{\label{fig3} %% label for first subfigure
\begin{minipage}[c]{0.47\columnwidth}
\centering
\includegraphics[width=1\columnwidth]{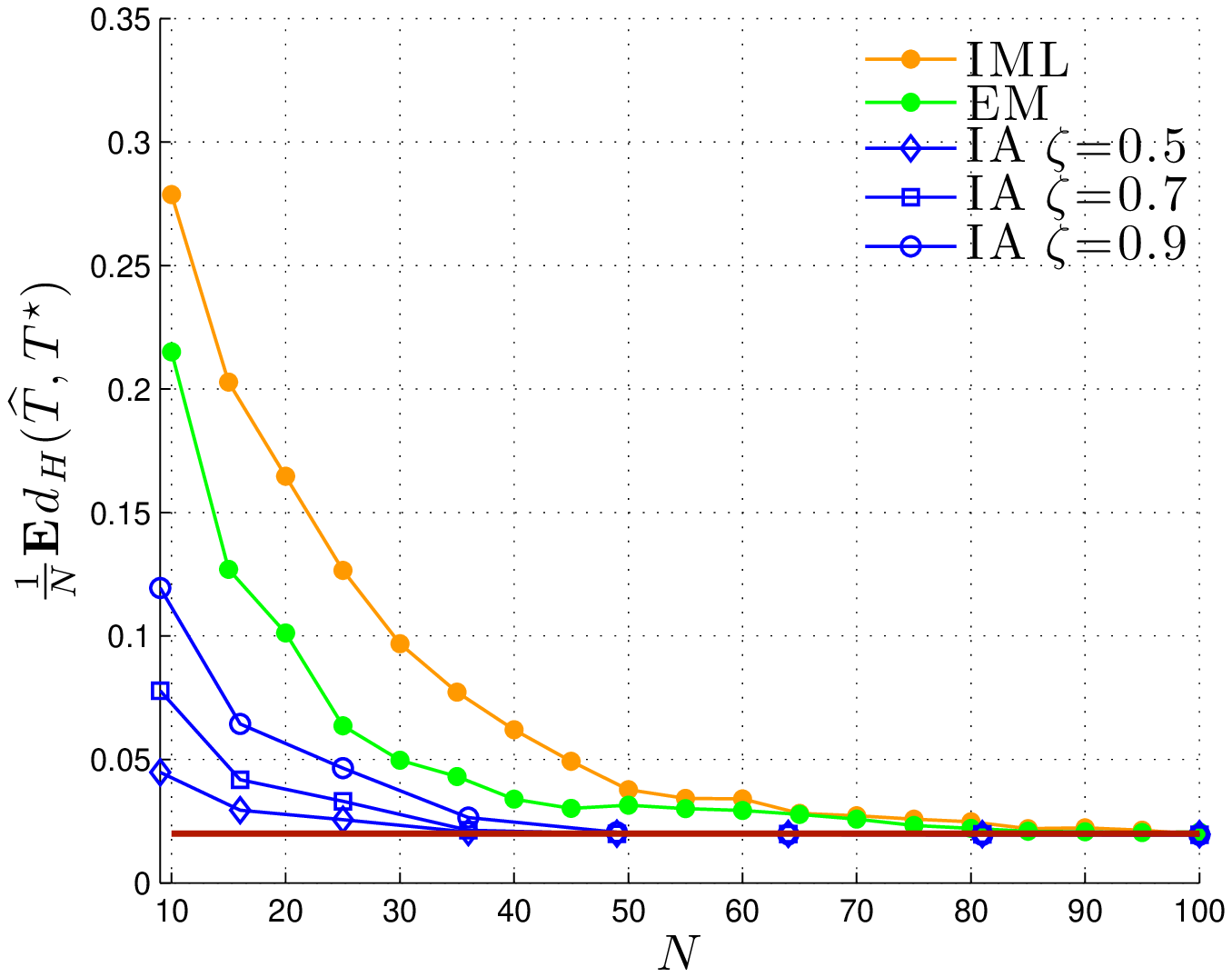}
\end{minipage}}
\subfigure[Random geometric graph.]
{\label{fig4} %% label for first subfigure
\begin{minipage}[c]{0.47\columnwidth}
\centering
\includegraphics[width=1\columnwidth]{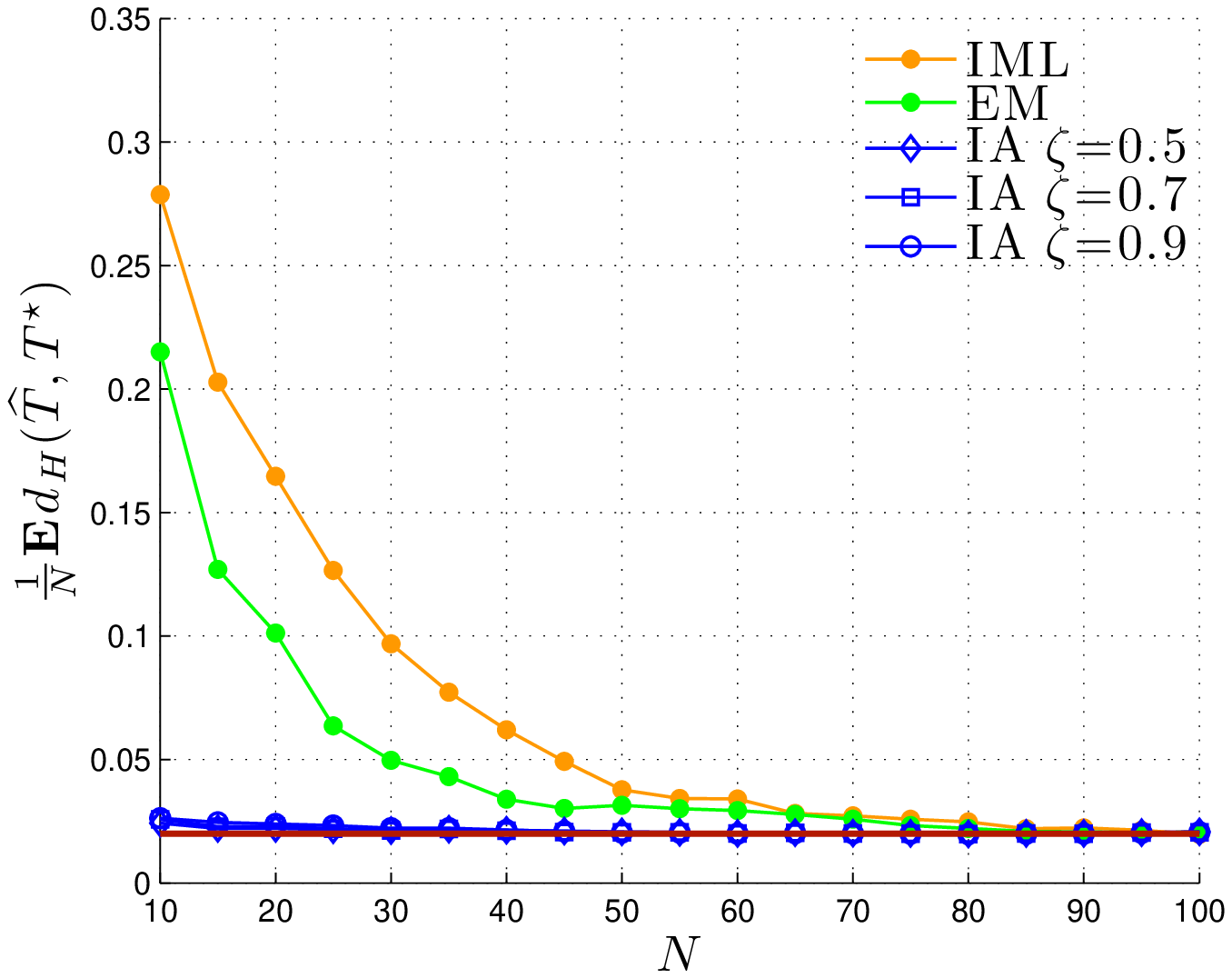}
\end{minipage}}
\caption{Relative classification error}\label{fig:fig1}
\end{center}
\end{figure}

\textbf{Outcomes}: we show the performance of the aforementioned algorithms in terms of classification error and of mean square error on
the global parameter, in function of the number of sensors $N$. All the outcomes are obtained averaging over $400$ Monte Carlo
runs.

We observe that the classification error (Figure \ref{fig:fig1}) converges for $N\to\infty$ for all the considered algorithms. On the other hand, when $N$ is small, IA performs better than IML and EM, no matter which graph topology has been chosen: this suggests that decentralization is then not a drawback for IA.
Moreover, for smaller $\gamma^{(t)}$ (i.e., slowing down the
procedure), we obtain better IA  performance in terms of classification. Nevertheless, this is not universally true: in other simulations, in fact, we have noticed that if 
 $\gamma^{(t)}$ is \textit{too} small, the performance are worse. This is not surprising, since $\gamma^{(t)}$ determines the weights assigned to the consensus and input driven parts, whose contributions must be somehow balanced in order to obtain the best solution. An important point that we will study in future is the optimization of $\gamma^{(t)}$, whose choice may in turn depend on the graph topology and on the weights assigned in the matrix $P$.

Analogous considerations can be done for the mean square error on $\theta$: when $N$ increases, the mean square error decays to zero.

We remark that convergence is numerically shown also for the ring topology, which is not envisaged by our theoretical analysis. Hence, our guess is  that convergence should be proved even under weaker hypotheses on matrix $P$.

For the interested reader, a graphical user interface of our algorithm is available and downloadable on http://calvino.polito.it/$\sim$fosson/software.html.

%CONCLUSIONS

\section{Concluding remarks}\label{conclusion}
In this paper, we have presented a fully distributed algorithm
for the simultaneous estimation and classification in a sensor network, given from noisy measurements.
The algorithm only requires the local cooperation among
units in the network. Numerical simulations show remarkable
performance. The main contribution includes the convergence of the algorithm to a local maximum of the centralized ML estimator. The performance of the algorithm has been also studied when the network size is large, proving that the solution of the proposed algorithm concentrates around the classical ML solution.

Different variants are possible, for example the generalization to multiple classes with unknown prior probabilities should be inferred. The choice of sequence $\{\gamma^{(t)}\}_{t\in\mathbb{N}}$ is critical, since it influences both convergence time and final accuracy; the determination of a protocol for the adaptive search of sequence $\{\gamma^{(t)}\}_{t\in\mathbb{N}}$ is left for a future work.

\section{Acknowledgment}
The authors wish to thank Sandro Zampieri for bringing the problem to our attention and Luca Schenato for useful discussions. F. Fagnani and C. Ravazzi further acknowledge the financial support provided by MIUR under the PRIN project no. 20087W5P2K.

\bibliographystyle{ieeetr}
\bibliography{clus}

\appendix
\section{Proof of Theorem \ref{teo:convergenza}}\label{appA}
Consider the discrete-time dynamical system defined by the update equations \eqref{munu} and \eqref{sys}: 
the proof of its convergence is obtained through intermediate steps.

\begin{enumerate}
\item First, we show that, for sufficiently large $t $, vectors
$\mu^{(t)},\nu^{(t)}$, and $\widehat{\theta}^{(t)}$ are close to
consensus vectors and we prove their convergence, assuming
$\widehat{\omega}^{(t)}$ has already stabilized. \item Second, we prove
the stabilization of $\widehat{\omega}^{(t)}$ in finite time, by
modelling the system in \eqref{munu} and \eqref{sys} as a
switching dynamical system. \item Finally, combining these facts
together we conclude the proof.
\end{enumerate}

\subsection{Towards consensus}
We start with some notation: let
$\Omega:=I-N^{-1}\mathbbm{1}\mathbbm{1}^{\mathsf{T}}$; given
$x\in\mathbb{R}^{\mathcal V}$, let
$\overline{x}:=N^{-1}\mathbbm{1}^{\mathsf{T}}x$ so that $ x=
\overline{x}\mathbbm{1}+\Omega x$.

Given a bounded sequence $u^{(t)}\in\mathbb{R}^N$, consider the
dynamics
\begin{equation}\label{sistemadinamico}
  \begin{split}
 x^{(t+1)}=\left(1-\gamma^{(t)}\right) P x^{(t)}+\gamma^{(t)} u^{(t)}~~~t\in\mathbb{N}
  \end{split}
\end{equation}
where $x^{(0)}$ is any fixed vector, and where, we recall the
standing assumptions,
\begin{enumerate}
\item[(a)] $\gamma^{(t)}\in (0,1)$, $\gamma^{(t)}\geq 1/t$,
$\gamma^{(t)}\searrow 0$ and
$\gamma^{(t)}=\gamma^{(t+1)}+o(\gamma^{(t+1)})$ for $t\to
+\infty$; \item[(b)] $P\in \mathbb{R}_+^{\mathcal V\times \mathcal V}$ is a
stochastic, symmetric, primitive matrix with positive eigenvalues.
\end{enumerate}

A useful fact consequence of the assumptions on $\gamma^{(t)}$, is the following:
\begin{equation}\label{gammadis1}
\prod\limits_{s=t_0}^{t-1}(1-\gamma^{(s)})\leq e^{-\sum\limits_{s=t_0}^{t-1}1/s}\leq \frac{t_0}{t}\leq t_0\gamma^{(t)}\end{equation}
for any choice of $t\geq t_0>0$.

On the other hand, as a consequence of the assumptions of $P$ (see \cite{Perron}) we have that $P^t\to N^{-1}\mathbbm 1\mathbbm 1^T$, or equivalently that $P^t\Omega\to 0$ for $t\to +\infty$. More precisely,  we can order the eigenvalues of $P$ as  $1=\mu_1> \mu_2\geq\cdots\geq\mu_N\geq 0$, and we have that $||P^t\Omega||\leq \mu_2^t$.

\begin{lemma}\label{retta} It holds
$$\Omega x^{(t)}=O(\gamma^{(t)})\,,\quad {\rm for}\; t\to +\infty.$$
\end{lemma}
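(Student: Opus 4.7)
The plan is to first derive a clean scalar recursion for the deviation-from-consensus quantity $\|\Omega x^{(t)}\|$, then show by induction that this quantity stays below a constant multiple of $\gamma^{(t)}$ for all sufficiently large $t$. The slowly varying condition $\gamma^{(t)}\sim\gamma^{(t+1)}$ is what makes the inductive step close.

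\emph{Step 1: projecting the dynamics on the orthogonal complement of $\mathbbm 1$.} Since $P$ is stochastic and symmetric, $P\mathbbm 1=\mathbbm 1$ and $\mathbbm 1^{\mathsf T}P=\mathbbm 1^{\mathsf T}$, so $\Omega$ and $P$ commute: $\Omega P=P\Omega=P-N^{-1}\mathbbm 1\mathbbm 1^{\mathsf T}$. Applying $\Omega$ to \eqref{sistemadinamico} I get
$$\Omega x^{(t+1)}=(1-\gamma^{(t)})\,P\,\Omega x^{(t)}+\gamma^{(t)}\,\Omega u^{(t)}.$$
Because the dynamics $x^{(t+1)}=(1-\gamma^{(t)})Px^{(t)}+\gamma^{(t)}u^{(t)}$ is a convex combination of a stochastic image of a bounded vector and a bounded input, $\{x^{(t)}\}$ is bounded, hence so is $\{\Omega u^{(t)}\}$; set $C:=\sup_t\|\Omega u^{(t)}\|<\infty$.

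\emph{Step 2: contraction estimate.} Assumption (b) gives a spectrum $1=\mu_1>\mu_2\geq\dots\geq\mu_N\geq0$ with $\mu_2<1$ (primitivity). Since $P$ is symmetric, $\|P\Omega\|=\mu_2$, i.e., $P$ acts as a contraction of rate $\mu_2$ on $\mathrm{range}(\Omega)$. Writing $a_t:=\|\Omega x^{(t)}\|$, I obtain the scalar recursion
$$a_{t+1}\leq(1-\gamma^{(t)})\mu_2\,a_t+\gamma^{(t)}\,C.$$

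\emph{Step 3: inductive bound $a_t\leq K\gamma^{(t)}$.} Fix any $K>C/(1-\mu_2)$. I claim there is $t_0$ such that $a_t\leq K\gamma^{(t)}$ for every $t\geq t_0$. Indeed, if $a_t\leq K\gamma^{(t)}$ then
$$\frac{a_{t+1}}{\gamma^{(t+1)}}\leq\frac{\gamma^{(t)}}{\gamma^{(t+1)}}\Bigl[(1-\gamma^{(t)})\mu_2+\frac{C}{K}\Bigr]K.$$
By the hypotheses $\gamma^{(t)}\to0$ and $\gamma^{(t)}=\gamma^{(t+1)}+o(\gamma^{(t+1)})$ the bracket times $\gamma^{(t)}/\gamma^{(t+1)}$ tends to $\mu_2+C/K<1$, so for $t$ large enough the factor is $\leq 1$ and the induction step closes. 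The base case follows by choosing $K$ large enough that $a_{t_0}\leq K\gamma^{(t_0)}$, which is possible since both quantities are finite (recall $\gamma^{(t_0)}>0$ and $\{x^{(t)}\}$ bounded).

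\emph{Anticipated obstacle.} The purely spectral argument hides one subtlety: closing the induction requires the factor $\gamma^{(t)}/\gamma^{(t+1)}$ to stay bounded (in fact to tend to $1$), which is exactly the role of the regularity assumption $\gamma^{(t)}=\gamma^{(t+1)}+o(\gamma^{(t+1)})$. Without it, even a $\mu_2$-contractive dynamics would not guarantee $a_t=O(\gamma^{(t)})$; one could only conclude the weaker $a_t=o(1)$. Checking this carefully, and matching constants so that $\mu_2+C/K<1$ leaves enough slack to absorb the $o(1)$ terms in $\gamma^{(t)}/\gamma^{(t+1)}$ and $(1-\gamma^{(t)})$, is the main (but routine) technical point. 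Once $a_t\leq K\gamma^{(t)}$ is established for $t\geq t_0$, the conclusion $\Omega x^{(t)}=O(\gamma^{(t)})$ follows immediately.
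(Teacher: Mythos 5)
Your proof is correct, but it takes a genuinely different (and more economical) route than the paper's. The paper unrolls the recursion into the explicit variation-of-constants formula \eqref{soldyn} and then estimates the resulting sum term by term: the forced part is controlled by combining $\|P^{t-s}\Omega\|\le\mu_2^{t-s}$ with the bound $\gamma^{(s)}\le\gamma^{(t)}/(1-\varepsilon)^{t-s}$ (a consequence of $\gamma^{(t)}=\gamma^{(t+1)}+o(\gamma^{(t+1)})$) and summing a geometric series of ratio $|\mu_2|/(1-\varepsilon)<1$, while the initial-condition term is handled via \eqref{gammadis1}, which is where the hypothesis $\gamma^{(t)}\ge 1/t$ enters. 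You instead project the one-step recursion onto $\mathbbm{1}^{\perp}$, obtain the scalar inequality $a_{t+1}\le(1-\gamma^{(t)})\mu_2 a_t+\gamma^{(t)}C$, and close an induction $a_t\le K\gamma^{(t)}$; the same two ingredients (spectral gap and slow variation of $\gamma^{(t)}$) make the inductive step work, and your bookkeeping of the slack $\mu_2+C/K<1$ is sound, including the observation that enlarging $K$ for the base case only improves the inductive factor. Two comparative remarks. First, your argument does not actually use $\gamma^{(t)}\ge 1/t$: the initial condition is absorbed into the constant $K$ rather than being beaten down by \eqref{gammadis1}, so on this point your proof is marginally more general. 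Second, the paper's heavier explicit representation \eqref{soldyn} is not wasted effort: it is reused almost verbatim in the proof of Lemma \ref{quellodellenote}, where the fine structure of the sum (not just its norm) matters, and your recursive bound would not suffice there. A small cosmetic point: the boundedness of $\Omega u^{(t)}$ follows directly from the standing assumption that the input sequence $u^{(t)}$ is bounded; the detour through the boundedness of $x^{(t)}$ is unnecessary.
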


\begin{proof}
From \eqref{sistemadinamico} and the fact that $\Omega P=P\Omega$ we get, for any fixed $t_0$ and $t\geq t_0$,
%\begin{align*}
%\Omega x^{(t+1)}&=(1-\gamma^{(t)})\Omega Px^{(t)}+\gamma^{(t)}\Omega u^{(t)}\\
%&=(1-\gamma^{(t)})P\Omega x^{(t)}+\gamma^{(t)}\Omega u^{(t)}\\
%&=(1-\gamma^{(t)})P\Omega x^{(t)}+\gamma^{(t)}\left(u^{(t)}-\overline{u}^{(t)}\mathbbm{1}\right)
%\end{align*}
%from which, for any $t>t_0$,
\begin{equation}\label{soldyn}
\Omega
x^{(t+1)}=\prod_{s=t_0}^{t}\left(1-\gamma^{(s)}\right)P^{t}\Omega
x^{(t_0)} +\sum_{s=t_0}^{t}\prod_{k=s+1}^{t}
\left(1-\gamma^{(k)}\right)\gamma^{(s)}P^{t-s} \Omega u^{(s)}.
\end{equation}
This yields
%$$||\Omega x^{(t+1)}||_2\leq\prod_{s=t_0}^{t}\left(1-\gamma^{(s)}\right) || \Omega x^{(t_0)}||_2+\sum_{s=t_0}^{t} \prod_{k=s+1}^{t}
%\left(1-\gamma^{(k)}\right)\gamma^{(s)}\left|\left|P^{t-s} \Omega
%u^{(s)}\right|\right|_2.
%$$
%From the assumptions on $P$, there exists $0\leq \rho_2<1$
%such that $$||P^{t-s}\Omega u^{(s)}||_2 \leq
%\rho_2^{t-s}||u^{(s)}||_2.$$ Thus,
\begin{align}\label{omegax}
||\Omega x^{(t+1)}||_2&\leq\prod_{s=t_0}^{t}\left(1-\gamma^{(s)}\right) ||\Omega x^{(t_0)}||_2+ \sum_{s=t_0}^{t} \prod_{k=s+1}^{t}
\left(1-\gamma^{(k)}\right)\gamma^{(s)}|\mu_2|^{t-s}||u^{(s)}||_2\nonumber\\
&\leq\prod_{s=t_0}^{t}\left(1-\gamma^{(s)}\right) || \Omega
x^{(t_0)} ||_2 + K\sum_{s=t_0}^{t} \prod_{k=s+1}^{t}
\left(1-\gamma^{(k)}\right)\gamma^{(s)}|\mu_2|^{t-s}
\end{align}
with $K:=\max_s||u^{(s)}||_2$.

Fix now $0<\varepsilon<1-|\mu_2|$ and let $t_0\in\mathbb{N}$ be
such that $\frac{\gamma^{(t+1)}}{\gamma^{(t)}}\in (1-\varepsilon,
1)$ for all $t\geq t_0$. Hence, for $t\geq s\geq t_0$, we have
that $\gamma^{(s)}<\frac{\gamma^{(t)}}{(1-\varepsilon)^{t-s}}$.
Consider now the estimation \eqref{omegax} with this choice of
$t_0$. We get

%$t> t_0$, $n\in\mathbb{N}$.
%
%Considering such a $t_0$ in the expression \eqref{omegax}, since $s> t_0$,
%\begin{align*}
%\prod_{k=s+1}^{t}
%\left(1-\gamma^{(k)}\right)\gamma^{(s)}|\rho|^{t-s}\leq \gamma^{(t)}\frac %{|\rho|^{t-s}}{(1-\varepsilon)^{t-s}}.
%\end{align*}
%Choosing a sufficiently small $\varepsilon$ such that $|\rho|<1-\varepsilon$,
\begin{align*}
||\Omega
x^{(t+1)}||_2&\leq\prod_{s=t_0}^{t}\left(1-\gamma^{(s)}\right)
||\Omega x^{(t_0)}||_2
+ K \gamma^{(t)}\sum_{s=t_0}^{t}\left(\frac {|\mu_2|}{1-\varepsilon}\right)^{t-s}\\
&\leq\prod_{s=t_0}^{t}\left(1-\gamma^{(s)}\right) ||\Omega
x^{(t_0)}||_2+\frac
{K\gamma^{(t)}}{1-\frac{|\mu_2|}{1-\varepsilon}}.
\end{align*}
Using now (\ref{gammadis1}) the proof is completed.
\end{proof}

\begin{proposition}\label{input_costante}
 If $\exists\  t_0\in\mathbb{N}$ s.t. $u^{(t)}=u$ $\forall t\geq t_0$ then
$$
\lim_{t\rightarrow+\infty} x^{(t)}=\overline{u}\mathbbm{1}.
$$
\end{proposition}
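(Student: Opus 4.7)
The plan is to decompose $x^{(t)}$ into its mean and its orthogonal complement, handle them separately, and glue back together. The orthogonal-to-consensus part is already controlled: Lemma \ref{retta} shows $\Omega x^{(t)} = O(\gamma^{(t)})$, which tends to $0$ since $\gamma^{(t)} \searrow 0$. So the whole content of the proposition lies in showing $\overline{x}^{(t)} \to \overline{u}$.

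For the mean, I would exploit the fact that symmetry plus stochasticity of $P$ gives double stochasticity, hence $\mathbbm{1}^{\mathsf T} P = \mathbbm{1}^{\mathsf T}$. Applying $N^{-1}\mathbbm{1}^{\mathsf T}$ to \eqref{sistemadinamico} produces the scalar recursion
\[
\overline{x}^{(t+1)} = (1-\gamma^{(t)})\overline{x}^{(t)} + \gamma^{(t)}\overline{u}^{(t)}.
\]
For $t \geq t_0$ we have $\overline{u}^{(t)} = \overline{u}$, so subtracting $\overline{u}$ from both sides kills the forcing term and yields
\[
\overline{x}^{(t+1)} - \overline{u} = (1-\gamma^{(t)})\bigl(\overline{x}^{(t)} - \overline{u}\bigr).
\]

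Iterating this identity from $t_0$ up to $t$ gives
\[
\overline{x}^{(t+1)} - \overline{u} = \Bigl(\prod_{s=t_0}^{t}(1-\gamma^{(s)})\Bigr)\bigl(\overline{x}^{(t_0)} - \overline{u}\bigr),
\]
and the bound \eqref{gammadis1}, which is the place where the assumption $\gamma^{(t)} \geq 1/t$ enters, gives $\prod_{s=t_0}^{t}(1-\gamma^{(s)}) \leq t_0\,\gamma^{(t+1)} \to 0$. Hence $\overline{x}^{(t)} \to \overline{u}$.

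Finally, I would combine the two pieces via the orthogonal decomposition $x^{(t)} = \overline{x}^{(t)}\mathbbm{1} + \Omega x^{(t)}$: the first summand tends to $\overline{u}\,\mathbbm{1}$ by the scalar argument above, the second tends to $0$ by Lemma \ref{retta}, and the conclusion $x^{(t)} \to \overline{u}\,\mathbbm{1}$ follows. There is no real obstacle in this proof — the work has already been done in Lemma \ref{retta}, and the only subtlety worth flagging is that symmetry of $P$ is what decouples the average dynamics from the rest, so that the forcing term aligns exactly with the direction being tracked once $u^{(t)}$ has stabilized.
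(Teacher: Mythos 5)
Your proof is correct and follows essentially the same route as the paper's: decompose via $x^{(t)}=\overline{x}^{(t)}\mathbbm{1}+\Omega x^{(t)}$, dispose of the orthogonal part with Lemma \ref{retta}, and reduce the mean to the scalar recursion $\overline{x}^{(t+1)}-\overline{u}=(1-\gamma^{(t)})(\overline{x}^{(t)}-\overline{u})$, whose iterated product vanishes by the non-summability of $\gamma^{(t)}$ (your use of the explicit bound \eqref{gammadis1} is just the quantitative form of the same fact).
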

\begin{proof}
Write $x^{(t)}=\overline{x}^{(t)}\mathbbm{1}+\Omega x^{(t)}$ and
notice that from Lemma \ref{retta} it is sufficient to prove that
$\lim_{t\rightarrow+\infty}\overline{x}^{(t)} \mathbbm{1} =
\overline{u} \mathbbm{1}.$ From (\ref{sistemadinamico}) and the fact that ${\mathbbm 1}^TP={\mathbbm 1}^T$, we obtain
$$
\overline{x}^{(t)}
-\overline{u}=\prod_{s=t_0}^{t-1}(1-\gamma^{(s)})(\overline{x}^{(s)}
-\overline{u})
$$
which goes to zero from the non-summability of $\gamma^{(t)}$.
\end{proof}

We now apply these results to the analysis of
$\widehat\theta^{(t)}$. We start with a representation result.
\begin{lemma}\label{rapportomedie} It holds, for $t\to +\infty$,
 %  \begin{equation}
% \lim_{t \to +\infty}\widehat{\theta}^{(t)}=\frac{\bar{\mu}^{(t)}}{\bar{\nu}^{(t)}}\mathbbm{1}.
%   \end{equation}
\begin{equation}\label{thetahat}
 \widehat{\theta}^{(t)}=\frac{\bar{\mu}^{(t)}}{\bar{\nu}^{(t)}}\mathbbm{1}+
\frac{1}{\bar{\nu}^{(t)}}\Omega\left(\mu^{(t)}-\frac{\bar{\mu}^{(t)}}{\bar{\nu}^{(t)}}\nu^{(t)}\right)+
 o\left(\gamma^{(t)}\right).
  \end{equation}
\end{lemma}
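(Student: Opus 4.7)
The plan is to compute $\widehat{\theta}^{(t)}_i = \mu^{(t)}_i/\nu^{(t)}_i$ componentwise via the decompositions $\mu^{(t)} = \bar{\mu}^{(t)}\mathbbm{1} + \Omega\mu^{(t)}$ and $\nu^{(t)} = \bar{\nu}^{(t)}\mathbbm{1} + \Omega\nu^{(t)}$, treating $\Omega\mu^{(t)}$ and $\Omega\nu^{(t)}$ as small perturbations of order $\gamma^{(t)}$ (by Lemma \ref{retta} applied to the bounded inputs $y_i(\widehat{\omega}_i^{(t)})^{-2}$ and $(\widehat{\omega}_i^{(t)})^{-2}$), then Taylor expand $1/(\bar{\nu}^{(t)} + (\Omega\nu^{(t)})_i)$ around $\bar{\nu}^{(t)}$ and collect terms of order up to $\gamma^{(t)}$.

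First I would establish a uniform lower bound on $\bar{\nu}^{(t)}$ for $t$ large enough, so that the reciprocal is well-defined and bounded. Averaging \eqref{nu} yields the scalar recursion $\bar{\nu}^{(t+1)} = (1-\gamma^{(t)})\bar{\nu}^{(t)} + \gamma^{(t)}\bar{u}^{(t)}$ where $\bar{u}^{(t)} := N^{-1}\sum_i(\widehat{\omega}_i^{(t)})^{-2} \geq \beta^{-2}$ because $\widehat{\omega}_i^{(t)}\in\{\alpha,\beta\}$. Starting from $\bar{\nu}^{(0)}=0$, a straightforward induction gives $\bar{\nu}^{(t)} \geq \beta^{-2}\bigl(1-\prod_{s=0}^{t-1}(1-\gamma^{(s)})\bigr)$, and since $\sum\gamma^{(s)} = +\infty$, the product tends to zero; hence there exist $t_1\in\mathbb{N}$ and $c>0$ such that $\bar{\nu}^{(t)}\geq c$ for every $t\geq t_1$. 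Symmetrically, $\bar{\nu}^{(t)}\leq \alpha^{-2}$, so $\bar{\nu}^{(t)}$ is bounded away from both zero and infinity for large $t$.

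Next, since Lemma \ref{retta} applies with the bounded input sequences $u_i^{(t)} = y_i(\widehat{\omega}_i^{(t)})^{-2}$ and $u_i^{(t)} = (\widehat{\omega}_i^{(t)})^{-2}$ (both bounded in $i,t$ because $y_i$ is fixed and $\widehat{\omega}_i^{(t)}\in\{\alpha,\beta\}$), we obtain $\|\Omega\mu^{(t)}\|,\|\Omega\nu^{(t)}\| = O(\gamma^{(t)})$. In particular $|(\Omega\nu^{(t)})_i|/\bar{\nu}^{(t)} = O(\gamma^{(t)}) \to 0$ uniformly in $i$, so we can expand
$$\widehat{\theta}^{(t)}_i = \frac{\bar{\mu}^{(t)} + (\Omega\mu^{(t)})_i}{\bar{\nu}^{(t)}}\cdot\frac{1}{1 + (\Omega\nu^{(t)})_i/\bar{\nu}^{(t)}} = \frac{\bar{\mu}^{(t)} + (\Omega\mu^{(t)})_i}{\bar{\nu}^{(t)}}\left(1 - \frac{(\Omega\nu^{(t)})_i}{\bar{\nu}^{(t)}} + O((\gamma^{(t)})^2)\right).$$
Expanding and keeping terms up to order $\gamma^{(t)}$, the product of $(\Omega\mu^{(t)})_i$ and $(\Omega\nu^{(t)})_i$ is $O((\gamma^{(t)})^2) = o(\gamma^{(t)})$, while the two linear corrections combine as
$$\frac{(\Omega\mu^{(t)})_i}{\bar{\nu}^{(t)}} - \frac{\bar{\mu}^{(t)}(\Omega\nu^{(t)})_i}{(\bar{\nu}^{(t)})^2} = \frac{1}{\bar{\nu}^{(t)}}\Omega\!\left(\mu^{(t)} - \frac{\bar{\mu}^{(t)}}{\bar{\nu}^{(t)}}\nu^{(t)}\right)_{\!i},$$
which is precisely the announced representation \eqref{thetahat}.

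The main obstacle is the lower bound on $\bar{\nu}^{(t)}$: without it, the ratio $\widehat{\theta}^{(t)}_i$ could be ill-behaved and the Taylor expansion invalid; once that bound is in place, everything else reduces to bookkeeping with $O(\gamma^{(t)})$ estimates. A minor technical point is to ensure that the $O((\gamma^{(t)})^2)$ remainder is uniform in $i\in\mathcal{V}$, which follows from the uniform boundedness of $(\Omega\mu^{(t)})_i/\bar{\nu}^{(t)}$ and $(\Omega\nu^{(t)})_i/\bar{\nu}^{(t)}$ together with the fact that $\mathcal{V}$ is finite.
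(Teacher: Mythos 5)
Your proof is correct and follows essentially the same route as the paper's: decompose $\mu^{(t)}$ and $\nu^{(t)}$ into consensus plus $\Omega$-parts, invoke Lemma \ref{retta} to make the latter $O(\gamma^{(t)})$, and perform a first-order expansion of the ratio using that $\bar{\nu}^{(t)}$ is bounded away from zero, with the cross terms absorbed into $o(\gamma^{(t)})$. If anything, your lower bound $\bar{\nu}^{(t)}\geq \beta^{-2}\bigl(1-\prod_{s=0}^{t-1}(1-\gamma^{(s)})\bigr)$ is more careful than the paper's parenthetical claim that $\bar{\nu}^{(t)}\geq\alpha^{-2}$ for all $t>0$, which cannot hold for small $t$ since $\bar{\nu}^{(0)}=0$ and the inputs never exceed $\alpha^{-2}$.
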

\begin{proof}
For any $i\in\mathcal{V} $,
\begin{align*}
\frac{\mu_i^{(t)}}{\nu_i^{(t)}}-\frac{\bar{\mu}^{(t)}}{\bar{\nu}^{(t)}}&=\frac{\mu_i^{(t)}}{\nu_i^{(t)}}-\frac{\bar{\mu}^{(t)}}{\bar{\nu}^{(t)}} + \frac{\mu_i^{(t)}}{\bar{\nu}^{(t)}}- \frac{\mu_i^{(t)}}{\bar{\nu}^{(t)}}\\
&=\frac{\mu_i^{(t)}-\bar{\mu}^{(t)}}{\bar{\nu}^{(t)}}+\mu_i^{(t)}\left(\frac{1}{\nu_i^{(t)}}- \frac{1}{\bar{\nu}^{(t)}}\right)\\
&=\frac{1}{\bar{\nu}^{(t)}}\left(\Omega\mu ^{(t)}\right)_i-\frac{\mu_i^{(t)}}{\nu_i^{(t)} \bar{\nu}^{(t)}}\left(\Omega\nu ^{(t)}\right)_i.
\end{align*}
It follows from Lemma \ref{retta} that $\mu^{(t)}=\bar{\mu}^{(t)}\mathbbm{1}+O(\gamma^{(t)})$ and $\nu^{(t)}=\bar{\nu}^{(t)}\mathbbm{1}+O(\gamma^{(t)})$  for
$t\to +\infty$. This and the fact that $\bar{\nu}^{(t)}$ is bounded away from $0$ (indeed $\bar{\nu}^{(t)}\geq\alpha^{-2}$ for all $t>0$), yields
\begin{align*}
\frac{\mu_i^{(t)}}{\nu_i^{(t)} \bar{\nu}^{(t)}}\left(\Omega\nu ^{(t)}\right)_i
&=\frac{\bar{\mu}^{(t)}}{ \bar{\nu}^{(t)}}\left[\frac{\left(\Omega\nu
^{(t)}\right)_i}{\bar{\nu}^{(t)}}\right]\left(1+O\left(\gamma^{(t)}\right)\right)
\end{align*}
from which thesis follows.
\end{proof}

We can now present our first convergence result.

\begin{corollary}\label{corol: retta} It holds, for $t\to +\infty$,
$$\bar{ \widehat{ \theta}}^{(t)}=\frac{\bar{\mu}^{(t)}}{\bar{\nu}^{(t)}}+o(\gamma^{(t)})\,,\quad
\Omega \widehat{ \theta}^{(t)}=O(\gamma^{(t)}).$$
\end{corollary}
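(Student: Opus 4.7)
The plan is to derive the corollary directly from Lemma \ref{rapportomedie} by applying the two complementary projections $\tfrac{1}{N}\mathbbm{1}^T$ and $\Omega$ to the representation
$$
\widehat{\theta}^{(t)}=\frac{\bar{\mu}^{(t)}}{\bar{\nu}^{(t)}}\mathbbm{1}+\frac{1}{\bar{\nu}^{(t)}}\Omega\!\left(\mu^{(t)}-\frac{\bar{\mu}^{(t)}}{\bar{\nu}^{(t)}}\nu^{(t)}\right)+o(\gamma^{(t)}),
$$
exploiting the identities $\mathbbm{1}^T\Omega=0$, $\Omega\mathbbm{1}=0$, $\Omega^2=\Omega$, and $\|\Omega\|=1$.

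For the first claim, I would take the average $\bar{\widehat\theta}^{(t)}=N^{-1}\mathbbm{1}^T\widehat\theta^{(t)}$. The first term contributes exactly $\bar{\mu}^{(t)}/\bar{\nu}^{(t)}$, the $\Omega$-term vanishes because $\mathbbm{1}^T\Omega=0$, and the remainder is $N^{-1}\mathbbm{1}^T o(\gamma^{(t)})=o(\gamma^{(t)})$ since the Euclidean operator $N^{-1}\mathbbm{1}^T$ is bounded. This yields $\bar{\widehat\theta}^{(t)}=\bar{\mu}^{(t)}/\bar{\nu}^{(t)}+o(\gamma^{(t)})$.

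For the second claim, I would apply $\Omega$ to the representation. The consensus term $(\bar{\mu}^{(t)}/\bar{\nu}^{(t)})\mathbbm{1}$ is annihilated, while the middle term is preserved up to the idempotence of $\Omega$. Using Lemma \ref{retta}, both $\Omega\mu^{(t)}$ and $\Omega\nu^{(t)}$ are $O(\gamma^{(t)})$, so the norm of $\Omega(\mu^{(t)}-(\bar{\mu}^{(t)}/\bar{\nu}^{(t)})\nu^{(t)})$ is bounded by $\|\Omega\mu^{(t)}\|+|\bar{\mu}^{(t)}/\bar{\nu}^{(t)}|\,\|\Omega\nu^{(t)}\|=O(\gamma^{(t)})$, provided $\bar{\mu}^{(t)}/\bar{\nu}^{(t)}$ stays bounded. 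Dividing by $\bar{\nu}^{(t)}\geq\alpha^{-2}>0$ (already noted in the proof of Lemma \ref{rapportomedie}) preserves the order. Finally the residual $o(\gamma^{(t)})$ term is $O(\gamma^{(t)})$ after applying the bounded operator $\Omega$.

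The only genuine subtlety is the boundedness of the ratio $\bar{\mu}^{(t)}/\bar{\nu}^{(t)}$, which I would dispatch as follows: from the update \eqref{mu}, $\bar{\mu}^{(t)}$ is a convex combination of past inputs of the form $y_i(\widehat\omega_i^{(s)})^{-2}$, hence $|\bar{\mu}^{(t)}|\le\alpha^{-2}\max_i|y_i|$, which is almost surely finite; combined with $\bar{\nu}^{(t)}\ge\alpha^{-2}$ the ratio is almost surely bounded, uniformly in $t$. No further obstacle arises, and the corollary follows.
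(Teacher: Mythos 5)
Your proof is correct and follows essentially the same route as the paper: the paper also obtains both relations by reading off the consensus and disagreement parts of the representation in Lemma \ref{rapportomedie}, invoking Lemma \ref{retta} and the fact that $\bar{\nu}^{(t)}$ is bounded away from $0$. You simply make explicit the routine details (the projections $N^{-1}\mathbbm{1}^T$ and $\Omega$, and the boundedness of $\bar{\mu}^{(t)}/\bar{\nu}^{(t)}$) that the paper leaves implicit.
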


\begin{proof} Both relations are obtained from (\ref{thetahat}). The first one is immediate. The second one follows
from Lemma \ref{retta} and the fact that $\bar{\nu}^{(t)}$
stays bounded away from $0$.
\end{proof}

Corollary \ref{corol: retta} says that the estimate
$\widehat{\theta}^{(t)} $ is close to a consensus for sufficiently
large $t$. Something more precise can be stated if we know that 
if $\widehat{\omega}^{(t)}$ stabilizes at
finite time as explained in the next result.

\begin{corollary}\label{convergenza_finale}
If $\exists\  t_0\in\mathbb{N}$ s.t. $\widehat{\omega}^{(t)}=\widehat{\omega}^{IA}$ $\forall t\geq t_0$ then
  $$
\lim_{t\rightarrow+\infty} \widehat{\theta}^{(t)}
=\widehat{\theta}(\widehat\omega^{IA})=\frac{\sum_{i\in\mathcal V}{y_i}{\left[\widehat{\omega}^{IA}_i\right]^{-2}}}{\sum_{i\in\mathcal V}{\left
[\widehat{\omega}^{IA}_i\right]^{-2}}} \mathbbm{1}.
$$
\end{corollary}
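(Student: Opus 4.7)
The plan is to combine the previous results once the classification has stabilized. Assume $\widehat{\omega}^{(t)}=\widehat{\omega}^{IA}$ for all $t\geq t_0$. Then, for $t\geq t_0$, the inputs driving the recursions \eqref{mu} and \eqref{nu} become constant vectors: specifically,
\begin{equation*}
u_i^{(t)} = y_i\bigl[\widehat{\omega}^{IA}_i\bigr]^{-2} \quad \text{(for the $\mu$-recursion)}, \qquad v_i^{(t)} = \bigl[\widehat{\omega}^{IA}_i\bigr]^{-2} \quad \text{(for the $\nu$-recursion)}.
\end{equation*}
Both inputs are bounded, so the general dynamics \eqref{sistemadinamico} applies to both $\mu^{(t)}$ and $\nu^{(t)}$.

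First I would invoke Proposition \ref{input_costante} separately for $\mu^{(t)}$ and $\nu^{(t)}$, which are each of the form \eqref{sistemadinamico} with eventually constant input. This yields
\begin{equation*}
\lim_{t\to+\infty}\mu^{(t)} = \overline{u}\,\mathbbm{1} = \frac{1}{N}\sum_{i\in\mathcal V} y_i\bigl[\widehat{\omega}^{IA}_i\bigr]^{-2}\mathbbm{1}, \qquad \lim_{t\to+\infty}\nu^{(t)} = \overline{v}\,\mathbbm{1} = \frac{1}{N}\sum_{i\in\mathcal V}\bigl[\widehat{\omega}^{IA}_i\bigr]^{-2}\mathbbm{1}.
\end{equation*}
In particular the scalar averages $\bar\mu^{(t)}\to\overline{u}$ and $\bar\nu^{(t)}\to\overline{v}$, with $\overline{v}\geq \alpha^{-2}>0$.

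Next I would combine this with Corollary \ref{corol: retta}. That corollary gives $\bar{\widehat\theta}^{(t)}=\bar\mu^{(t)}/\bar\nu^{(t)}+o(\gamma^{(t)})$ and $\Omega\widehat\theta^{(t)}=O(\gamma^{(t)})$; since $\gamma^{(t)}\to 0$, both contributions vanish in the limit, and using $\widehat\theta^{(t)}=\bar{\widehat\theta}^{(t)}\mathbbm{1}+\Omega\widehat\theta^{(t)}$ we obtain
\begin{equation*}
\lim_{t\to+\infty}\widehat\theta^{(t)} = \frac{\overline{u}}{\overline{v}}\,\mathbbm{1} = \frac{\sum_{i\in\mathcal V}y_i\bigl[\widehat{\omega}^{IA}_i\bigr]^{-2}}{\sum_{i\in\mathcal V}\bigl[\widehat{\omega}^{IA}_i\bigr]^{-2}}\,\mathbbm{1},
\end{equation*}
where the factors $1/N$ cancel. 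Recognizing the right-hand side as $\widehat\theta(\widehat\omega^{IA})\mathbbm{1}$ from \eqref{partial max} completes the proof.

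There is no real obstacle here beyond carefully matching the hypotheses of Proposition \ref{input_costante}: both $u^{(t)}$ and $v^{(t)}$ are bounded and eventually constant (the same stabilization time $t_0$ works for both, since both depend only on $\widehat{\omega}^{(t)}$), so the proposition applies directly. All the analytic work has already been done in Lemma \ref{retta}, Proposition \ref{input_costante}, Lemma \ref{rapportomedie}, and Corollary \ref{corol: retta}; the corollary is essentially a clean assembly of these pieces under the standing finite-time stabilization assumption.
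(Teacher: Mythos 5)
Your proposal is correct and follows essentially the same route as the paper: apply Proposition \ref{input_costante} separately to the $\mu$- and $\nu$-recursions (whose inputs are constant once $\widehat{\omega}^{(t)}$ has stabilized) and then pass to the limit in the quotient $\widehat{\theta}^{(t)}_i=\mu_i^{(t)}/\nu_i^{(t)}$, using that the limit of $\nu^{(t)}$ is bounded below by $\alpha^{-2}>0$. The detour through Corollary \ref{corol: retta} is harmless but not needed, since componentwise convergence of $\mu^{(t)}$ and $\nu^{(t)}$ to consensus vectors already gives the conclusion directly.
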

\begin{proof} Proposition
\ref{input_costante} guarantees that $\mu^{(t)}$ and $\nu^{(t)}$
converge to $\frac{1}{N}\sum_{i\in\mathcal{V}}y_i[
\widehat{\omega}^{IA}_i]^{-2}\mathbbm{1}$ and
$\frac{1}{N}\sum_{i\in\mathcal{V}}
[\widehat{\omega}^{IA}_i]^{-2}\mathbbm{1}$, respectively. This yields the thesis.
\end{proof}

\subsection{Stabilization of $\widehat{\omega}^{(t)}$}
We are going to prove that vector $ \widehat{\omega}^{(t)} $ almost surely stabilizes
in finite time: this, by virtue of previous considerations will
complete our proof. To prove this fact will take lots of effort
and will be achieved through several intermediate steps.

We start observing that, since $\widehat{\omega}^{(t)}$ can only assume
values in a finite set, equations in \eqref{munu} and \eqref{sys}
can be conveniently modeled by a switching system as shown below.

For reasons which will be clear below, in this subsection we will replace the configuration space $\{\alpha,\beta\}^{\mathcal V}$ with the augmented state space $\{\alpha,\beta+,\beta-\}^{\mathcal V}$. If $\omega\in\{\alpha,\beta+,\beta-\}^{\mathcal V}$, define
\begin{align*}
\Theta_{\omega}=\{x\in\mathbb{R}^{\mathcal V}: &|x_i-y_i|<\delta, \text{if }\omega_i=\alpha, x_i\geq y_i+\delta,\text{if }\omega_i=\beta+, x_i\leq y_i-\delta,\text{if }\omega_i=\beta-
\}.
\end{align*}

We clearly have $\mathbb{R}^{\mathcal V}=\bigcup_{\omega\in\{\alpha,\beta+,\beta-\}^{\mathcal V}}\Theta_{\omega} $.
% (see the toy example for $N=2$ in Fig. \ref{fig4bis}).
%\begin{center}
%\begin{figure}[h!]\begin{center}
%\includegraphics[width=0.4\columnwidth]{./Figure/fig4.eps}
%\caption{$\mathbb{R}^2=\bigcup_{\omega\in\{\alpha,\beta\}^2}\Theta_{\omega} $.}
%\label{fig4bis}
%\end{center}
%\end{figure}
%\end{center}

On each $\Theta_\omega$ the dynamical system is linear. Indeed, define the maps
$f_{\omega}:\mathbb{R}\times \mathbb{R}^{\mathcal V}\rightarrow\mathbb{R}^{\mathcal V}$ and $g_{\omega}:\mathbb{R}\times\mathbb{R}^{\mathcal V}\rightarrow\mathbb{R}^{\mathcal V}$ by
\begin{align*}
[f_{\omega}(t,x)]_i&=(1-\gamma^{(t)})[Px^{(t)}]_i+\gamma^{(t)}\frac{y_i}{\omega_i^2}\\
[g_{\omega}(t,x)]_i&=(1-\gamma^{(t)})[Px^{(t)}]_i+\gamma^{(t)}\frac{1}{\omega_i^2}
\end{align*}
where, conventionally, $\omega_i^2=\beta^2$ if $\omega_i=\beta+, \beta-$.
Then, if $\widehat{\theta}^{(t)}\in\Theta_{\omega}$, \eqref{mu}, \eqref{nu}, and \eqref{sys1} can be written as
$$\mu^{(t+1)}=f_{{{\omega}}}(t,\mu^{(t)})\qquad\nu^{(t+1)}=g_{{{\omega}}}(t,\nu^{(t)})$$
$$\widehat{\theta}^{(t+1)}_i={\mu_i^{(t+1)}}/{\nu_i^{(t+1)}}.
$$
Notice that this is a closed-loop switching system, since the switching policy is determined by $\widehat{\theta}^{(t)}$. It is clear that the stabilization of $\widehat{\omega}^{(t)}$ is equivalent to the fact that there exist an ${\omega}\in\{\alpha,\beta+,\beta-\}^N$ and a time $\widetilde{t}$ such that $\widehat{\theta}^{(t)}\in\Theta_{{\omega}}$ for all $t\geq \widetilde{t}$.

From Corollary \ref{convergenza_finale} candidate limit points for $\widehat{\theta}^{(t)}$ are
$${\widehat\theta}(\omega)\mathbbm{1}=\frac{\sum_{i\in\mathcal{V}}y_i\omega_i^{-2}}{\sum_{i\in\mathcal{V}}\omega_i^{-2}}\mathbbm{1}\qquad \omega\in\{\alpha,\beta+,\beta-\}^N.$$

Also, from Proposition \ref{corol: retta},  the dynamics can be conveniently analysed by studying it in a neighborhood of the line
$\Lambda=\{\lambda\mathbbm{1}|\lambda\in\mathbb{R}\}$.

We now make an assumption which holds almost everywhere with
respect to the choice of $y_i$'s and, consequently, does not
entail any loss of generality in our proof.

\smallskip\noindent
{\bf ASSUMPTION:}
\begin{itemize}
\item $y_i-y_j\not\in\{0, \pm\delta, \pm 2\delta\}$ for all $i\neq
j$; \item ${\widehat\theta}(\omega)-y_i\not\in\{\pm\delta\}$ for all
$\omega\in \{\alpha,\beta+,\beta-\}^{\mathcal V}$ and for all $i$.
\end{itemize}
This assumption has a number of consequences which will be used later on:
\begin{enumerate}
\item[(C1)] ${\widehat\theta}(\omega){\mathbbm 1},
y_i{\mathbbm 1}\in\bigcup_{\omega\in\{\alpha,\beta+,\beta-\}^{\mathcal V}}{\rm
int}(\Theta_\omega)$ for all $\omega\in
\{\alpha,\beta+,\beta-\}^{\mathcal V}$ and for all $i\in\mathcal V$;

\item[(C2)]
$\Lambda\cap\bar\Theta_{\omega}\cap\bar\Theta_{\omega'}\cap\bar\Theta_{\omega''}=\emptyset$
for any triple of distinguished $\omega, \omega^\prime, \omega
''$. In other terms, $\Lambda$ always crosses boundaries among
regions $\Theta_\omega$ at internal point of faces.
\end{enumerate}

We now introduce some further notation, which will be useful in
the rest of the paper.
\begin{gather*}
\Theta^{\epsilon}:=\{x\in\mathbb{R}^{\mathcal V}:||\Omega x||_2<\epsilon\},\quad\Theta^{\epsilon}_{\omega}:=\Theta^{\epsilon}\cap \Theta_{\omega}\\
\Gamma:=\{\omega\in\{\alpha,\beta+,\beta-\}^{\mathcal V}: \ \Theta_{\omega}\cap\Lambda\neq\emptyset\}.\end{gather*}

For any $\omega\in \Gamma$ consider
$$
\Pi_{\omega}=\{\pi=\bar\Theta_{\omega}\cap\bar\Theta_{\omega'}:\ \mathrm{d_H}(\omega,\omega')=1,\ \pi\cap\Lambda=\emptyset\}
$$
and define $ \sigma_{\omega}:=\min_{\pi\in
\Pi_{\omega}}\mathrm{d}(\Theta_\omega\cap\Lambda,\pi)>0$\begin{footnote}{
$\mathrm{d}(\Theta_\omega\cap\Lambda,\pi)$ denotes the distance
between the two sets $\Theta_\omega\cap\Lambda$ and the set
$\pi$}\end{footnote}.

In the sequel, we will use the natural ordering on $\Lambda$: given the sets $X,Y\subseteq\Lambda$, $X<Y$ means that $x<y$ for all $x\in X$ and $y\in Y$.

\begin{definition}
Given two elements $\omega,\ \omega'\in\Gamma $, we say that $\omega'$ is the future-follower of $\omega$ (or also that $\omega$ is the past-follower of $\omega'$) if the following happens:
\begin{enumerate}\label{conditions_adjacent}
\item[(A)] There exists $i_0$ such that $
\omega_i=\omega'_{i}\text{ for all }\, i\neq i_0\text{ and
}\omega_{i_0}\neq\omega'_{i_0}$; \item[(B)] $
\Theta_{\omega}\cap\Lambda<\Theta_{\omega'}\cap\Lambda$.
\end{enumerate}
\end{definition}
Notice that, in order for $\omega$ and $\omega'$ to satisfy
definition above, it must necessarily happen that either
$\omega_{i_0}=\alpha$ and $\omega'_{i_0}=\beta+$, or
$\omega_{i_0}=\beta-$ and $\omega'_{i_0}=\alpha$. Given
$\omega\in\Gamma$, its future-follower (if it exists) will be
denoted by $\omega^+$. It is clear that (because of property (C2)
described above) that we can order elements in $\Gamma$ as
$\omega^1,\omega^2,\dots ,\omega^M$ in such a way that
$\omega^{r+1}=(\omega^r)^+$ for every $r=1,\dots ,M-1$.

Given $\omega\in\Gamma$, consider the following subsets of $\mathbb{R}^N$ (see Fig.
\ref{fig3b}):
\begin{align*}
\mathcal{M}_{\omega}^{\epsilon}&:=\left\{x\in\Theta_{\omega}^{\epsilon}:\overline{x}\mathbbm{1}+\Omega z\in
\Theta_{\omega}^{\epsilon},~\forall z: ||z||_2<\epsilon \right\}
\\
\mathcal{L}_{\omega,\omega^+}^{\epsilon}&:=\left\{x\in\Theta^{\epsilon}: \mathcal{M}_{\omega}^{\epsilon}\cap\Lambda<\bar
x<\mathcal{M}_{\omega^+}^{\epsilon}\cap\Lambda\right\}.
\end{align*}
(with the implicit assumption that
$\mathcal{L}_{\omega,\omega^+}^{\epsilon}=\emptyset$ if $\omega^+$
does not exist.) We clearly have
$\Theta^{\epsilon}=\bigcup_{\omega\in\Gamma}
\mathcal{M}_{\omega}^{\epsilon}
\cup\mathcal{L}_{\omega,\omega^+}^{\epsilon} $.

\begin{figure}[h!]\begin{center}
\includegraphics[height=0.4\columnwidth]{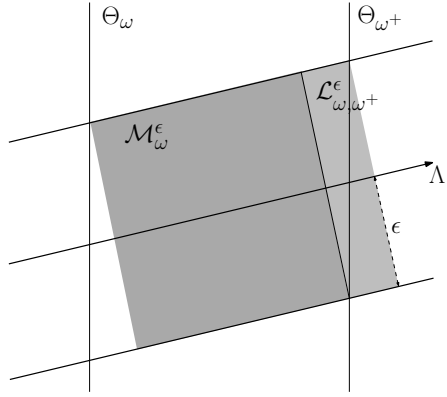}
\caption{Given the couple $(\omega,\omega')$ the sets $\mathcal{L}_{\omega,\omega'}^{\epsilon}$ and $\mathcal{M}_{\omega}^{\epsilon}$
are visualized.}
\label{fig3b}
\end{center}
\end{figure}

Notice that, because of property (C1), we can always choose
$\epsilon_0\in(0,\min_{\omega\in\Gamma} \sigma_{\omega})$ such
that
\begin{align*}&{\widehat\theta}(\omega)\mathbbm{1}, y_{i}\mathbbm{1}\in\bigcup_{\omega'\in\Gamma}\mathcal{M}_{\omega'}^{\epsilon_0}\qquad \forall
\omega\in\Gamma, \forall i\in\mathcal{V}.
\end{align*}
This implies that there exists $\tilde c>0$ such that
\begin{equation}\label{boundary}\mathrm{d}\left(\bigcup_{\omega'\in\Gamma}\partial_\Lambda\left(\mathcal{M}_{\omega'}^{\epsilon}
\cap\Lambda\right), \{{\widehat\theta}(\omega), y_i\}\right)\geq \tilde c,\quad\forall\epsilon\leq\epsilon_0
\end{equation} where $\partial_\Lambda(\cdot)$ denotes the boundary of a set in the relative topology of $\Lambda$.

Fix now $\epsilon\leq\epsilon_0$ and choose $t_\epsilon$ such that $\widehat{\theta}^{(t)}\in\Theta^{\epsilon}$ for all $t\geq
t_\epsilon$ (it exists by Corollary \ref{corol: retta}). From now on we consider times $t\geq t_\epsilon.$
Our aim is to prove through intermediate steps the following facts
\begin{itemize}
\item [F1)] if ${\widehat\theta}(\omega)\in\mathcal{M}^{\epsilon}_{\omega}$
then $\mathcal{M}^{\epsilon}_{\omega} $ is an \emph{asymptotically
invariant} set for $\widehat{\theta}^{(t)}$, namely, when $t$ is
sufficiently large, if $\widehat{\theta}^{(t)}\in
\mathcal{M}^{\epsilon}_{\omega}$ then $\widehat{\theta}^{(t+1)}\in
\mathcal{M}^{\epsilon}_{\omega}$; \item [F2)] if ${\widehat\theta}(\omega)\notin\mathcal{M}^{\epsilon}_{\omega}$ then
$\widehat{\theta}^{(t)}\notin \mathcal{M}^{\epsilon}_{\omega} $
for $t$ sufficiently large; \item[F3)]
$\widehat{\theta}^{(t)}\notin\bigcup_{\omega\in\{\alpha,\beta+,\beta-\}^{\mathcal V}}\mathcal{L}^{\epsilon}_{\omega,\omega^+}
$ for $t$ sufficiently large.
\end{itemize}

\subsubsection*{F1) Asymptotic invariance of $\mathcal{M}_{\omega}^{\epsilon}$ when ${\widehat\theta}(\omega)\mathbbm{1}\in\mathcal{M}^{\epsilon}_{\omega}$}
\text{\\}

\begin{lemma}\label{lemma: moto parallelo} If $\widehat{\theta}^{(t)}\in\Theta_{\omega}$ then there exists $c^{(t)}\in[\alpha^2/\beta^2, \beta^2/\alpha^2]$ and 
$r^{(t)}=o(\gamma^{(t)})$ for $t\to +\infty$
such that
\begin{equation}\label{parallel}\overline{\widehat\theta}^{(t+1)}=\overline{\widehat\theta}^{(t)}+
c^{(t)}\gamma^{(t)}\left({\widehat\theta}(\omega)-\overline{\widehat\theta}^{(t)}\right)+r^{(t)}
\end{equation}
\end{lemma}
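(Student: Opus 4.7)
My plan is to exploit the fact that, conditional on $\widehat{\theta}^{(t)}\in\Theta_{\omega}$, the input driving the iterations is momentarily constant, so the averaged dynamics collapses to a scalar affine recursion. First, I would average the update equations $(\ref{mu})$ and $(\ref{nu})$ against $\mathbbm{1}$. Using $\mathbbm{1}^{T}P=\mathbbm{1}^{T}$ (stochasticity plus symmetry) and the hypothesis $\widehat{\theta}^{(t)}\in\Theta_{\omega}$, which forces $\widehat{\omega}_i^{(t)}{}^{-2}=\omega_i^{-2}$ (with the convention $\omega_i^2=\beta^2$ when $\omega_i\in\{\beta+,\beta-\}$), I obtain
\begin{equation*}
\bar{\mu}^{(t+1)}=(1-\gamma^{(t)})\bar{\mu}^{(t)}+\gamma^{(t)}\bar{u},\qquad \bar{\nu}^{(t+1)}=(1-\gamma^{(t)})\bar{\nu}^{(t)}+\gamma^{(t)}\bar{v},
\end{equation*}
where $\bar{u}=N^{-1}\sum_{i}y_i\omega_i^{-2}$ and $\bar{v}=N^{-1}\sum_{i}\omega_i^{-2}$. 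Crucially, $\bar{u}/\bar{v}=\widehat{\theta}(\omega)$.

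Next I would form the ratio. Writing $m^{(t)}=\bar{\mu}^{(t)}/\bar{\nu}^{(t)}$ and using $\bar{u}=\widehat{\theta}(\omega)\bar{v}$, a direct algebraic manipulation gives
\begin{equation*}
\frac{\bar{\mu}^{(t+1)}}{\bar{\nu}^{(t+1)}}=m^{(t)}+\frac{\gamma^{(t)}\bar{v}}{(1-\gamma^{(t)})\bar{\nu}^{(t)}+\gamma^{(t)}\bar{v}}\bigl(\widehat{\theta}(\omega)-m^{(t)}\bigr),
\end{equation*}
which identifies the desired coefficient $c^{(t)}:=\bar{v}/\bigl((1-\gamma^{(t)})\bar{\nu}^{(t)}+\gamma^{(t)}\bar{v}\bigr)$. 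Since $\omega_i^{-2}\in[\beta^{-2},\alpha^{-2}]$ for every $i$, both $\bar{v}$ and $\bar{\nu}^{(t)}$ lie in the interval $[\beta^{-2},\alpha^{-2}]$ (the latter by induction from the initialization $\nu^{(0)}=0$... actually one has to be slightly careful with $t=0$, but for $t\geq 1$ convexity of the update gives the bound), and so does the convex combination in the denominator. Therefore the ratio $c^{(t)}$ belongs to $[\alpha^{2}/\beta^{2},\beta^{2}/\alpha^{2}]$.

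Finally I would pass from $m^{(t)}$ to $\bar{\widehat{\theta}}^{(t)}$ using Corollary \ref{corol: retta}, which asserts $\bar{\widehat{\theta}}^{(t)}=m^{(t)}+o(\gamma^{(t)})$ and likewise at time $t+1$. Replacing both instances of $m$ by $\bar{\widehat{\theta}}$ introduces an additive error of order $o(\gamma^{(t)})+o(\gamma^{(t+1)})+c^{(t)}\gamma^{(t)}\cdot o(\gamma^{(t)})$; using the regularity hypothesis $\gamma^{(t)}=\gamma^{(t+1)}+o(\gamma^{(t+1)})$ (so that $\gamma^{(t+1)}=\Theta(\gamma^{(t)})$) and boundedness of $c^{(t)}$, all these contributions collapse into a single $r^{(t)}=o(\gamma^{(t)})$, yielding $(\ref{parallel})$. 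I do not expect any serious obstacle here: the computation is essentially closed-form bookkeeping, and the only subtle point is the uniform two-sided bound on $c^{(t)}$, which follows from the fact that the affine update for $\bar{\nu}^{(t)}$ preserves the interval $[\beta^{-2},\alpha^{-2}]$ once it has been entered.
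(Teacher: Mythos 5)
Your proposal is correct and follows essentially the same route as the paper: average \eqref{mu}--\eqref{nu} against $\mathbbm{1}$ using $\mathbbm{1}^{T}P=\mathbbm{1}^{T}$, compute the increment of the ratio $\bar\mu^{(t)}/\bar\nu^{(t)}$ in closed form to identify $c^{(t)}=\bar v/\bar\nu^{(t+1)}$, bound it via $\omega_i^{-2}\in[\beta^{-2},\alpha^{-2}]$, and then transfer to $\overline{\widehat\theta}^{(t)}$ through Corollary \ref{corol: retta}. The only difference is that you make the final $o(\gamma^{(t)})$ bookkeeping (and the mild caveat about $\bar\nu^{(t)}$ entering the interval $[\beta^{-2},\alpha^{-2}]$ only after finitely many steps) explicit, which the paper leaves implicit.
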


\begin{proof}
If $\widehat{\theta}^{(t)}\in\Theta_{\omega} $ then
\begin{align*}
\frac{\overline{\mu}^{(t+1)}}{\overline{\nu}^{(t+1)}}-\frac{\overline{\mu}^{(t)}}{\overline{\nu}^{(t)}}
&=\frac{(1-\gamma^{(t)}) \overline{\mu}^{(t)}+\gamma^{(t)}N^{-1}
\sum_{i=1}^{N}y_i\omega_i^{-2}}{(1-\gamma^{(t)}) \overline{\mu}^{(t)}+\gamma^{(t)}N^{-1}\sum_{i=1}^{N}
\omega_i^{-2}}-\frac{\overline{\mu}^{(t)}}{\overline{\nu}^{(t)}} \\
&= \frac{\overline{\nu}^{(t)}\gamma^{(t)}N^{-1}\sum_{i=1}^{N}y_i\omega_i^{-2}-\overline{\mu}^{(t)}\gamma^{(t)}N^{-1}\sum_{i=1}^{N}\omega_i^{-2}}{\overline{\nu}^{(t+1)} \overline{\nu}^{(t)}}\\
&=\gamma^{(t)}\frac{
N^{-1}\sum_{i=1}^{N}\omega_i^{-2}}{\overline{\nu}^{(t+1)}
}{\left({\widehat\theta}(\omega)-\frac{\overline{\mu}^{(t)}}{\overline{\nu}^{(t)}}\right)}
\end{align*}
%$$\frac{\overline{\mu}^{(t+1)}}{\overline{\nu}^{(t+1)}}-\frac{\overline{\mu}^{(t)}}{\overline{\nu}^{(t)}}
%=c^{(t)}\gamma^{(t)}\left({\widehat\theta}(\omega)-\frac{\overline{\mu}^{(t)}}{\overline{\nu}^{(t)}}\right)
%$$

Choosing
$c^{(t)}= \frac{ N^{-1}\sum_{i=1}^{N}\omega_i^{-2}}{\overline{\nu}^{(t+1)} }\in[\alpha^{2}/\beta^2,\beta^{2}/\alpha^2]
$ and using Corollary \ref{corol: retta} thesis easily follows.
\end{proof}

\begin{proposition}[Proof of F1)]\label{inv}
There exists
$t'\geq t_\epsilon$ such that, if ${\widehat\theta}(\omega)\mathbbm{1}\in \Theta_{\omega}$, then
$$\widehat{\theta}^{(t)}\in \mathcal{M}^{\epsilon}_{\omega}\;\Rightarrow\;\widehat{\theta}^{(t+1)}\in
\mathcal{M}^{\epsilon}_{\omega}\quad\forall t\geq t'\,.$$
\end{proposition}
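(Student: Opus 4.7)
The plan is to combine Lemma \ref{lemma: moto parallelo}, which describes the one-step drift of the average $\overline{\widehat\theta}^{(t)}$ toward $\widehat\theta(\omega)$ along $\Lambda$, with Corollary \ref{corol: retta}, which controls the transverse component $\Omega\widehat\theta^{(t)} = O(\gamma^{(t)})$. The key structural observation is that membership in $\mathcal{M}^\epsilon_\omega$ decouples into two independent conditions: $\|\Omega x\|_2 < \epsilon$ and $\overline{x}\mathbbm{1} \in \mathcal{M}^\epsilon_\omega \cap \Lambda$. Indeed, once the ball condition $\overline{x}\mathbbm 1 + \Omega z \in \Theta_\omega$ for all $\|z\|_2 < \epsilon$ is satisfied, it applies in particular to $z = \Omega x$, so membership in $\Theta_\omega$ is automatic. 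Now $\mathcal{M}^\epsilon_\omega \cap \Lambda$ is a (relatively open) interval of $\Lambda$, and the hypothesis $\widehat\theta(\omega)\mathbbm{1} \in \Theta_\omega$ together with the choice of $\epsilon_0$ puts $\widehat\theta(\omega)\mathbbm 1$ inside it; by (\ref{boundary}) its distance to the endpoints is at least $\tilde c$.

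Next, assuming $\widehat\theta^{(t)} \in \mathcal{M}^\epsilon_\omega \subseteq \Theta_\omega$, I apply Lemma \ref{lemma: moto parallelo} to write
\[
\overline{\widehat\theta}^{(t+1)} = (1 - c^{(t)}\gamma^{(t)})\,\overline{\widehat\theta}^{(t)} + c^{(t)}\gamma^{(t)}\,\widehat\theta(\omega) + r^{(t)},
\]
with $c^{(t)} \geq \alpha^2/\beta^2 > 0$ and $r^{(t)} = o(\gamma^{(t)})$. For $t$ large enough that $c^{(t)}\gamma^{(t)} \in (0,1)$, the leading part is a convex combination of two points lying in the interval $\mathcal{M}^\epsilon_\omega \cap \Lambda$. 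An elementary calculation on this interval then shows that such a convex combination is at distance at least $c^{(t)}\gamma^{(t)}\,\tilde c$ from each endpoint (because $\widehat\theta(\omega)$ itself is $\geq \tilde c$ away from each endpoint). Choosing $t'$ so large that $|r^{(t)}| < (\alpha^2/\beta^2)\,\tilde c\,\gamma^{(t)}$ for every $t \geq t'$, the residual is strictly dominated, hence $\overline{\widehat\theta}^{(t+1)}\mathbbm 1 \in \mathcal{M}^\epsilon_\omega \cap \Lambda$. Corollary \ref{corol: retta} simultaneously yields $\|\Omega \widehat\theta^{(t+1)}\|_2 = O(\gamma^{(t+1)}) < \epsilon$ for $t$ large, so the transverse condition is also met. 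Since $\Gamma$ is finite, $t'$ can be chosen uniformly over the $\omega$ satisfying the hypothesis.

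The main obstacle is the error-absorption step above: the inward drift toward $\widehat\theta(\omega)$ is only of order $\gamma^{(t)}$, while the remainder $r^{(t)}$ from Lemma \ref{lemma: moto parallelo} is merely $o(\gamma^{(t)})$, so the argument hinges on the strict gap $\tilde c > 0$ furnished by (\ref{boundary}) together with the uniform lower bound $\alpha^2/\beta^2$ on $c^{(t)}$. If $\widehat\theta(\omega)\mathbbm 1$ were allowed on the boundary of $\mathcal{M}^\epsilon_\omega \cap \Lambda$, the margin $\tilde c$ would disappear and asymptotic invariance could fail, which explains precisely why the hypothesis $\widehat\theta(\omega)\mathbbm 1 \in \Theta_\omega$ is needed.
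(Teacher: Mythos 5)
Your argument is correct and follows essentially the same route as the paper's: apply Lemma \ref{lemma: moto parallelo}, use convexity to place the leading convex combination inside $\mathcal{M}^{\epsilon}_{\omega}\cap\Lambda$ at distance of order $c^{(t)}\gamma^{(t)}\tilde c$ from its boundary (via \eqref{boundary} and the uniform lower bound $c^{(t)}\geq\alpha^2/\beta^2$), and then choose $t'$ so that the $o(\gamma^{(t)})$ remainder is absorbed by this margin. Your explicit decoupling of the transverse condition via Corollary \ref{corol: retta} and the remark on uniformity over the finite set $\Gamma$ are details the paper leaves implicit, but they do not change the argument.
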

\begin{proof}
Consider the relation (\ref{parallel}).
If $\widehat{\theta}^{(t)}\in
\mathcal{M}^{\epsilon}_{\omega}$ and if $t$ is large enough so that  $c^{(t)}\gamma^{(t)}<1$ , we
have, by convexity, that
{\small{$$z:=\overline{\widehat\theta}^{(t)}+
c^{(t)}\gamma^{(t)}\left({\widehat\theta}(\omega)-\overline{\widehat\theta}^{(t)}\right)\in
\mathcal{M}^{\epsilon}_{\omega}.$$}} Moreover, because of
(\ref{boundary}) and the fact that $c^{(t)}$ is bounded away from
$0$, there exists $c'>0$ such that $\mathrm{d}(z, \partial
(\mathcal{M}^{\epsilon}_{\omega}\cap\Lambda))\geq c'\gamma^{(t)}$.
Proof is then completed by selecting $t'\geq t_\epsilon$ such that
$c^{(t)}\gamma^{(t)}<1$ and $|r(t)|<c'\gamma^{(t)}/2$ for all
$t\geq t'$.
\end{proof}

\subsection*{F2) Transitivity of $\mathcal{M}^{\epsilon}_{\omega} $ when ${\widehat\theta}(\omega)\mathbbm{1}\notin\mathcal{M}_{\omega}^{\epsilon}$}

Our next goal is to prove that if  ${\widehat\theta}(\omega)\mathbbm{1}\notin \mathcal{M}^{\epsilon}_{\omega} $, then, at a certain time
$t$, $\widehat\theta^{(t)}$ will definitively be outside $\mathcal{M}^{\epsilon}_{\omega}$. A technical lemma based on convexity arguments
is required.

\begin{lemma}\label{lemma:y_i0} Let $\omega\in\Gamma$ be such that there exists its future-follower $\omega^+$. Then,
%
%Suppose $\omega,\omega'\in\Gamma$ are consequent and such that
%$$\Theta_{\omega} \cap\Lambda <\Theta_{\omega'} \cap\Lambda.$$ Then
$$\begin{array}{lcl}{\widehat\theta}(\omega)\mathbbm{1}>\Theta_{\omega}\cap\Lambda\;&\Rightarrow\; &{\widehat\theta}(\omega^+)\mathbbm{1}>\Theta_{\omega}\cap\Lambda\\
{\widehat\theta}(\omega^+)\mathbbm{1}
<\Theta_{\omega^+}\cap\Lambda\;&\Rightarrow\; &{\widehat\theta}(\omega)
\mathbbm{1} <\Theta_{\omega^+}\cap\Lambda.\end{array}$$
\end{lemma}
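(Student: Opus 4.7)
The plan is to reduce both implications to a single monotonicity fact about a weighted average in which exactly one weight changes, and to identify the boundary values of $\Theta_\omega\cap\Lambda$ and $\Theta_{\omega^+}\cap\Lambda$ explicitly using property (C2).

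First I would unpack the definition of future-follower. By (A), $\omega$ and $\omega^+$ differ in exactly one coordinate $i_0$, and by the subsequent remark, either $(\omega_{i_0},\omega^+_{i_0})=(\alpha,\beta+)$ or $(\omega_{i_0},\omega^+_{i_0})=(\beta-,\alpha)$. Combining this with (B) and property (C2), the (unique) crossing point of $\Lambda$ between $\Theta_\omega$ and $\Theta_{\omega^+}$ is $y_{i_0}+\delta$ in the first case and $y_{i_0}-\delta$ in the second. Hence the hypothesis $\widehat{\theta}(\omega)\mathbbm{1}>\Theta_\omega\cap\Lambda$ says $\widehat{\theta}(\omega)-y_{i_0}>\delta$ in case 1 and $\widehat{\theta}(\omega)-y_{i_0}>-\delta$ in case 2; the analogous statement reformulates the hypothesis of the second implication at the boundary of $\Theta_{\omega^+}\cap\Lambda$.

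The analytic core is the identity
\begin{equation*}
\widehat{\theta}(\omega)-y_{i_0}=\frac{S_{-i_0}-y_{i_0}W_{-i_0}}{w_{i_0}+W_{-i_0}},\qquad
\widehat{\theta}(\omega^+)-y_{i_0}=\frac{S_{-i_0}-y_{i_0}W_{-i_0}}{w^+_{i_0}+W_{-i_0}},
\end{equation*}
where $w_{i_0}=1/\omega_{i_0}^2$, $w^+_{i_0}=1/(\omega^+_{i_0})^2$, $S_{-i_0}=\sum_{i\neq i_0}y_i/\omega_i^2$, and $W_{-i_0}=\sum_{i\neq i_0}1/\omega_i^2$ (which are common to $\omega$ and $\omega^+$). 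From this, $\widehat{\theta}(\omega)-y_{i_0}$ and $\widehat{\theta}(\omega^+)-y_{i_0}$ always have the same sign, and the one with the smaller weight $w$ on coordinate $i_0$ has strictly larger modulus.

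With this monotonicity in hand, the two implications are immediate case-checks. For the first implication: in case 1, the weight decreases ($w_{i_0}=1/\alpha^2>1/\beta^2=w^+_{i_0}$), so $\widehat{\theta}(\omega^+)-y_{i_0}$ is positive and at least as large as $\widehat{\theta}(\omega)-y_{i_0}>\delta$, giving $\widehat{\theta}(\omega^+)>y_{i_0}+\delta$; in case 2 the weight increases, and the assumption $\widehat{\theta}(\omega)-y_{i_0}>-\delta$ (whether positive or in $(-\delta,0)$) is preserved because the sign is unchanged and the magnitude does not grow, so $\widehat{\theta}(\omega^+)>y_{i_0}-\delta$. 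The second implication is handled symmetrically by going from $\omega^+$ to $\omega$: one checks that $\widehat{\theta}(\omega^+)-y_{i_0}<\delta$ (case 1) or $<-\delta$ (case 2) forces the analogous bound for $\widehat{\theta}(\omega)-y_{i_0}$, which is exactly $\widehat{\theta}(\omega)\mathbbm{1}<\Theta_{\omega^+}\cap\Lambda$. No step looks hard; the only subtlety is to correctly identify, via (C2), which boundary of $\Theta_\omega\cap\Lambda$ is the relevant one, so that the scalar inequalities match the geometric ones along $\Lambda$.
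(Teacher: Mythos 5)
Your proof is correct and follows essentially the same route as the paper: your identity $\widehat{\theta}(\omega)-y_{i_0}$ and $\widehat{\theta}(\omega^+)-y_{i_0}$ sharing the numerator $S_{-i_0}-y_{i_0}W_{-i_0}$ is just a rearrangement of the paper's convex-combination formula expressing $\widehat{\theta}(\omega^+)$ as an affine combination of $y_{i_0}$ and $\widehat{\theta}(\omega)$, and both arguments then conclude by locating the relevant boundary of $\Theta_\omega\cap\Lambda$ at $y_{i_0}\pm\delta$. Your phrasing of the key fact as ``same sign, monotone modulus in the weight $w_{i_0}$'' makes the case analysis slightly more explicit than the paper's ``convexity argument,'' but the substance is identical.
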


\begin{proof}
Suppose  $\omega_i=\omega^+_i,\forall i\neq i_0$ and
$\omega_{i_0}=\beta-$, $\omega^+_{i_0}=\alpha$ (the other case can
be treated in an analogous way). Pick $x'\in \Theta_{\omega}
\cap\Lambda$ and $x''\in\Theta_{\omega^+} \cap\Lambda$. From
$|x''-y_{i_0}|<\delta$, and $|x'-y_{i_0}|>\delta$ it immediately
follows that $ x''>y_{i_0}-\delta,\   x'<y_{i_0}-\delta $ and, in
particular, the fact
\begin{equation}\label{yi0}y_{i_0}\mathbbm{1}>\Theta_{\omega}\cap\Lambda\,.\end{equation}

%Then
%\begin{equation}\label{yi0}\text{if }\Theta_{\omega} \cap\Lambda <\Theta_{\omega'} %\cap\Lambda\Longrightarrow\begin{cases} y_{i_0}\mathbbm{1}>\Theta_{\omega}\cap\Lambda&\text{if } %\omega_{i_0}=\beta\\
%y_{i_0}\mathbbm{1}<\Theta_{\omega'}\cap\Lambda&\text{if } \omega_{i_0}=\alpha.
%\end{cases}
%\end{equation}
%To see this fact, suppose $\omega_{i_0}=\beta$ (and consequently $\omega'_{i_0}=\alpha$) and notice that %from hypothesis $s<t$,
%$|t-y_{i_0}|<\delta$, and $|s-y_{i_0}|>\delta$
%from which it follows $
%t>y_{i_0}-\delta,\   s<y_{i_0}-\delta
%$ and the thesis $y_{i_0}\mathbbm{1}>\Theta_{\omega}\cap\Lambda.$ The case with $\omega_{i_0}=\alpha$ and  %$\omega'_{i_0}=\beta$ is treated analogously.

%Now we prove that there exists $\xi\in(0,1)$ such that
%\begin{align}\label{convex_comb}\begin{split}
%\overline{y}_{\omega'} &=\xi y_{i_0}+(1-\xi){\widehat\theta}(\omega) \quad\text{if }\omega_{i_0}=\beta %\text{ and }\omega'_{i_0}=\alpha\\
%{\widehat\theta}(\omega) &=\xi y_{i_0}+(1-\xi)\overline{y}_{\omega'} \quad\text{if }\omega_{i_0}=\alpha %\text{ and }\omega'_{i_0}=\beta.
%\end{split}
%\end{align}
%
%Suppose $\omega_{i_0}=\beta$ (and consequently $\omega'_{i_0}=\alpha$).
%From definition we have
Notice now that
\begin{align*}
{\widehat\theta}(\omega^+)
%&= \frac{\frac{y_{i_0}}{\alpha^2}}{\sum_{i\in\mathcal{V}}\frac{1}{{\omega'_i}^2}} + \frac{\sum_{i\in\mathcal{V}\setminus{i_0}}\frac{y_i}{{\omega'_i}^2}}{\sum_{i\in\mathcal{V}}\frac{1}{{\omega'_i}^2}}+\frac{\frac{y_{i_0}}{\beta^2}}{\sum_{i\in\mathcal{V}}\frac{1}{{\omega'_i}^2}}-\frac{\frac{y_{i_0}}{\beta^2}}{\sum_{i\in\mathcal{V}}\frac{1}{{\omega'_i}^2}}\\
&= \frac{y_{i_0}\left(\frac{1}{\alpha^2}
-\frac{1}{\beta^2}\right)}{\sum\limits_{i\in\mathcal{V}}\frac{1}{{\omega^+_i}^2}} + \frac{\sum\limits_{i\in\mathcal{V}\setminus{i_0}}\frac{y_i}{{\omega^+_i}^2}+ \frac{y_{i_0}}
{\beta^2}}{\sum\limits_{i\in\mathcal{V}}\frac{1}{{\omega^+_i}^2}}\\
&= \frac{y_{i_0}\left(\frac{1}{\alpha^2}-\frac{1}{\beta^2}\right)}{\sum\limits_{i\in\mathcal{V}}\frac{1}{{\omega^+_i}^2}} +
\frac{{\widehat\theta}(\omega) \sum\limits_{i\in\mathcal{V}}\frac{1}{{\omega_i}^2}}{\sum\limits_{i\in\mathcal{V}}\frac{1}{{\omega^+_i}^2}}\\
&=\frac{y_{i_0}\left(\frac{1}{\alpha^2}-\frac{1}{\beta^2}\right)}{\sum\limits_{i\in\mathcal{V}}\frac{1}{{\omega^+_i}^2}}
+ \frac{{\widehat\theta}(\omega)
\left[\sum\limits_{i\in\mathcal{V}}\frac{1}{{\omega^+_i}^2}-\left(\frac{1}{\alpha^2}-\frac{1}{\beta^2}\right)\right]}{\sum\limits_{i\in\mathcal{V}}\frac{1}{{\omega^+_i}^2}}.
\end{align*}
%Then the fact is proved with %$\xi={\left(\frac{1}{\alpha^2}-\frac{1}{\beta^2}\right)}/{\sum_{i\in\mathcal{V}}\frac{1}{{\omega'_i}^2}} %.$
%The case with $\omega_{i_0}=\alpha$ and $\omega'_{i_0}=\beta$ is obtained proceeding in similar way.
%
In Figures \ref{fig: regioni} and \ref{fig: regioni3} a picture of the various points is depicted when ${\widehat\theta}(\omega)>\Theta_{\omega}\cap\Lambda$.

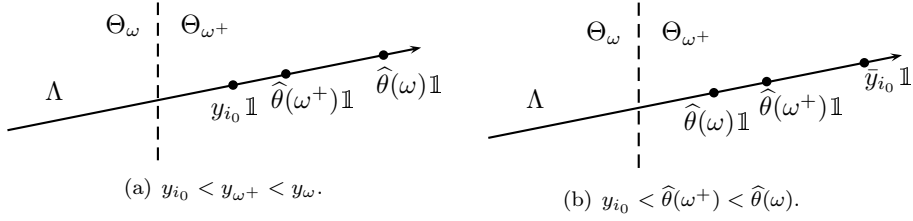
\begin{figure}[h]
\begin{center}
\subfigure[$y_{i_0}<y_{\omega^+}< y_{\omega}$.]
{\label{subfig: regioni1} %% label for first subfigure
\begin{minipage}[c]{0.45\columnwidth}
\centering

\begin{pspicture}(2,0.1)(8,2.5)
\psline{->}(2,0.7)(7.5,1.8)
\psline[linestyle=dashed]{-}(4,0.3)(4,2.4)
\psdots(5, 1.3)
\psdots(5.7, 1.45)
\psdots(7, 1.7)
\put(2.5,1.1){$ \Lambda$}\put(3.3,2){$\Theta_{\omega}$}\put(4.3,2){$\Theta_{\omega^+}$} \put(4.7,0.9){$y_{i_0}\mathbbm{1}$} \put(5.5,1){${\widehat\theta}(\omega^+)\mathbbm{1}$}\put(6.9,1.2){${\widehat\theta}(\omega)\mathbbm{1}$}
\end{pspicture}
\end{minipage}}
\subfigure[ $y_{i_0}<{\widehat\theta}(\omega^+)<{\widehat\theta}(\omega)$.]
{\label{subfig: regioni2} %% label for first subfigure
\begin{minipage}[c]{0.45\columnwidth}
\centering

\begin{pspicture}(1.7,0.1)(8,2.7)
\psline{->}(2,0.7)(7.5,1.8)
\psline[linestyle=dashed]{-}(4,0.3)(4,2.5)
\psdots(5, 1.3)
\psdots(5.7, 1.45)
\psdots(7, 1.7)
\put(2.5,1.1){$ \Lambda$}\put(3.3,2){$\Theta_{\omega}$}\put(4.3,2){$\Theta_{\omega^+}$} \put(4.6,0.8){${\widehat\theta}(\omega)\mathbbm{1}$} \put(5.6,1){${\widehat\theta}(\omega^+)\mathbbm{1}$}\put(7,1.4){$\bar y_{i_0}\mathbbm{1}$}
\end{pspicture}
\end{minipage}}

\caption{$\omega_{i_0}=\beta,{\widehat\theta}(\omega)\mathbbm{1} >\Theta_{\omega}\cap\Lambda$}\label{fig: regioni}
\end{center}
\end{figure}

\begin{figure}[h]\centering\begin{pspicture}(1,0)(8,2)
\psline{->}(2,0.7)(7.5,1.8)
\psline[linestyle=dashed]{-}(4,0.3)(4,2.3)
\psdots(5, 1.3)
\psdots(5.7, 1.45)
\psdots(3, 0.9)
\put(2.5,1.1){$ \Lambda$}\put(3.3,1.9){$\Theta_{\omega}$}\put(4.3,1.9){$\Theta_{\omega^+}$} \put(4.6,0.8){${\widehat\theta}(\omega)\mathbbm{1}$} \put(5.6,1){${\widehat\theta}(\omega^+)\mathbbm{1}$}\put(2.9,0.55){$\bar y_{i_0}\mathbbm{1}$}
\end{pspicture}
\caption{${\widehat\theta}(\omega)\mathbbm{1} >\Theta_{\omega}\cap\Lambda$}\label{fig: regioni3}
\end{figure}

A convexity argument and the use of (\ref{yi0}) now allow to
conclude.
\end{proof}

\begin{proposition}[Proof of F2)]\label{inv2}
If ${\widehat\theta}(\omega)\mathbbm{1}\notin\Theta_{\omega}$, then
there exists $t''$ such that $\widehat\theta^{(t)}\notin
\Theta_{\omega}^{\epsilon}$ $\forall t>t''$.
\end{proposition}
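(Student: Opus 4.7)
The plan is to combine a drift argument with an inductive use of Lemma \ref{lemma:y_i0}. Without loss of generality, assume $\widehat{\theta}(\omega)\mathbbm{1} > \Theta_{\omega}\cap\Lambda$ (the opposite case is symmetric), set $\theta_\omega^{\ast}:=\sup(\Theta_{\omega}\cap\Lambda)$, so that by hypothesis $\widehat{\theta}(\omega)-\theta_\omega^{\ast}>0$. Lemma \ref{lemma: moto parallelo} yields
\begin{equation*}
\overline{\widehat\theta}^{(t+1)}-\overline{\widehat\theta}^{(t)}=c^{(t)}\gamma^{(t)}\bigl(\widehat{\theta}(\omega)-\overline{\widehat\theta}^{(t)}\bigr)+r^{(t)},
\end{equation*}
with $c^{(t)}\in[\alpha^2/\beta^2,\beta^2/\alpha^2]$ and $r^{(t)}=o(\gamma^{(t)})$. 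On $\Theta_\omega^\epsilon$ one has $\overline{\widehat\theta}^{(t)}\leq\theta_\omega^{\ast}+O(\epsilon)$, so, shrinking $\epsilon$ if necessary, the bracket is bounded below by a positive constant. Hence there exist $c>0$ and $t^\star\geq t_\epsilon$ such that $\widehat{\theta}^{(t)}\in\Theta_\omega^\epsilon$ implies $\overline{\widehat\theta}^{(t+1)}-\overline{\widehat\theta}^{(t)}\geq c\,\gamma^{(t)}$ for every $t\geq t^\star$.

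Since $\gamma^{(t)}\geq 1/t$ gives $\sum_t\gamma^{(t)}=+\infty$, telescoping this inequality would force $\overline{\widehat\theta}^{(t)}\to+\infty$, contradicting $\overline{\widehat\theta}^{(t)}\leq\theta_\omega^{\ast}+O(\epsilon)$ on $\Theta_\omega^\epsilon$. Therefore there is a first exit time $t_1\geq t^\star$ at which $\widehat{\theta}^{(t_1)}\notin\Theta_\omega^\epsilon$, and, because the per-step increment has size $O(\gamma^{(t)})\to 0$, the exit is made by a small overshoot, so one can fix $\eta>0$ with $\overline{\widehat\theta}^{(t_1)}>\theta_\omega^{\ast}+\eta$.

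The most delicate step is to prevent return for $t\geq t_1$. I would exploit Lemma \ref{lemma:y_i0} inductively along the chain of future-followers $\omega<\omega^+<\omega^{++}<\cdots$ in $\Gamma$, proving that every attractor $\widehat{\theta}(\omega^r)\mathbbm{1}$ actually reachable by the trajectory satisfies $\widehat{\theta}(\omega^r)>\theta_\omega^{\ast}$. Indeed, at step $r$ either $\widehat{\theta}(\omega^r)\mathbbm{1}\in\Theta_{\omega^r}$, in which case Proposition \ref{inv} (F1) traps the iterate in $\mathcal{M}_{\omega^r}^\epsilon$, which is disjoint from $\Theta_\omega^\epsilon$ and the argument terminates; or $\widehat{\theta}(\omega^r)\mathbbm{1}>\Theta_{\omega^r}\cap\Lambda$, in which case Lemma \ref{lemma:y_i0} applied to the pair $(\omega^r,\omega^{r+1})$ gives $\widehat{\theta}(\omega^{r+1})>\Theta_{\omega^r}\cap\Lambda>\theta_\omega^{\ast}$, feeding the next step. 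Rewriting (\ref{parallel}) as a convex combination plus an $o(\gamma^{(t)})$ term, and noting that $c^{(t)}\gamma^{(t)}<1$ for $t$ large, $\overline{\widehat\theta}^{(t+1)}$ lies in the segment joining $\overline{\widehat\theta}^{(t)}$ and $\widehat\theta(\omega')$ up to an $o(\gamma^{(t)})$ error. Since both endpoints are strictly above $\theta_\omega^{\ast}$ (by the induction and by $\overline{\widehat\theta}^{(t_1)}>\theta_\omega^{\ast}+\eta$), an easy induction on $t\geq t_1$ yields $\overline{\widehat\theta}^{(t)}>\theta_\omega^{\ast}+\eta/2$. Combined with $\|\Omega\widehat{\theta}^{(t)}\|_2=O(\gamma^{(t)})$ from Corollary \ref{corol: retta}, this forces $\widehat{\theta}^{(t)}\notin\Theta_\omega^\epsilon$ for all $t\geq t_1$.

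The main obstacle is this last step: tracking the iterate through possibly many region transitions and ensuring that no overshoot ever brings $\overline{\widehat\theta}^{(t)}$ back across $\theta_\omega^{\ast}$. The combinatorial input is the iterated monotonicity provided by Lemma \ref{lemma:y_i0} together with F1, yielding a uniform positive gap $\min_{\omega'\in\Gamma,\,\omega'>\omega}\widehat\theta(\omega')-\theta_\omega^{\ast}>0$; the analytic input is the convex-combination form of (\ref{parallel}) combined with the vanishing step size $\gamma^{(t)}\to 0$.
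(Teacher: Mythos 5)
Your overall strategy -- upward drift from Lemma \ref{lemma: moto parallelo} plus non-summability of $\gamma^{(t)}$ to force an exit, Lemma \ref{lemma:y_i0} to control the attractor of the adjacent region, and a no-return argument -- is the same as the paper's. But there is a concrete gap at the pivot of your argument: you write that, because the per-step increment is $O(\gamma^{(t)})$, ``the exit is made by a small overshoot, so one can fix $\eta>0$ with $\overline{\widehat\theta}^{(t_1)}>\theta_\omega^{\ast}+\eta$.'' This does not follow; a small overshoot gives a margin of order $\gamma^{(t_1)}$, not a margin bounded below by a constant independent of $t_1$ (and, worse, leaving $\Theta_\omega^{\epsilon}$ need not even mean $\overline{\widehat\theta}^{(t_1)}>\theta_\omega^{\ast}$, since near the face the state can slip into $\Theta_{\omega^+}$ through its $\Omega$-component while its average is still below $\theta_\omega^{\ast}$). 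Your subsequent induction ``both endpoints of the convex combination are strictly above $\theta_\omega^{\ast}$, hence $\overline{\widehat\theta}^{(t)}>\theta_\omega^{\ast}+\eta/2$ for all $t\ge t_1$'' leans on this fixed $\eta$, so as written the no-return step is not established. The paper avoids the issue by never stopping at the first exit from $\Theta_\omega^{\epsilon}$: it introduces the buffer set $A$, the portion of $\Theta^{\epsilon}_{\omega}\cup\Theta^{\epsilon}_{\omega^+}$ lying below the level $\min\{\widehat\theta(\omega),\widehat\theta(\omega^+)\}-\tilde c/2$, shows via Lemma \ref{lemma:y_i0} and (\ref{parallel}) that the drift is uniformly bounded below by $\alpha^2\tilde c\,\gamma^{(t)}/4\beta^2$ on all of $A$ (whichever of the two regions the state is in), concludes that the trajectory climbs out of $A$ entirely, and only then runs a first-crossing-time contradiction using the fact that single steps are smaller than $\tilde c/2$. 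The fixed margin is thus built into the geometry of $A$ rather than claimed at the exit time. Your argument can be repaired along these lines (or by noting that the convex-combination recursion $m_{t+1}\ge(1-c^{(t)}\gamma^{(t)})m_t+c^{(t)}\gamma^{(t)}g/2$ keeps any nonnegative margin positive and grows it), but the repair is exactly the missing content.

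A secondary remark: your induction along the entire chain of future-followers $\omega<\omega^+<\omega^{++}<\cdots$ is unnecessary. To forbid re-entry into $\Theta_\omega^{\epsilon}$ you only need to control the band immediately above $\Theta_\omega$, which is covered by $\Theta_\omega\cup\Theta_{\omega^+}$ below the level used in $A$; any descent back into $\Theta_\omega^{\epsilon}$ would have to traverse that band, where the drift points upward. This is why the paper invokes Lemma \ref{lemma:y_i0} only once, for the single pair $(\omega,\omega^+)$. The chain induction is not wrong, but it adds bookkeeping (in particular, the trapping claim via F1 at an intermediate $\omega^r$ requires the trajectory to actually enter $\mathcal{M}^{\epsilon}_{\omega^r}$, which you would also have to justify) without buying anything for the statement being proved.
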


\begin{proof}
Suppose ${\widehat\theta}(\omega)\mathbbm{1}>\Theta_{\omega}\cap\Lambda$ (the case when is $<$ can be treated analogously).
Lemma \ref{lemma:y_i0} implies that
${\widehat\theta}(\omega^+)\mathbbm{1}>\Theta_{\omega}\cap\Lambda$.
Let $\tilde c$ be the constant given in (\ref{boundary}) and put $$A:=\{x\in \Theta^{\epsilon}_{\omega}\cup\Theta^{\epsilon}_{\omega^+}\;|\; \overline{x}\leq \alpha:=\min\{{\widehat\theta}(\omega),
{\widehat\theta}(\omega^+)\}-\tilde{c}/2\}.$$

Consider the relation (\ref{parallel}) and choose
$t_1$ in such a way that \begin{equation}\label{variation}\bar{\widehat{\theta}}^{(t+1)}-\bar{\widehat{\theta}}^{(t)}\leq c_2(\max\{y_i\}-\min\{y_i\})\gamma^{(t)}+r(t)<\tilde{c}/2\end{equation} and $|r(t)|<\alpha^2\tilde{c} \gamma^{(t)}/4\beta^2$ for
all $t\geq t_1$.
It also follows from (\ref{parallel}) that, if for some $t\geq t_1$
$\widehat{\theta}^{(t)}\in A$, then,
\begin{equation}\label{increase}\overline{\widehat\theta}^{(t+1)}\geq \overline{\widehat\theta}^{(t)}+ \alpha^2{\tilde{c}}\
\gamma^{(t)}/{4\beta^2}.\end{equation} Owing to the non-summability
of $\gamma^{(t)}$ it follows that if $\widehat{\theta}^{(t)}$
enters in $\Theta^{\epsilon}_{\omega}$ for some $t\geq t_1$, then,
in finite time it will enter into $A\setminus
\Theta^{\epsilon}_{\omega}$ and then it will finally exit $A$. In
particular there must exist $t_2\geq t_1$ such that
$\overline{\widehat{\theta}}^{(t_2)}> \alpha$. We now prove that
$\overline{\widehat{\theta}}^{(t_2)}{\mathbbm 1}>\Theta_\omega$ for every
$t\geq t_2$. If not there must exist a first time index $t_3>t_2$
such that $\overline{\widehat{\theta}}^{(t_3)}<\alpha-\tilde c$.
Because of (\ref{variation}), it must be that
$\overline{\widehat{\theta}}^{(t_3-1)}<\alpha-\tilde c/2$ but this
contradicts the fact that on $A$,
$\overline{\widehat{\theta}}^{(t)}$ is increasing (\ref{increase}).
\end{proof}

\subsection*{F3) Transitivity of $\bigcup_{\omega,\omega^+\in\{\alpha,\beta+\beta-\}^N}\mathcal{L}_{\omega,\omega^+}$}

We start with the following technical result concerning the general system (\ref{sistemadinamico}).
\begin{lemma}\label{quellodellenote}
Let $x^{(t)}$ be the sequence defined in \eqref{sistemadinamico}
and suppose that there exists a strictly increasing sequence of
switching times $\{\tau_k\}_{k=0}^{+\infty}$ such that
$$u_i^{(s+1)}=u_i^{(s)}\quad  \forall i\neq i_0\quad \text{and} \quad\forall s\in[\tau_0,+\infty[
$$
$$u_{i_0}^{(s)}= \begin{cases}
v'&\forall s\in I':=\bigcup_{k=0}^{+\infty} [\tau_{2k},\tau_{2k+1})\\
v''& \forall s\in I'':=\bigcup_{k=0}^{+\infty}
[\tau_{2k+1},\tau_{2k+2}).
\end{cases}
$$
Then, for every $\delta>0$, there exists $\bar t_\delta$ and two
sequences $a_\delta^{(t)}\geq 0$ and $b_\delta^{(t)}\leq\delta\gamma^{(t)}$, such that
\begin{align*}
 \left(\Omega\left(x^{(t+1)}-x^{(t)}\right)\right)_{i_0}\!\!\!=A_{\delta}^{(t)}\gamma^{(t)}
 \left(v'-v''\right) + b_\delta^{(t)}
\end{align*}
for $t\in I'$ with $t\geq \bar t_\delta$.
\end{lemma}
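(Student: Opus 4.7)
My plan is to exploit the linearity of \eqref{sistemadinamico} together with the explicit representation \eqref{soldyn}. For $s\geq\tau_0$, I write $u^{(s)} = u^* + \epsilon^{(s)} e_{i_0}$, where $u^*$ is the time-independent vector with $u^*_j := u^{(s)}_j$ for $j\neq i_0$ and $u^*_{i_0} := v''$, so that $\epsilon^{(s)} = (v'-v'')\mathbf{1}_{I'}(s)$. By linearity decompose $x^{(t)} = y^{(t)} + z^{(t)}$, where $y$ is driven by the constant input $u^*$ and $z$ is driven by $\epsilon^{(s)} e_{i_0}$ from zero initial condition. By Proposition \ref{input_costante} the $y$-component converges to $\overline{u^*}\mathbbm{1}$; a quasi-steady-state comparison (matching the leading order in the recurrence for $y$) together with the regularity $\gamma^{(t)} = \gamma^{(t+1)}+o(\gamma^{(t+1)})$ yields $\Omega(y^{(t+1)} - y^{(t)}) = o(\gamma^{(t)})$, a contribution that will be absorbed into $b_\delta^{(t)}$.

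For the $z$-part, the representation \eqref{soldyn} applied with zero initial condition gives
\begin{equation*}
\Omega z^{(t+1)} = \sum_{s=\tau_0}^{t}\alpha_s^{(t+1)}\epsilon^{(s)} P^{t-s}\Omega e_{i_0}, \qquad \alpha_s^{(t+1)} := \prod_{k=s+1}^{t}(1-\gamma^{(k)})\,\gamma^{(s)}.
\end{equation*}
Using $\alpha_s^{(t+1)} = (1-\gamma^{(t)})\alpha_s^{(t)}$ for $s\leq t-1$ and $\alpha_t^{(t+1)} = \gamma^{(t)}$, then extracting the $i_0$-th component, I obtain
\begin{equation*}
\bigl(\Omega(z^{(t+1)}-z^{(t)})\bigr)_{i_0} = \gamma^{(t)}\epsilon^{(t)}(\Omega e_{i_0})_{i_0} + \sum_{s=\tau_0}^{t-1}\alpha_s^{(t)}\epsilon^{(s)}\Bigl(\bigl[(1-\gamma^{(t)})P-I\bigr]P^{t-1-s}\Omega e_{i_0}\Bigr)_{i_0}.
\end{equation*}
Since $\epsilon^{(s)}\in\{0,\,v'-v''\}$, the right-hand side factors as $(v'-v'')$ times a scalar, yielding the announced form. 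Nonnegativity of $A_\delta^{(t)}$ is precisely where the positive-eigenvalue hypothesis on $P$ enters: $P$ symmetric positive semidefinite forces every $P^k$ to be PSD, hence $\Omega P^k\Omega$ is PSD with nonnegative $(i_0,i_0)$-diagonal entry, and these entries control the leading contributions in the expression above.

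The main obstacle is the quantitative bound $|b_\delta^{(t)}|\leq \delta\gamma^{(t)}$. I split the sum over $s$ into a remote tail $s < t-M$ and a recent window $s\in[t-M,\,t-1]$. Using $\|P^k\Omega\|\leq|\mu_2|^k$ together with \eqref{gammadis1}, the remote contribution is bounded by $C|\mu_2|^M\gamma^{(t)}$, which falls below $\tfrac{\delta}{2}\gamma^{(t)}$ once $M = M(\delta)$ is chosen large enough, independently of $t$. Within the recent window the regularity of $\gamma^{(t)}$ gives $\alpha_s^{(t+1)} - \alpha_s^{(t)} = o(\alpha_s^{(t)})$ uniformly in $s$, and since $M$ is fixed this part contributes at most $o(\gamma^{(t)})\leq\tfrac{\delta}{2}\gamma^{(t)}$ for $t\geq\bar t_\delta$ large. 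This careful bookkeeping of the switching history, together with the matching of near-past terms to reconstruct the $(v'-v'')$ coefficient without spurious sign changes, is the main technical effort of the lemma.
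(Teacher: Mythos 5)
Your setup is sound and runs parallel to the paper's: both arguments start from the explicit solution formula \eqref{soldyn}, isolate the part of the input that actually switches, and push everything else into the $o(\gamma^{(t)})$ remainder $b_\delta^{(t)}$. Your algebra for $\bigl(\Omega(z^{(t+1)}-z^{(t)})\bigr)_{i_0}$ is correct, and the tail-vs-window splitting for the error terms is workable. The gap is at the one step that is the entire point of the lemma: the sign of the coefficient $a_\delta^{(t)}$.

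In your own expression the coefficient of $(v'-v'')$ is
\begin{equation*}
\gamma^{(t)}(\Omega e_{i_0})_{i_0}+\sum_{s\in I'\cap[\tau_0,t-1]}\alpha_s^{(t)}\Bigl(\bigl[(1-\gamma^{(t)})P-I\bigr]P^{t-1-s}\Omega e_{i_0}\Bigr)_{i_0},
\end{equation*}
and by the spectral decomposition the $(i_0,i_0)$ entry of $\bigl[(1-\gamma^{(t)})P-I\bigr]P^{k}\Omega$ equals $\sum_{j\geq 2}\bigl[(1-\gamma^{(t)})\lambda_j-1\bigr]\lambda_j^{k}(\phi_j)_{i_0}^2$, which is \emph{nonpositive} when the eigenvalues are nonnegative. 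So every term of the sum over $s$ pushes the coefficient down, and the recent terms ($s$ close to $t$) are of full order $\gamma^{(t)}$, not negligible: e.g.\ the $s=t-1$ term alone is of size $\approx -\gamma^{(t)}$ and nearly cancels the positive instantaneous term. Your justification --- that $\Omega P^k\Omega$ is PSD with nonnegative diagonal --- refers to matrices that do not appear with that sign in your formula and therefore does not establish $a_\delta^{(t)}\geq 0$. What actually makes the coefficient nonnegative is a cancellation across the switching history: the paper performs an Abel summation so that the input enters only through its increments $u^{(s+1)}-u^{(s)}$ at the switching times $\tau_h$ (with alternating signs $(-1)^h$), and then pairs consecutive epochs $\tau_{2k-1},\tau_{2k}$ to get, for each $j\geq 2$, terms of the form $\lambda_j^{t-\tau_{2k}}\frac{F(t-1)}{F(\tau_{2k}-1)}\bigl(1-\lambda_j^{\tau_{2k}-\tau_{2k-1}}\frac{F(\tau_{2k}-1)}{F(\tau_{2k-1}-1)}\bigr)$, which are positive because $\lambda_j\geq 0$ \emph{and} because the ratios $F^{(s+1)}/F^{(s)}$ are eventually below $1+\epsilon$ with $\lambda_2(1+\epsilon)<1$ (a second ingredient your sketch never invokes, and which uses the regularity assumption on $\gamma^{(t)}$ in an essential way). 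Without this summation by parts and pairing, the claim $a_\delta^{(t)}\geq 0$ is unsupported; you would need to add that step to complete the proof.
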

\begin{proof}
Let $\phi_j\in\R^{\mathcal V}$ be an orthonormal basis of  eigenvectors for $P$ relative to the eigenvalues
$1=\lambda_1>\lambda_2\geq\cdots\geq \lambda_N\geq 0$. Also assume we have chosen $\phi_1=N^{-1/2}\mathbbm{1}$.

We put
$$F^{(t)}:=\frac{\prod_{k=0}^{t}
\left(1-\gamma^{(k)}\right)}{\gamma^{(t)}}$$
and we notice that
$$\frac{F^{(s+1)}}{F^{(s)}}=(1-\gamma^{(s+1)})\frac{\gamma^{(s)}}{\gamma^{(s+1)}}\to 1\,,\;{\rm for}\; s\to +\infty.$$
Fix $\epsilon$ in such a way that $\lambda_2(1+\epsilon)<1$ and choose $s_0$ such that
$$\frac{F^{(s+1)}}{F^{(s)}}\leq 1+\epsilon\,,\;\forall s\geq s_0.$$

Let $t_0\geq s_0$ to be fixed later. From (\ref{soldyn}) we can
write

\begin{align}
 &\Omega(x^{(t+1)}-x^{(t)})=\nonumber\\
 &=\prod_{s=t_0}^{t-1}\left(1-\gamma^{(s)}\right)\left[\left(1-\gamma^{(t)}\right)P-I\right]P^{t-t_0}\Omega x^{(t_0)}\nonumber\\
& +\sum_{s=t_0}^{t}\prod_{k=s+1}^{t}
\left(1-\gamma^{(k)}\right)\gamma^{(s)}P^{t-s}\Omega u^{(s)}-
\sum_{s=t_0}^{t-1}\prod_{k=s+1}^{t-1}
\left(1-\gamma^{(k)}\right)\gamma^{(s)}P^{t-s-1}\Omega
u^{(s) }v\nonumber\\
&
=\prod_{s=t_0}^{t-1}\left(1-\gamma^{(s)}\right)\left[\left(1-\gamma^{(t)}\right)P-I\right]P^{t-t_0}\Omega
x^{(t_0)}\nonumber\\
&
+\gamma^{(t)}\sum_{s=t_0-1}^{t-1}P^{t-s-1}\frac{F^{(t)}}{F^{(s+1)}}\Omega
u^{(s+1)}-\gamma^{(t-1)}
\sum_{s=t_0}^{t-1}P^{t-s-1}\frac{F^{(t-1)}}{F^{(s)}}\Omega u^{(s)}\nonumber\\
&\label{repr1}
=\prod_{s=t_0}^{t-1}\left(1-\gamma^{(s)}\right)\left[\left(1-\gamma^{(t)}\right)P-I\right]P^t\Omega
x^{(t_0)}\\
&\label{repr2}+(\gamma^{(t)}-\gamma^{(t-1)})\sum_{s=t_0}^{t-1}P^{t-s-1}\frac{F^{(t-1)}}{F^{(s)}}\Omega u^{(s)}
+\gamma^{(t)}P^{t-t_0}\frac{F^{(t-1)}}{F^{(t_0)}}\Omega
u^{(t_0)}\\
&\label{repr3}+\gamma^{(t)}\sum_{s=t_0}^{t-1}P^{t-s-1}\left(\frac{F^{(t)}}{F^{(s+1)}}-\frac{F^{(t-1)}}{F^{(s)}}\right)\Omega
u^{(s+1)}\\
&\label{repr4}+
\gamma^{(t)}\sum_{s=t_0}^{t-1}P^{t-s-1}\frac{F^{(t-1)}}{F^{(s)}}\Omega\left( u^{(s+1)}-u^{(s)}\right).
\end{align}

It follows from the assumptions on $P$, the assumptions on $\gamma^{(t)}$ and relation (\ref{gammadis1}) that the terms
(\ref{repr1}) and (\ref{repr2}) are both $o(\gamma^{(t)})$ for $t\to +\infty$.
%\begin{equation}\label{estim1}\left|\left|\prod_{s=t_0}^{t-1}\left(1-\gamma^{(s)}\right)\left[\left(1-\gamma^{(t)}\right)P-I\right]
%P^{t-t_0}\Omega x^{(t_0)}\right|\right|_2\leq
%t_0\gamma^{(t)}(||P||+1)\lambda_2^{t-t_0}||x^{(t_0)}||_2\end{equation}
We now estimate (\ref{repr3}):
\begin{align}
&\left|\left|\sum_{s=t_0}^{t-1}P^{t-s-1}
\left(\frac{F^{(t)}}{F^{(s+1)}}-\frac{F^{(t-1)}}{F^{(s)}}\right)\Omega u^{(s+1)}\right|\right|_2=\nonumber\\
&\left|\left|\sum_{s=t_0}^{t-1}\frac{F^{(t-1)}}{F^{(s)}}\left(\frac{F^{(t)}}{F^{(t-1)}}\frac{F^{(s)}}{F^{(s+1)}}-1\right)P^{t-s-1}\Omega u^{(s+1)}\right|\right|_2\leq\nonumber\\
&\sum_{s=t_0}^{t-1}[\lambda_2(1+\epsilon)]^{t-s-1}\left|\left(\frac{F^{(s)}}{F^{(s+1)}}-1\right)\right|K
\leq\nonumber\\
&\label{estim2}\frac{K}{1-\lambda_2(1+\epsilon)}\beta_{t_0}
\end{align}
where $$K=\max||u^{(s)}||_2\,,\quad
\beta_{t_0}:=\sup\limits_{t\geq s\geq
t_0}\left|\left(\frac{F^{(t)}}{F^{(t-1)}}\frac{F^{(s)}}{F^{(s+1)}}-1\right)\right|.$$

We now concentrate on the component $i_0$ of the term
(\ref{repr4}). Using the spectral decomposition of $P$ and the
assumptions on $u^{(t)}$, we can write,
\begin{align}
&\left[\sum_{s=t_0}^{t-1}P^{t-s-1}\frac{F^{(t-1)}}{F^{(s)}}\Omega\left(u^{(s+1)}-u^{(s)}\right)\right]_{i_0}=\\
&\sum\limits_{j\geq 2}(\phi_j)_{i_0}^2\sum\limits_{h: t_0\leq\tau_h\leq
t-1}\lambda_j^{t-\tau_h}\frac{F(t-1)}{F(\tau_h-1)}(-1)^h(v'-v'').
\end{align}
If $t\in I'$, the above expression can be rewritten as
$$\sum\limits_{j\geq 2}(\phi_j)_{i_0}^2\sum\limits_{k: t_0\leq\tau_{2k}\leq t-1}\left[\lambda_j^{t-\tau_{2k}}\frac{F(t-1)}{F(\tau_{2k}-1)}-
\lambda_j^{t-\tau_{2k-1}}\frac{F(t-1)}{F(\tau_{2k-1}-1)}\right](v'-v'').$$
Notice that
$$\lambda_j^{t-\tau_{2k}}\frac{F(t-1)}{F(\tau_{2k}-1)}-
\lambda_j^{t-\tau_{2k-1}}\frac{F(t-1)}{F(\tau_{2k-1}-1)}=
\lambda_j^{t-\tau_{2k}}\frac{F(t-1)}{F(\tau_{2k}-1)}
\left(1-\lambda_j^{\tau_{2k}-\tau_{2k-1}}\frac{F(\tau_{2k}-1)}{F(\tau_{2k-1}-1)}\right)>0$$
(we have used the fact that $0\leq \lambda_j(1+\epsilon)<1$ for all $j\geq 2$). To complete the proof now proceed as follows. For a fixed $\delta
>0$, choose $t_0\geq s_0$ in such a way that (\ref{estim2}) is
below $\delta/2$. Then, fix $\bar t_\delta\geq t_0$ in such a way
that the summation of (\ref{repr1}) and (\ref{repr2}) is below
$\delta\gamma^{(t)}/2$ for $t\geq \bar t_\delta$. It is now sufficient  to define
$$a_{\delta}^{(t)}:=\sum\limits_j(\phi_j)_{i_0}^2\sum\limits_{k: t_0\leq\tau_{2k}\leq t-1}\left[\lambda_j^{t-\tau_{2k}-1}\frac{F(t-1)}{F(\tau_{2k}-1)}-
\lambda_j^{t-\tau_{2k-1}}\frac{F(t-1)}{F(\tau_{2k-1}-1)}\right]$$
and $b_{\delta}^{(t)}$ equal to the sum of the terms (\ref{repr1}), (\ref{repr2}), and (\ref{repr3}),
\end{proof}

\begin{proposition}[Proof of F3)]\label{limbo1}
There exists $t'''\in\mathbb{N}$ such that
\begin{equation}\label{trans}\widehat{\theta}^{(t)}\not\in\bigcup_{\omega,\omega'\in\{\alpha,\beta\}^N}\mathcal{L}_{\omega,\omega^+}\end{equation}
for all $t>t'''$.
 \end{proposition}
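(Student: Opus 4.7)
I would argue by contradiction: suppose there exists an increasing sequence $t_n\to\infty$ with $\widehat{\theta}^{(t_n)}\in\bigcup_{\omega\in\Gamma}\mathcal{L}^{\epsilon}_{\omega,\omega^+}$. Since $\Gamma$ is finite, after extracting a subsequence I can fix a single pair $(\omega,\omega^+)$ so that $\widehat{\theta}^{(t_n)}\in\mathcal{L}^{\epsilon}_{\omega,\omega^+}$ for all $n$. Let $i_0$ be the unique coordinate where $\omega_{i_0}\neq\omega^+_{i_0}$. Inside $\mathcal{L}^{\epsilon}_{\omega,\omega^+}$, the consensus value $\bar{\widehat\theta}^{(t)}$ is trapped in a small interval of $\Lambda$ sandwiched between $\mathcal{M}^{\epsilon}_{\omega}\cap\Lambda$ and $\mathcal{M}^{\epsilon}_{\omega^+}\cap\Lambda$; combined with Corollary \ref{corol: retta}, this forces, for every $i\neq i_0$, the classification $\widehat{\omega}^{(t)}_i$ to coincide with the common value of $\omega_i=\omega^+_i$ for all sufficiently large $t$ while $\widehat{\theta}^{(t)}$ remains in the limbo zone. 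Hence all possible variation is concentrated at coordinate $i_0$.

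Next I split into two cases according to the long-term behaviour of the scalar classification $\widehat{\omega}^{(t)}_{i_0}$. If $\widehat{\omega}^{(t)}_{i_0}$ is eventually constant along our subsequence, then $\widehat{\omega}^{(t)}$ is eventually either $\omega$ or $\omega^+$, so $\widehat{\theta}^{(t)}\in\Theta_{\omega}$ or $\widehat{\theta}^{(t)}\in\Theta_{\omega^+}$. Propositions \ref{inv} and \ref{inv2} then apply: either ${\widehat\theta}(\omega)\mathbbm{1}\in\mathcal{M}^{\epsilon}_{\omega}$, and the asymptotic invariance of $\mathcal{M}^{\epsilon}_{\omega}$ together with the drift equation (\ref{parallel}) drives $\bar{\widehat{\theta}}^{(t)}$ into $\mathcal{M}^{\epsilon}_{\omega}\cap\Lambda$, or ${\widehat\theta}(\omega)\mathbbm{1}\notin\mathcal{M}^{\epsilon}_{\omega}$ and the trajectory is expelled from $\Theta^{\epsilon}_{\omega}$ altogether. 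In both subcases $\widehat{\theta}^{(t)}\notin\mathcal{L}^{\epsilon}_{\omega,\omega^+}$ for large $t$, a contradiction.

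The remaining case is that $\widehat{\omega}^{(t)}_{i_0}$ switches between $\alpha$ and $\beta$ infinitely many times. I apply Lemma \ref{quellodellenote} separately to $\mu^{(t)}$ and $\nu^{(t)}$, with the piecewise constant inputs $u_{i_0}^{(t)}=y_{i_0}/[\widehat{\omega}_{i_0}^{(t)}]^2$ and $1/[\widehat{\omega}_{i_0}^{(t)}]^2$, whose two values differ by the nonzero quantities $y_{i_0}(\alpha^{-2}-\beta^{-2})$ and $\alpha^{-2}-\beta^{-2}$ respectively. The lemma furnishes, on the "on'' intervals $I'$, a fixed-sign one-step increment of $(\Omega\mu^{(t)})_{i_0}$ and $(\Omega\nu^{(t)})_{i_0}$ of order $a_\delta^{(t)}\gamma^{(t)}$ up to an error of size $\delta\gamma^{(t)}$. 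By choosing $\delta$ small enough relative to the positive lower bound on $a_\delta^{(t)}$, the signed increments on $I'$ strictly dominate any possible cancellation from $I''$ across each switching cycle, since a switch can occur only when $|y_{i_0}-\widehat{\theta}^{(t)}_{i_0}|=\delta\pm O(\gamma^{(t)})$, and the relation $\widehat{\theta}^{(t)}_{i_0}=\bar{\widehat\theta}^{(t)}+(\Omega\widehat\theta^{(t)})_{i_0}$ combined with Lemma \ref{rapportomedie} converts the accumulated drifts of $\Omega\mu^{(t)},\Omega\nu^{(t)}$ into a persistent one-sided drift of $(\Omega\widehat\theta^{(t)})_{i_0}$. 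This forces $\widehat\theta^{(t)}_{i_0}$ to cross the threshold $y_{i_0}\pm\delta$ only from one side for all large $t$, contradicting the assumption of infinitely many switches.

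The main obstacle is the third case: one must control carefully the sign and the magnitude of the drift produced by Lemma \ref{quellodellenote} through an entire cycle $[\tau_{2k},\tau_{2k+2})$, ruling out cancellation between the $I'$ and $I''$ contributions; this is where the positivity of the eigenvalues of $P$ assumed in Theorem \ref{teo:convergenza} plays its essential role, because it guarantees that the coefficient $a_\delta^{(t)}$ in the lemma stays bounded away from $0$ on $I'$ independently of the switching pattern. Once that quantitative estimate is in hand, a choice of $\epsilon$ small enough compared to the resulting drift rate produces the required contradiction and establishes (\ref{trans}).
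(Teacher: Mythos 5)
Your overall architecture matches the paper's: argue by contradiction, reduce to a single pair $(\omega,\omega^+)$ differing at one coordinate $i_0$, dispose of the case where $\widehat{\theta}^{(t)}$ eventually stays on one side via the drift relation \eqref{parallel} (equivalently Propositions \ref{inv} and \ref{inv2}), and then confront the infinite-switching case with Lemma \ref{quellodellenote}. However, there is a genuine gap in your treatment of the infinite-switching case: you make the contradiction rest on the claim that the positivity of the eigenvalues of $P$ forces the coefficient $a_\delta^{(t)}$ of Lemma \ref{quellodellenote} to be \emph{bounded away from zero} on $I'$, independently of the switching pattern. The lemma only delivers $a_\delta^{(t)}\geq 0$, and that is all the eigenvalue positivity buys (each bracket $\lambda_j^{t-\tau_{2k}}\frac{F(t-1)}{F(\tau_{2k}-1)}\bigl(1-\lambda_j^{\tau_{2k}-\tau_{2k-1}}\frac{F(\tau_{2k}-1)}{F(\tau_{2k-1}-1)}\bigr)$ is nonnegative, but the outer factor $\lambda_j^{t-\tau_{2k}}$ makes the whole sum arbitrarily small when the time elapsed since the last switch is large). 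So the disagreement term cannot be the engine of a persistent one-sided drift, and your argument as stated does not close.

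The paper's mechanism is different and you should note where the actual drift comes from. Since the trajectory is assumed trapped in $\mathcal{L}_{\omega,\omega^+}$, the expulsion argument of Proposition \ref{inv2} combined with Lemma \ref{lemma:y_i0} forces $\widehat\theta(\omega)\mathbbm{1}<\Theta_\omega\cap\Lambda$, at distance at least $\tilde c$ by \eqref{boundary}; hence the \emph{average-component} drift $c^{(t)}\gamma^{(t)}\bigl(\widehat\theta(\omega)-\overline{\widehat\theta}^{(t)}\bigr)$ from Lemma \ref{lemma: moto parallelo} is bounded above by $-\alpha^2\tilde c\,\gamma^{(t)}/4\beta^2$, i.e.\ strictly negative at scale $\gamma^{(t)}$. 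Lemma \ref{quellodellenote} is then used only to verify a \emph{sign}: $a_\delta^{(t)}(v'-v'')(y_{i_0}-\overline{\widehat\theta}^{(t)})\leq 0$, so the switching-induced term cannot cancel the main drift, and after absorbing the $o(\gamma^{(t)})$ errors one gets $\widehat{\theta}^{(t+1)}_{i_0}-\widehat{\theta}^{(t)}_{i_0}\leq -\alpha^2\tilde c\,\gamma^{(t)}/8\beta^2<0$ for all large $t\in I'$. This monotone decrease means the trajectory can never cross out of $\Theta_\omega$ at coordinate $i_0$, contradicting infinite switching. To repair your proof you would need to replace your quantitative lower bound on $a_\delta^{(t)}$ (which is false in general) with this identification of the dominant drift, which in turn requires invoking Lemma \ref{lemma:y_i0} — a step absent from your case-two analysis.
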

\begin{proof} In view of the results in Propositions \ref{inv} and
\ref{inv2}, and the fact that
$\widehat\theta^{(t+1)}-\widehat\theta^{(t)}$ goes to $0$ for
$t\to +\infty$, if (\ref{trans}) negation of (\ref{trans}) yields that there exists
$\omega\in\Gamma$ such that
$\widehat{\theta}^{(t)}\in\mathcal{L}_{\omega,\omega^+}$ for $t$
large enough. Now, if
$\widehat{\theta}^{(t)}\in\mathcal{L}_{\omega,\omega^+}\cap\Theta_{\omega}$
(or if
$\widehat{\theta}^{(t)}\in\mathcal{L}_{\omega,\omega^+}\cap\Theta_{\omega^+}$)
for $t$ sufficiently large, a straightforward application of
(\ref{parallel}) would imply that $\widehat{\theta}^{(t)}$ would
necessarily exit $\mathcal{L}_{\omega,\omega^+}$ in finite time.
Therefore, it must hold that $\widehat{\theta}^{(t)}$ keeps
switching, for large $t$, between
$\mathcal{L}_{\omega,\omega^+}\cap\Theta_{\omega}$ and
$\mathcal{L}_{\omega,\omega^+}\cap\Theta_{\omega^+}$.

From Lemma \ref{rapportomedie} and Corollary \ref{corol: retta} we can write
\begin{align*}
&\widehat{\theta}^{(t+1)}-\widehat{\theta}^{(t)}=\\
&\left(\frac{{\bar{\widehat{\theta}}}^{(t+1)}}{{\bar\nu}^{(t+1)}}-\frac{{\bar{\widehat{\theta}}}^{(t)}}{{\bar\nu}^{(t)}}\right){\mathbbm 1}+\frac{1}{{\bar\nu}^{(t)}}
\left[\Omega\left(\mu^{(t+1)}-\mu^{(t)}\right)-
\frac{{\bar\mu}^{(t)}}{{\bar\nu}^{(t)}}\Omega\left(\nu^{(t+1)}-\nu^{(t)}\right)\right]+o(\gamma^{(t)})
\end{align*}
%where $r(t)=o(\gamma^{(t)})$ for $t\to +\infty$.
Define now
$$I':=\{t\,|\, \widehat\theta^{(t)}\in\Theta_{\omega}\}\,,\quad I'':=\{t\,|\,
\widehat\theta^{(t)}\in\Theta_{\omega^+}\}$$ and put
$v'=1/\omega_{i_0}^2$ and $v''=1/\omega_{i_0}^{+2}$. 
From Lemma  \ref{lemma: moto parallelo}, and applying Lemma \ref{quellodellenote} to $\mu^{(t)}$ and $\nu^{(t)}$, 
we get that for
$t\in I'$ sufficiently large, it holds

\begin{equation}\label{reprfinal}
\widehat{\theta}^{(t+1)}_{i_0}-\widehat{\theta}^{(t)}_{i_0}=
c^{(t)}\gamma^{(t)}(\bar
y_\omega-\overline{\widehat{\theta}}^{(t)})+\frac{1}{\bar\nu^{(t)}}\gamma^{(t)}a^{(t)}_\delta
\left(v'-v''\right)(
y_{i_0}-\overline{\widehat{\theta}}^{(t)})+a^{(t)}_\delta+r^{(t)}.
\end{equation}
%where $r(t)=o(\gamma^{(t)})$ for $t\to +\infty$.
If ${\widehat\theta}(\omega)>\Theta_{\omega}\cap\Lambda$,
then also, by Lemma \ref{lemma:y_i0},
$\bar{y}_{\omega^+}>\Theta_{\omega}\cap\Lambda$. This, using
(\ref{parallel}), would imply that $\widehat{\theta}^{(t)}$ would
necessarily exit $\mathcal{L}_{\omega,\omega^+}$ in finite time.
Therefore, we must have
${\widehat\theta}(\omega)<\Theta_{\omega}\cap\Lambda$. Hence, $\bar
y_\omega-\overline{\widehat{\theta}}^{(t)}<0$. Moreover, it is
easy to check that in any case $\left(v'-v''\right)(
y_{i_0}-\overline{\widehat{\theta}}^{(t)})<0$. Recall now the
definition of the constant $\tilde c$ in (\ref{boundary}) and
notice that, since $\widehat{\theta}^{(t)}\in
\mathcal{L}_{\omega,\omega^+}$, $$c^{(t)}\gamma^{(t)}(\bar
y_\omega-\overline{\widehat{\theta}}^{(t)})\leq -\alpha^2\tilde
c/4\beta^2\gamma^{(t)}.$$ Choose now $\delta$ such that $\delta <
\alpha^2\tilde c/16\beta^2$ and $\bar t\geq \bar t_\delta$ such
that $r(t)<\delta\gamma^{(t)}$. It then follows from
(\ref{reprfinal}) that for $t\in I'$ and $t\geq \bar t$, it holds
$$\widehat{\theta}^{(t+1)}_{i_0}-\widehat{\theta}^{(t)}_{i_0}\leq -\alpha^2\tilde
c/8\beta^2\gamma^{(t)}<0.$$ This says that as long as
$\widehat{\theta}^{(t)}\in\Theta_{\omega}$, its $i_0$-th component
decreases. But this entails that $\widehat{\theta}^{(t)}$ can
never leave $\Theta_{\omega}$, which contradicts the infinite
switching assumption and thus implies the thesis.
\end{proof}

\subsection{Proof of Theorem \ref{teo:convergenza}}
Propositions \ref{inv},
\ref{inv2}, and \ref{limbo1} imply that there exists $\widehat\omega^{IA}\in\{\alpha,\beta\}^{\mathcal V}$ such that $\widehat\theta^{(t)}\in \Theta_{\widehat\omega^{IA}}$ for $t$ sufficiently large. This immediately implies that $\widehat\omega^{(t)}=\widehat\omega^{IA}$ for $t$ sufficiently large.
Corollary \ref{convergenza_finale} implies that $\widehat{\theta}^{IA}=\lim_{t\rightarrow+\infty}\widehat{\theta}^{(t)}=\hat\theta(\widehat\omega^{IA})$
Finally, since $\hat\theta(\widehat\omega^{IA})\in \Theta_{\widehat\omega^{IA}}$, we also have that $\widehat\omega^{IA}=\hat\omega(\widehat{\theta}^{IA})$.

%This proves point 1).
%From Proposition \ref{inv2} and \ref{limbo1} we have that if  ${\widehat\theta}(\omega)\mathbbm{1}\notin\Theta_{\omega}$, then
 %$\widehat\theta^{(t)}$ will be definitely  outside $\Theta_{\omega}$.
%Recalling that the (finite) union of the regions $\Theta_{\omega}$, $\omega \in \{\alpha, \beta\}^N$, is a partition of $\mathbb{R}^N$,
%$\widehat\theta^{(t)}$ definitely must belong to a region $\Theta_{\omega}$ containing ${\widehat\theta}(\omega)\mathbbm{1}$ and Proposition \ref{inv} guarantees that $\widehat{\omega}^{(t)}$ stabilizes at $\widehat{\omega}$. Finally, from Corollary \ref{convergenza_finale} we get that $\widehat{\theta}^{IA}=\lim_{t\rightarrow+\infty}\widehat{\theta}^{(t)}=\bar y_{\widehat{\omega}^{IA}}.$
%This proves point 1).

%\item Let $\epsilon>0$ and ${\theta}$ be such that $||\widehat{\theta}^{IA}-{\theta} ||<\epsilon$ and $\widehat{\omega}(\theta)$ defined in \eqref{partial max2}. Then we have
%\begin{align*}
%L(\theta,\omega)\leq \max_{\omega\in\{\alpha,\beta\}^N} L(\theta,\omega)=L(\theta, \widehat{\omega}(\theta)).
%\end{align*}
%If $\epsilon $ is sufficiently small then $L(\theta, \widehat{T}(\theta))=L(\theta,\widehat{T}^{IA})$ and
%\begin{align*}
%L(\theta,\omega)\leq L(\theta,\widehat{\omega}^{IA})\leq \max_{\theta\in\mathbb{R}}L(\theta,\widehat{\omega}^{IA})=L(\widehat{\theta}^{IA},\widehat{\omega}^{IA})
%\end{align*}
%where the last equality follows from \eqref{partial max1}.
%\end{enumerate}

\section{Proof of concentration results}\label{appB}
\subsection{Preliminaries}
For a more efficient parametrization of the stationary points, we introduce the notation: 
\begin{equation}\label{y omega}\omega\in\{\alpha,\beta\}^{\mathcal V}\quad  \Theta_{\omega}:=\{x\in\R\,|\, |x-y_i|<\delta\, \Leftrightarrow\, \omega_i=\alpha\}\end{equation}
It is then straightforward to check from (\ref{stationary1}) that the set of local maxima ${\mathcal S}_N$ can be represented as
\begin{equation}\label{stationary2} 
{\mathcal S}_N:=\{\theta =\widehat\theta(\omega)\,|\, \omega\in\{\alpha,\beta\}^{\mathcal V},\; \widehat\theta(\omega)\in\Theta_{\omega}\}.
\end{equation}
Since,
\begin{equation}\label{not empty}\Theta_{\omega}\neq \emptyset\;\Leftrightarrow\; \omega=\widehat \omega(x)\,\hbox{\rm for some}\; x\in\R\end{equation}
for analysing the set ${\mathcal S}_N$ we can restrict to consider $\omega$ of type $\omega=\widehat \omega(x)$. Consider the sequence of random functions
$\gamma_N(x):= \widehat\theta(\widehat \omega(x))$.

From \eqref{partial max}, applying the strong law of large numbers, we immediately get that
\begin{equation}\label{y inf}\lim_{N\rightarrow+\infty}\gamma_N(x)\stackrel{\mathrm{a.s.}}{=}\gamma_{\infty}(x):=\frac{\mathbb{E}(y_1\widehat \omega(x)_1^{-2})}{\mathbb{E}( \widehat \omega(x)_1^{-2})}.\end{equation}
Something stronger can indeed be said by a standard use of Chernoff bound \cite{Borkar}:
\begin{lemma}\label{lemma1}  For every $\epsilon >0$, there exists $q<1$ such that, for any $x\in\R$,
\begin{equation*}\mathbb{P}\left(\left|\gamma_N(x)-\gamma_{\infty}(x)\right|>\epsilon\right)\leq 2q^N.\end{equation*}
\end{lemma}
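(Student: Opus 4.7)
The plan is to view $\gamma_N(x)$ as a ratio of empirical means of i.i.d.\ random variables (for each fixed $x$) and then combine Chernoff/Hoeffding bounds on the numerator and denominator. Set
$$a_j(x):=y_j\,\widehat\omega(x)_j^{-2},\qquad b_j(x):=\widehat\omega(x)_j^{-2},$$
so that, writing $A_N=N^{-1}\sum_j a_j(x)$, $B_N=N^{-1}\sum_j b_j(x)$, $A_\infty=\E a_1(x)$, $B_\infty=\E b_1(x)$, we have $\gamma_N(x)=A_N/B_N$ and $\gamma_\infty(x)=A_\infty/B_\infty$. Crucially, $b_j(x)\in\{\alpha^{-2},\beta^{-2}\}$, so both $B_N$ and $B_\infty$ lie in the interval $[\beta^{-2},\alpha^{-2}]$, which is bounded away from $0$ uniformly in $x$.

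The second step is the algebraic identity
$$\gamma_N(x)-\gamma_\infty(x)=\frac{A_N-A_\infty}{B_N}-\frac{A_\infty}{B_N B_\infty}\bigl(B_N-B_\infty\bigr),$$
which, using $B_N\ge\beta^{-2}$ and $|A_\infty|\le\alpha^{-2}\E|y_1|=:M_1$, yields the deterministic bound
$$|\gamma_N(x)-\gamma_\infty(x)|\le \beta^2\,|A_N-A_\infty|+\beta^4 M_1\,|B_N-B_\infty|.$$
Since $M_1$ depends only on $\alpha,\beta,p,\theta^*$, the two coefficients are independent of $x$. Hence, by a union bound, for every $\epsilon>0$,
$$\P\bigl(|\gamma_N(x)-\gamma_\infty(x)|>\epsilon\bigr)\le \P\!\left(|A_N-A_\infty|>\tfrac{\epsilon}{2\beta^2}\right)+\P\!\left(|B_N-B_\infty|>\tfrac{\epsilon}{2\beta^4 M_1}\right).$$

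The third step is to estimate the two probabilities exponentially. For $B_N$ the random variables $b_j(x)$ take values in $[\beta^{-2},\alpha^{-2}]$, so Hoeffding's inequality gives an exponential bound of the form $2\exp(-c_1 N\epsilon^2)$ with $c_1>0$ independent of $x$. For $A_N$ we note that $|a_j(x)|\le \alpha^{-2}|y_j|$, and $y_j$ has Gaussian tails (it is a mixture of $\mathsf{N}(\theta^*,\alpha^2)$ and $\mathsf{N}(\theta^*,\beta^2)$); therefore $a_j(x)-\E a_j(x)$ is sub-Gaussian with a proxy variance depending only on $\alpha,\beta,\theta^*$ (and not on $x$). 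The classical Cramér--Chernoff bound then gives $\P(|A_N-A_\infty|>t)\le 2\exp(-c_2 N t^2)$ with $c_2>0$ uniform in $x$. Combining the two bounds yields
$$\P\bigl(|\gamma_N(x)-\gamma_\infty(x)|>\epsilon\bigr)\le 4\exp(-c\epsilon^2 N),$$
and absorbing the factor $4$ into a slightly larger exponent gives the stated $2q^N$ with $q=q(\epsilon)<1$.

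The main technical point is guaranteeing the uniformity in $x$ of the sub-Gaussian parameter of $a_j(x)$: one must check that, although $a_j(x)$ depends on $x$ through the random indicator $\widehat\omega(x)_j$, the pointwise inequality $|a_j(x)|\le\alpha^{-2}|y_j|$ reduces the control of its moment generating function to that of the fixed, $x$-independent random variable $|y_j|$, whose MGF is finite everywhere. All remaining steps are routine.
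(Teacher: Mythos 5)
Your proof is correct and follows essentially the same route as the paper's: both decompose $\gamma_N(x)$ as a ratio of empirical means of $a_j(x)=y_j\widehat\omega(x)_j^{-2}$ and $b_j(x)=\widehat\omega(x)_j^{-2}$, apply a Chernoff-type bound to the (sub-Gaussian, unbounded) numerator and Hoeffding to the bounded denominator, combine via a union bound on the ratio, and obtain uniformity in $x$ from the fact that the relevant moments are bounded uniformly in $x$. The only cosmetic difference is that you make the sub-Gaussianity of $a_j(x)$ and the absorption of the multiplicative constant explicit, which the paper leaves implicit.
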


\begin{proof}
Let $a_i=y_i \omega_N(x)_i^{-2}$ and $b_i= \omega_N(x)_i^{-2}$ with $i\in\{1,\ldots,N\}$ and let $a$ and $b$ denote the corresponding expected values.

By Chernoff's bound and by Hoeffding's inequality we have, respectively, that
\begin{align*}&\mathbb{P}\left(\left|\frac{1}{N}\sum _{i=1}^N a_i-a\right|\geq\epsilon_1\right)\leq q_1^N\qquad\mathbb{P}\left(\left|\frac{1}{N}\sum _{i=1}^N b_i-b\right|\geq\epsilon_2\right)\leq 2 q_2^N\end{align*}
with \begin{equation}\label{q1q2}q_1=e^{-\frac{\alpha^2\epsilon_1^2}{4}}\qquad q_2=e^{-2\epsilon_2^2\left(\alpha^{-2}-\beta^{-2}\right)^{-2}}.\end{equation}

Fix $\epsilon_1<\frac{\epsilon}{2b\beta^4}$ and $\epsilon_2<\frac{\epsilon}{2|a|\beta^4}$, then
\begin{align*}
\mathbb{P}\left(\left|\bar{y}_{\omega_N(x)}-y_{\infty}(x)\right|>\epsilon\right)&\leq\mathbb{P}\left(\frac{\left|\frac{1}{N}\sum _{i=1}^N a_i-a\right|b+|a|\left|b-\frac{1}{N}\sum _{i=1}^N b_i\right|}{b\frac{1}{N}\sum _{i=1}^N b_i}>\epsilon \right)\\
%&\leq \mathbb{P}\left(\left|\frac{1}{N}\sum _{i=1}^N a_i-a\right|>\epsilon_1\right)+ \mathbb{P}\left(\left|\frac{1}{N}\sum _{i=1}^N b_i-b\right|>\epsilon_2\right)\\
%&+ \mathbb{P}\left(\frac{\left|\frac{1}{N}\sum _{i=1}^N a_i-a\right|b+|a|\left|b-\frac{1}{N}\sum _{i=1}^N b_i\right|}{b\frac{1}{N}\sum _{i=1}^N b_i}>\epsilon\Bigg| \left|\frac{1}{N}\sum _{i=1}^N a_i-a\right|<\epsilon_1,\left|\frac{1}{N}\sum _{i=1}^N b_i-b\right|<\epsilon_2\right)
%\\
%
&\leq q_1^N+q_2^N+\mathbf{1}_{\left\{\beta^4({\epsilon_1 b+|a|\epsilon_2})>\epsilon \right\}}\\
&=q_1^N+q_2^N
\end{align*}
where the last step follows by the way $\epsilon_1$ and $\epsilon_2$ have been chosen. 

There is still a point to be understood: in our derivation $q_1$ and $q_2$ depend on the choice of $x$ through $a$ and $b$. However, it is immediate to check that $a$ and $b$ are both bounded in $x$. This allows to conclude.
\end{proof}

From \eqref{y inf} is immediate to see that $\gamma_{\infty}$ is a bounded function of class $C^1$ and
it has an important property which will be useful later on.
\begin{lemma}\label{lemma1bis} There exists a constant $C>0$ such that
\begin{align*} &x-\gamma_{\infty}(x)\geq C(x-\theta^*)\;\quad {\rm if}\, x\in(\theta^{\star},+\infty) \\ &\gamma_{\infty}(x)-x\geq C(\theta^*-x)\; \quad{\rm if}\, x\in(-\infty,\theta^{\star})\\
&\gamma_{\infty}(\theta^*)=\theta^*
\end{align*}
\end{lemma}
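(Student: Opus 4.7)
The plan is to find a clean closed-form expression for $\gamma_\infty(x)$, then exploit the fact that the mixture density $f$ is symmetric about $\theta^*$ and unimodal. Using $\widehat{\omega}(x)_1^{-2} = \beta^{-2} + (\alpha^{-2}-\beta^{-2})\mathbf{1}_{\{|y_1-x|<\delta\}}$ together with $\mathbb{E}(y_1)=\theta^*$, I would rewrite
\[
\gamma_\infty(x) = \frac{\beta^{-2}\theta^* + (\alpha^{-2}-\beta^{-2})h(x)}{\beta^{-2} + (\alpha^{-2}-\beta^{-2})g(x)},
\]
where $g(x) := \mathbb{P}(|y_1-x|<\delta)$ and $h(x) := \mathbb{E}(y_1\mathbf{1}_{\{|y_1-x|<\delta\}})$. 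Letting $D(x)$ denote the denominator and $\phi(x) := xg(x)-h(x) = \int_{x-\delta}^{x+\delta}(x-y)f(y)\,dy$, subtracting $x$ gives the identity
\[
x - \gamma_\infty(x) = \frac{\beta^{-2}(x-\theta^*) + (\alpha^{-2}-\beta^{-2})\phi(x)}{D(x)},
\]
and note that $D(x)\in[\beta^{-2},\alpha^{-2}]$ uniformly in $x$, and $\alpha^{-2}-\beta^{-2}>0$.

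The crucial step is to show that $\phi(x)$ has the same sign as $x-\theta^*$ (and vanishes at $\theta^*$). Changing variables $u=y-x$ gives
\[
\phi(x) = \int_0^\delta u\bigl[f(x-u)-f(x+u)\bigr]\,du.
\]
Because $f$ is a convex combination of two Gaussians both centered at $\theta^*$, it is symmetric about $\theta^*$ and strictly decreasing in $|y-\theta^*|$. For $x>\theta^*$ and $u>0$ one checks in both sub-cases ($x-u\ge\theta^*$ and $x-u<\theta^*$) that $|x+u-\theta^*|>|x-u-\theta^*|$, hence $f(x-u)>f(x+u)$, so $\phi(x)>0$; the reversed inequality holds for $x<\theta^*$, and $\phi(\theta^*)=0$ by the symmetry $f(\theta^*-u)=f(\theta^*+u)$.

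Combining these, both numerator summands in the identity for $x-\gamma_\infty(x)$ are nonnegative when $x>\theta^*$ and nonpositive when $x<\theta^*$. Dropping the (nonnegative) $\phi$-term and using $1/D(x)\ge \alpha^2$ yields
\[
x-\gamma_\infty(x) \ge \tfrac{\alpha^2}{\beta^2}(x-\theta^*) \quad\text{for }x>\theta^*,
\]
and analogously $\gamma_\infty(x)-x\ge \tfrac{\alpha^2}{\beta^2}(\theta^*-x)$ for $x<\theta^*$, so the lemma holds with $C=(\alpha/\beta)^2$; the case $x=\theta^*$ follows immediately from $\phi(\theta^*)=0$. The only non-routine step is the sign analysis of $\phi$, which reduces entirely to the symmetry/monotonicity of the mixture density; no derivative estimates or compactness arguments are needed.
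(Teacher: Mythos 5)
Your proof is correct and follows essentially the same route as the paper's: the same decomposition of $x-\gamma_\infty(x)$ into a $\beta^{-2}(x-\theta^*)$ term plus a term proportional to $\int_{x-\delta}^{x+\delta}(x-t)f(t)\,dt$, the same bound $D(x)\le\alpha^{-2}$ on the denominator, and the same constant $C=\alpha^2/\beta^2$. The only difference is that you explicitly justify the sign of $\phi(x)$ via the symmetry and unimodality of the mixture density, a step the paper's proof asserts without argument, so your write-up is if anything slightly more complete.
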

\begin{proof}

If  $x\in(\theta^{\star},+\infty)$ and $f$ is the density of each $y_i$ (a mixture of  two Gaussians) then
\begin{align*}x-y_{\infty}(x)&=\frac{\frac{1}{\alpha^2}\int_{x-\delta}^{x+\delta}{(x-t)f(t)}\mathrm{d}t+ \frac{1}{\beta^2} \int_{\mathbb{R}\setminus(x-\delta,x+\delta)}{(x-t)f(t)}\mathrm{d}t}{\frac{1}{\alpha^2}\int_{x-\delta}^{x+\delta}{f(t)}\mathrm{d}t+ \frac{1}{\beta^2}\int_{\mathbb{R}\setminus(x-\delta,x+\delta)}{f(t)}\mathrm{d}t}\\
&\geq \frac{\frac{1}{\beta^2} \int_{\mathbb{R}}{(x-t)f(t)}\mathrm{d}t}{\frac{1}{\alpha^2}\int_{x-\delta}^{x+\delta}{f(t)}\mathrm{d}t+ \frac{1}{\beta^2}\int_{\mathbb{R}\setminus(x-\delta,x+\delta)}{f(t)}\mathrm{d}t}
\end{align*}
where the last inequality follows from the fact that $\int_{x-\delta}^{x+\delta}{(x-t)f(t)}\mathrm{d}t \geq 0$. We conclude that
\begin{align*}x-y_{\infty}(x)&\geq \frac{\frac{1}{\beta^2} (x-\theta^{\star})}{\frac{1}{\alpha^2}\int_{x-\delta}^{x+\delta}{f(t)}\mathrm{d}t+ \frac{1}{\beta^2}\int_{\mathbb{R}\setminus(x-\delta,x+\delta)}{f(t)}\mathrm{d}t}>0.
\end{align*}
Second statement if $x\in(-\infty,\theta^{\star})$ can be verified in a completely analogous way. The third statement then simply follows by continuity.
\end{proof}

We now come to a key result.
%%%
\begin{lemma}\label{lemma2}
For any fixed $\epsilon >0$, there exist $\tilde q\in(0,1)$ and $\chi>0$ such that
\begin{equation}
\mathbb{P}\left(\gamma_N(x)\in\Theta_{\widehat \omega(x)}\right)\leq \chi\tilde q^N
\end{equation}
for all $x$ such that $|x-\theta^{\star}|>\epsilon$.
\end{lemma}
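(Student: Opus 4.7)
The plan is to combine two ingredients: the exponential concentration of $\gamma_N(x)$ around $\gamma_\infty(x)$ provided by Lemma \ref{lemma1}, and the strict separation of $x$ from $\gamma_\infty(x)$ far from $\theta^\star$ provided by Lemma \ref{lemma1bis}. Fix $x$ with $|x-\theta^\star|>\epsilon$, write $\zeta:=\gamma_\infty(x)$, and note that $|x-\zeta|\geq C|x-\theta^\star|\geq C\epsilon$. The event $\gamma_N(x)\in\Theta_{\widehat\omega(x)}$ forces every observation $y_i$ to be classified identically by $x$ and by $\gamma_N(x)$, so it suffices to exhibit, with high probability, at least one $y_i$ for which the two classifications must disagree.

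I would engineer a deterministic set $A=A(x)\subset\mathbb{R}$ of such ``witness'' values. Fix a slack $\eta$ with $0<\eta<\min(C\epsilon,\delta)/4$ and set
\[
A\;:=\;(x-\delta+\eta,\,x+\delta-\eta)\setminus(\zeta-\delta-\eta,\,\zeta+\delta+\eta).
\]
If $y\in A$ then $|y-x|<\delta$ (so $y$ is classified $\alpha$ by $x$), while $|y-\zeta|>\delta+\eta$; on the event $\{|\gamma_N(x)-\zeta|\leq\eta\}$, the triangle inequality gives $|y-\gamma_N(x)|>\delta$, so $y$ is classified $\beta$ by $\gamma_N(x)$. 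Hence, whenever $\gamma_N(x)\in\Theta_{\widehat\omega(x)}$ and $|\gamma_N(x)-\zeta|\leq\eta$, no $y_i$ lies in $A$. A union bound then produces
\[
\mathbb{P}\!\bigl(\gamma_N(x)\in\Theta_{\widehat\omega(x)}\bigr)\;\leq\;\mathbb{P}\!\bigl(|\gamma_N(x)-\zeta|>\eta\bigr)\;+\;\bigl(1-\mathbb{P}(y_1\in A)\bigr)^N,
\]
and Lemma \ref{lemma1} already bounds the first summand by $2q^N$ with $q<1$ uniform in $x$.

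The main obstacle, and the technical heart of the proof, is to lower-bound $p_A:=\mathbb{P}(y_1\in A)$ uniformly over all admissible $x$, because $A$ drifts into the Gaussian tails of $f$ as $|x|$ grows. I would split into two regimes. On the compact range $\epsilon<|x-\theta^\star|\leq R$, both $x$ and $\zeta$ stay bounded, the Lebesgue measure of $A$ is at least $\min(C\epsilon-2\eta,\,2\delta-2\eta)>0$, and since the mixture density $f$ is strictly positive and continuous it is bounded below on compacts, giving $p_A\geq p_{\min}(R)>0$. For $|x-\theta^\star|>R$, I would first deduce from the explicit expression \eqref{y inf} that $\gamma_\infty(x)\to\theta^\star$ as $|x|\to\infty$ (as $|x|$ grows, $\mathbb{P}(|y_1-x|<\delta)$ and $\mathbb{E}[y_1\mathbbm{1}_{|y_1-x|<\delta}]$ vanish, leaving $\gamma_\infty(x)\to\mathbb{E}[y_1]=\theta^\star$), so $R$ can be chosen so that $|\zeta-\theta^\star|<\delta/4$ throughout this regime. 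Then one may replace $A$ by the fixed set $A':=(\theta^\star-\delta/4,\theta^\star+\delta/4)$: any $y\in A'$ satisfies $|x-y|>R-\delta/4>\delta$ (classified $\beta$ by $x$), whereas on $\{|\gamma_N(x)-\zeta|\leq\eta\}$ it satisfies $|\gamma_N(x)-y|<\delta$ (classified $\alpha$), and $\mathbb{P}(y_1\in A')$ is a strictly positive constant independent of $x$. Merging the two regimes yields a single pair of constants $(\tilde q,\chi)$ for which the stated inequality holds uniformly.
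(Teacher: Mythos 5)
Your argument is correct and follows essentially the same route as the paper: split on the concentration event from Lemma \ref{lemma1}, use Lemma \ref{lemma1bis} to push $\gamma_N(x)$ at least $C\epsilon-\epsilon'$ away from $x$, and then bound the probability that no observation lands in a ``witness'' interval on which the classifications induced by $x$ and by $\gamma_N(x)$ must disagree, via $(1-\mathbb{P}(y_1\in A))^N$. The one point of divergence is that the paper anchors its witness interval at $\gamma_N(x)$, namely $(\gamma_N(x)-\delta,\,\gamma_N(x)-\delta+\min\{C\epsilon-\epsilon',\delta\})$, which automatically stays inside a fixed compact set (since $\gamma_\infty$ is bounded and $|\gamma_N(x)-\gamma_\infty(x)|\leq\epsilon'$) so the uniform lower bound on $f$ comes for free, whereas your set $A$ is anchored at $x$ and drifts into the Gaussian tails, forcing the extra two-regime analysis for large $|x-\theta^\star|$ --- which you correctly identify as the main obstacle and resolve via $\gamma_\infty(x)\to\theta^\star$, at the price of some additional work the paper's choice avoids.
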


\begin{proof}
We assume $x>\theta^{\star}+\epsilon$ (the other case $x<\theta^{\star}-\epsilon$ being completely equivalent).
Fix $\epsilon'\in (0, C\epsilon)$ where $C$ was defined in Lemma \ref{lemma1bis} and estimate as follows
\begin{equation}\label{estimbelong}\begin{split}\mathbb{P}\left(\gamma_N(x)\in\Theta_{\widehat \omega(x)}\right)&\leq
\mathbb{P}\left(\gamma_N(x)\in\Theta_{\widehat \omega(x)}\,,\; |\gamma_N(x)-\gamma_{\infty}(x)|\leq \epsilon'\right)\\&+
\mathbb{P}\left( |\gamma_N(x)-\gamma_{\infty}(x)|>\epsilon '\right).
%\\ \mathbb{P}\left(\gamma_N(x)\mathbbm{1}\in\Theta_{\widehat T_N(x)}\right){\mathbbm 1}_{ (|\gamma_N(x)-\gamma_{\infty}(x)|\leq \epsilon')}+
%\mathbb{P}\left( |\gamma_N(x)-\gamma_{\infty}(x)|>\epsilon '\right)
\end{split}\end{equation}
Using Lemma \ref{lemma1bis} we get
\begin{equation*}\left\{|\gamma_N(x)-\gamma_{\infty}(x)|\leq \epsilon'\right\}\subseteq \left\{\gamma_N(x)\leq x-(C\epsilon-\epsilon')\right\}.\end{equation*} Thus
\begin{equation*}\begin{split}&\left\{\gamma_N(x)\in\Theta_{\widehat \omega(x)}, |\gamma_N(x)-\gamma_{\infty}(x)|\leq \epsilon'\right\}\\
&\qquad\subseteq
 \{\nexists i\,:\, y_i\in (\gamma_N(x)-\delta , \gamma_N(x)-\delta+\min\{C\epsilon-\epsilon', \delta\})\}\end{split}\end{equation*} and, consequently, the first term in \eqref{estimbelong} can be estimated as
 \begin{equation}\label{estimbelong2}
\begin{split}
\mathbb{P}&\left(\gamma_N(x)\in\Theta_{\widehat \omega_N(x)}\,,\; |\gamma_N(x)-\gamma_{\infty}(x)|\leq \epsilon'\right)\leq 
\left(1-\int_{\gamma_N(x)-\delta}^{\gamma_N(x)-\delta+\min\{C\epsilon-\epsilon', \delta\}}f(y)dy\right)^N
\end{split}\end{equation}
where $f(y)$ is the density of each $y_i$. Considering now that $f(y)$ is bounded away from $0$ on any bounded interval, that $|\gamma_N(x) -\gamma_{\infty}(x)|\leq \epsilon'$ and that $\gamma_{\infty}(x)$ is a bounded function, we deduce that the right hand side of (\ref{estimbelong2}) can be uniformly bounded as $\tilde q^N$ for some
$\tilde q\in (0,1)$. 
Substituting in (\ref{estimbelong}), and using Lemma \ref{lemma1} we finally obtain the thesis.
\end{proof}

\subsection{Proof of Theorem \ref{concentration2}}
 Define $$\mathcal{A}_N(\epsilon):=\left\{\exists \omega\in\{\alpha,\beta\}^{\mathcal{V}}: \widehat \theta(\omega) \in \Theta_{\omega}, |\widehat \theta(\omega) -\theta^{\star}|>\epsilon\right\}$$ for any $\epsilon>0$ and 
\begin{align*}
\mathcal{B}_1&:=\left\{\exists i\in\mathcal{V}: |y_i-\theta^{\star}|>N\right\}
\\
\mathcal{B}_2&:=\left\{\exists (i,j)\in\mathcal{V}\times{\mathcal{V}}: |y_i-y_j|<N^{-4}\right\}\\
\mathcal{B}_3&:=\{\exists (i,j)\in\mathcal{V}\times{\mathcal{V}}: |y_i-y_j|\in\left(2\delta,2\delta+{N^{-4}}\right\}\end{align*}
and estimate $\mathbb{P}\left(\mathcal{A}_N(\epsilon) \right)\leq \mathbb{P}\left(\mathcal{A}_N(\epsilon)  ,\mathcal{B}_1^c\cap\mathcal{B}_2^c \cap\mathcal{B}_3^c\right)+ \mathbb{P}(\mathcal{B}_1)+ \mathbb{P}(\mathcal{B}_2)+ \mathbb{P}(\mathcal{B}_3)$.
Standard considerations allow to upper bound the probability of each event $\mathcal{B}_i$ by a common term $K/N^2$.
We now focus on the estimation of the first term. 
%Notice first of all that
%$$\{\omega\in\{\alpha, \beta\}^{\mathcal V}\;|\; \Theta_{\omega}\neq \emptyset\}=\{\omega(x)\;|\; x\in\R\}$$
The crucial point is that,  the condition  $\mathcal{B}_1^c\cap\mathcal{B}_2^c \cap\mathcal{B}_3^c$ allow us to reinforce condition (\ref{not empty}) in the sense that all $\omega$ for which $\Theta_\omega\neq\emptyset$ can be obtained as $\omega=\widehat \omega(x)$ as  $x$ varies in a set whose cardinality is polynomial in $N$. 
Specifically, define
 $$Z=\{\zeta_j= \theta^{\star}-N-\delta+{j}{N^{-4}}: j\in\mathbb{N}, j\leq j_{\rm max}\}$$ where $j_{\rm max}:=\lceil N^4(2N+2\delta)\rceil$
 % \zeta_j\leq \theta^{\star}+N+\delta\}$.  
 and notice that, assuming that the $y_i$'s satisfy $\mathcal{B}_2^c \cap\mathcal{B}_3^c$, we have that $\widehat \omega(\zeta_j)$ and
$\widehat \omega(\zeta_{j+1})$ differ in at most one component and that $\widehat \omega(x)\in \{\widehat \omega(\zeta_j), \widehat \omega(\zeta_{j+1})\}$ for every $x\in
[\zeta_j, \zeta_{j+1}]$. Moreover, because of $\mathcal{B}_1^c$ we have that
 $\widehat \omega(x)_i=\widehat \omega(\zeta_0)_i=\beta$ for all $x\leq \theta^{\star}_N-\delta$ and for all $i$. Similarly, $\widehat \omega(x)_i=\widehat \omega(\zeta_{j_{\rm max}})_i=\beta$ for all $x\geq \theta^{\star}+N+\delta$ and for sll $i$. In other terms,
 under the assumption  that the $y_i$'s satisfy $\mathcal{B}_1^c\cap\mathcal{B}_2^c \cap\mathcal{B}_3^c$, it holds
 $\{\omega\in\{\alpha, \beta\}^{\mathcal V}\;|\; \Theta_{\omega}\neq\emptyset\}=\{\widehat \omega(x)\;|\; x\in Z\}$.
 Hence,
 \begin{align*}\mathbb{P}&\left(\mathcal{A}_N(\epsilon) , \mathcal{B}_1^c\cap\mathcal{B}_2^c \cap\mathcal{B}_3^c\right)\leq \\ &\leq \mathbb{P}\left(\bigcup_{\zeta\in Z}\left\{\gamma_N(\zeta)\in\Theta_{\widehat \omega(\zeta)} ,|\gamma_N(\zeta)-\theta^{\star}|>\epsilon\right\}\right)\\
 &\leq 
 \mathbb{P}\left(\bigcup_{\zeta\in Z}\left\{\gamma_N(\zeta)\in\Theta_{\widehat \omega(\zeta)} ,|\gamma_{\infty}(\zeta)-\theta^{\star}|>\epsilon/2\right\}\right)
 +\\&+\mathbb{P}\left(|\gamma_N(\zeta)-\gamma_{\infty}(\zeta)|\leq\epsilon/2\right).
 \end{align*}
 Notice that, because of the continuity of $\gamma_{\infty}$, there exists $\tilde\epsilon >0$ such that $|\gamma_{\infty}(\zeta)-\theta^{\star}|>\epsilon/2\;\Rightarrow\; |\zeta -\theta|>\tilde \epsilon$. We can then use Lemma \ref{lemma2}, 
 \begin{align*}&\mathbb{P}\left(\bigcup_{\zeta\in Z}\left\{\gamma_N(\zeta)\in\Theta_{\widehat \omega(\zeta)} ,|\gamma_{\infty}(\zeta)-\theta^{\star}|>\epsilon/2\right\}\right)\nonumber\\
 &\qquad\leq 
 |Z| \mathbb{P}\left(\gamma_N(\zeta)\in\Theta_{\widehat \omega(\zeta)} ,|\gamma_N(\zeta)-\theta^{\star}|>\epsilon\right) 
\leq cN^5\tilde q^N \end{align*}
where $c$ and $\tilde q$ are those coming from Lemma \ref{lemma2} relatively to $\tilde\epsilon$. Putting together all the estimations we have obtained and using Lemma \ref{lemma1}, we finally obtain that there exists $\chi>0$ such that
$\mathbb{P}\left(\mathcal{A}_N(\epsilon) \right)\leq \chi/N^2
$.
Using Borel-Cantelli Lemma and standard arguments, it follows now that the relation (\ref{local minima}) hold in an almost surely sense.

 It remains to be shown convergence in mean square sense. For this we need to go back to the form (\ref{derivative}) of the derivative of $L(\theta, \widehat \omega(\theta))$. The key observation is that the second additive term in the right hand side of (\ref{derivative}) 
can be bounded uniformly in modulus by some constant $C$. If we denote $\bar \gamma_N=N^{-1}\sum_iy_i$,
this implies that the function is increasing for $\theta>\bar \gamma_N+\beta^2C$ and decreasing for $\theta<\bar \gamma_N-\beta^2C$. Hence, necessarily,
\begin{equation}\label{bound}|\xi-\bar \gamma_N|\leq \beta^2C\;\ \forall\xi\in{\mathcal S}_N.\end{equation} 
On the other hand, by the law of large numbers, $\bar \gamma_N$ almost surely converges to $\theta^{\star}$ and this implies, by the previous part of the theorem that 
$\max\limits_{\xi\in{\mathcal S}_N}| \xi  -\bar \gamma_N|$ converges to $0$. This, together with \eqref{bound}, yields
$\E\max\limits_{\xi\in{\mathcal S}_N}| \xi  -\bar \gamma_N|^2\to 0$ for $N\to +\infty$. Since by the ergodic theorem also
$\E|\bar \gamma_N-\theta^{\star}|^2\to 0$ for $N\to +\infty$, the proof is complete.

\subsection{Proof of Proposition \ref{concentration4}}
We prove it for $\widehat \omega^{\mathrm{IA}}$, the other verification being completely equivalent). If $\sigma\in\{\alpha, \beta\}$, we define
\begin{equation*}\begin{split}
 f(\theta,\sigma)& =\P(\widehat \omega(\theta)_i\neq\sigma\;|\; \omega^{\star}_i=\sigma)\\ &=\begin{cases}\frac{1}{\sqrt{2\pi \sigma^2}}\int_{{\theta}-\delta}^{{\theta} +\delta}\mathrm{e}^{-\frac{(s-\theta^{\star})^2}{2\sigma^2}}\mathrm{d} s \qquad \ \ \text{if }\sigma=\beta
 \\
 1-\frac{1}{\sqrt{2\pi \sigma^2}}\int_{{\theta}-\delta}^{{\theta} +\delta}\mathrm{e}^{-\frac{(s-\theta^{\star})^2}{2\sigma^2}}\mathrm{d} s \quad \text{if }\sigma=\alpha
 \end{cases}\end{split}\end{equation*}
 (notice that $f$ does not depend on $i$). 
 We can compute
 \begin{equation*}\begin{split}\frac{1}{N}\E d_H(\widehat \omega^{\mathrm{IA}}, \omega^{\star})&=\frac{1}{N}\sum\limits_i\P(\widehat \omega^{\mathrm{IA}}_i\neq \omega^{\star}_i)\\&=p\E f(\widehat\theta^{\mathrm{IA}},\alpha)+(1-p)
 \E f(\widehat\theta^{\mathrm{IA}},\beta).\end{split}\end{equation*}
 Since $f(\theta,\sigma)$ is a $C^1$ function of $\theta$, we immediately obtain that
 $$|\E f(\widehat\theta^{\mathrm{IA}},\sigma)-\E f(\theta^{\star},\sigma)|\leq C\E|\widehat\theta^{\mathrm{IA}}-\theta^{\star}|$$
 and, by Corollary \ref{concentration3}, this last expression converges to $0$, for $N\to +\infty$. Hence,
  $$\frac{1}{N}\E d_H(\widehat \omega^{\mathrm{IA}}, \omega^{\star})=p\E f(\theta^{\star},\alpha)+(1-p)\E f(\theta^{\star},\beta).$$
 Straightforward computation now proves the thesis.
 \end{document}